\let\csname equation*\endcsname\relax
\let\csname endequation*\endcsname\relax
\newcommand{\bra}[1]{\mbox{$\langle #1 |$}}
\newcommand{\ket}[1]{\mbox{$| #1 \rangle$}}
\begin{document}

\review{Quantum Non-Markovianity: Characterization, Quantification and Detection}

\author{\'Angel Rivas$^{1}$, Susana F. Huelga$^{2,3}$ and Martin B. Plenio$^{2,3}$}
\address{$^1$Departamento de F\'isica Te\'orica I, Facultad de Ciencias F\'isicas, Universidad Complutense, 28040 Madrid, Spain.\\
$^2$Institut f\"ur Theoretische Physik, Universit\"at Ulm, Albert-Einstein-Allee 11, 89073 Ulm, Germany.\\
$^3$Center for Integrated Quantum Science and Technologies, Albert-Einstein-Allee 11, 89073 Ulm, Germany.}
\begin{abstract}
We present a comprehensive and up to date review on the concept of quantum non-Markovianity, a central theme in the theory of open quantum systems. We introduce the concept of quantum Markovian process as a generalization of the classical definition of Markovianity via the so-called divisibility property and relate this notion to the intuitive idea that links non-Markovianity with the persistence of memory effects. A detailed comparison with other definitions presented in the literature is provided. We then discuss several existing proposals to quantify the degree of non-Markovianity of quantum dynamics and to witness non-Markovian behavior, the latter providing sufficient conditions to detect deviations from strict Markovianity. Finally, we conclude by enumerating some timely open problems in the field and provide an outlook on possible research directions.
\end{abstract}

\maketitle

\newtheorem{theorem}{Theorem}[section]
\newtheorem{prop}{Proposition}[section]
\newtheorem{cor}{Corollary}[section]
\theoremstyle{definition}
\newtheorem{defi}{Definition}[section]
\newtheorem{ex}{Example}[section]

\tableofcontents
\pagestyle{plain}

\section{Introduction}
In recent years, renewed attention has been paid to the characterization of quantum non-Markovian processes. Different approaches have been followed and several methods have been proposed which in some cases yield inequivalent conclusions. Given the considerable amount of literature that has built up on the subject, we believe that the time is right to summarize  most of the existing results in a review article that clarifies both the underlining structure and the interconnections between the different approaches.

On the one hand, we are fully aware of the risk we take by writing a review on a quite active research field, with new results continuously arising during the writing of this work. We do hope, on the other hand, that possible shortcomings will be well balanced by the potential usefulness of such a review in order to, hopefully, clarify some misconceptions and generate further interest in the field.

Essentially, the subject of quantum non-Markovianity addresses two main questions, namely:
\begin{enumerate}[I.]
\item What is a quantum Markovian process and hence what are non-Markovian processes? (\emph{characterization problem}).
\item If a given process deviates from Markovianity, by how much does it deviate? (\emph{quantification problem}).
\end{enumerate}

In this work we examine both questions in detail. More specifically, concerning the characterization problem, we adopt the so-called called \emph{divisibility property} as a definition of \emph{quantum Markovian processes}. As this is not the only approach to non-Markovianity, in Section \ref{sec:comparison} we introduce and discuss other proposed definitions and compare them to the divisibility approach. In this regard, we  would like to stress that it is neither our intention nor is the field at a stage that allows us to decide on a definitive definition of quantum Markovian processes. It is our hope however that we will convince the reader that the strong analogy between the definition for non-Markovianity taken in this work and the classical definition of Markov processes, and the ensuing good mathematical properties which will allow us to address the characterization problem in simple terms, represents a fruitful approach to the topic. Concerning the quantification problem, we discuss most of the quantifiers present in the literature, and we classify them into measures and witnesses of non-Markovianity, depending on whether they are able to detect every non-Markovian dynamics or just a subset. Given the large body of literature that explores the application of these methods to different physical realizations, we have opted for keeping the presentation mainly on the abstract level and providing a detailed list of references. However, we have also included some specific examples for the sake of illustration of fundamental concepts.

This work is organized as follows. In Section 2, we recall the classical concept of Markovian process and some of its main properties. This is crucial in order to understand why the divisibility property provides a good definition of quantum Markovianity. In Section 3 we introduce the concept of quantum Markovian process by establishing a step by step parallelism with the classical definition, and explain in detail why these quantum processes can be considered as memoryless. Section 4 gives a detailed review of different measures of non-Markovianity and Section 4 describes different approaches in order to construct witnesses able to detect non-Markovian dynamics. Finally, Section 6 is devoted to conclusions and to outline some of the problems which remain open in this field and possible future research lines.

\section{Markovianity in Classical Stochastic Processes}\label{sec-ClassicalMarkov}

In order to give a definition of a Markov process in the quantum regime, it is essential to understand the concept of Markov process in the classical setting. Thus, this section is devoted to revise the definition of classical Markov processes and sketch the most interesting properties for our purposes without getting too concerned with mathematical rigor. More detailed explanations on the foundations of stochastic processes can be found in \cite{Doob90,Gardiner97,Norris97,Parzen99,Ethier05,vanKampen07}.

\subsection{Definition and properties}
Consider a random variable $X$ defined on a classical probability space $(\Omega,\Sigma,\mathbb{P})$, where $\Omega$ is a given set (the sample space), $\Sigma$ (the possible events) is a $\sigma-$algebra of subsets of $\Omega$, containing $\Omega$ itself, and the probability $\mathbb{P}:\Sigma\rightarrow [0,1]$ is a $\sigma-$additive function with the property that $\mathbb{P}(\Omega)=1$, (cf. \cite{Doob90,Gardiner97,Norris97,Parzen99,Ethier05,vanKampen07}). In order to avoid further problems when considering conditional probabilities (see for example the Borel-Kolmogorov paradox \cite{Kol}) we shall restrict attention from now on to discrete random variables, i.e. random variables which take values on a finite set denoted by $\mathcal{X}$.

A \emph{classical stochastic process} is a family of random variables $\{X(t),t\in I\subset\mathds{R}\}$. Roughly speaking, this is nothing but a random variable $X$ depending on a parameter $t$ which usually represents time. The story starts with the following definition.

\begin{defi}[Markov process]
A stochastic process $\{X(t),t\in I\}$ is a \emph{Markov process} if the probability that the random variable $X$ takes a value $x_n$ at any arbitrary time $t_n\in I$, provided that it took the value $x_{n-1}$ at some previous time $t_{n-1}<t_{n}\in{I}$, is uniquely determined, and not affected by the possible values of $X$ at previous times to $t_{n-1}$. This is formulated in terms of conditional probabilities as follows
\begin{equation}\label{Markov}
 \mathbb{P}(x_n,t_n|x_{n-1},t_{n-1};\ldots;x_0,t_0)=\mathbb{P}(x_n,t_n|x_{n-1},t_{n-1}),\quad \mathrm{for\ all\ }\{t_n\geq t_{n-1}\geq\ldots\geq t_0\}\subset I,
\end{equation}
\end{defi}
\noindent and informally it is summarized by the statement that ``a Markov process does not have memory of the history of past values of $X$''. This kind of stochastic processes are named after the Russian mathematician A. Markov \cite{footnote1}.

From the previous definition \eqref{Markov} it is possible to work out further properties of Markov processes. For instance, it follows immediately from \eqref{Markov} that for a Markov process
\begin{equation}\label{ConditionalMarkov}
\mathbb{E}(x_n,t_n|x_{n-1},t_{n-1};\ldots;x_0,t_0)=\mathbb{E}(x_n,t_n|x_{n-1},t_{n-1}),\quad \mathrm{for\ all\ }\{t_n\geq t_{n-1}\geq\ldots\geq t_0\}\subset I,
\end{equation}
where $\mathbb{E}(x|y)=\sum_{x\in\mathcal{X}}x\mathbb{P}(x|y)$ denotes the so-called \emph{conditional expectation}.

In addition, Markov processes satisfy another remarkable property. If we take the joint probability for any three consecutive times $t_3>t_2>t_1$ and apply the definition of conditional probability twice we obtain
\begin{eqnarray}\label{Chap-Kolmo0}
\mathbb{P}(x_3,t_3;x_2,t_2;x_1,t_1)&=&\mathbb{P}(x_3,t_3|x_2,t_2;x_1,t_1)\mathbb{P}(x_2,t_2;x_1;t_1)\nonumber\\
&=&\mathbb{P}(x_3,t_3|x_2,t_2;x_1,t_1)\mathbb{P}(x_2,t_2|x_1,t_1)\mathbb{P}(x_1,t_1).
\end{eqnarray}
Since the Markov condition \eqref{Markov} implies that $\mathbb{P}(x_3,t_3|x_2,t_2;x_1,t_1)=\mathbb{P}(x_3,t_3|x_2,t_2)$, by taking the sum over $x_2$ and dividing both sides by $\mathbb{P}(x_1,t_1)$ we arrive at
\begin{equation}\label{Chap-Kolmo}
\mathbb{P}(x_3,t_3|x_1,t_1)=\sum_{x_2\in\mathcal{X}} \mathbb{P}(x_3,t_3|x_2,t_2)\mathbb{P}(x_2,t_2|x_1,t_1),
\end{equation}
which is called the \emph{Chapman-Kolmogorov equation}. Moreover, the next theorem gives an answer to the converse statement.

\begin{theorem}\label{Theo-CK-Markov}
A family of conditional probabilities $\mathbb{P}(x_n,t_n|x_{n-1},t_{n-1})$ with $t_{n}>t_{n-1}$ satisfying \eqref{Chap-Kolmo} can always be seen as the conditional probabilities of a Markov process $\{X(t),t\in I\}$.
\end{theorem}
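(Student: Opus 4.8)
The plan is to reverse the computation \eqref{Chap-Kolmo0}--\eqref{Chap-Kolmo}: instead of deriving Chapman--Kolmogorov from a given process, we build a process whose finite-dimensional distributions are \emph{defined} by the product of the prescribed transition probabilities. First I would fix a reference time $t_0\in I$ and an arbitrary initial distribution $p_0$ on $\mathcal{X}$ (the uniform one will do, since the statement only asks for the existence of \emph{some} Markov process with the given transitions). Then, for any ordered finite set of times $t_0<t_1<\cdots<t_n$ in $I$, I would declare
\begin{equation*}
\mathbb{P}(x_n,t_n;\ldots;x_1,t_1;x_0,t_0):=\mathbb{P}(x_n,t_n|x_{n-1},t_{n-1})\cdots\mathbb{P}(x_1,t_1|x_0,t_0)\,p_0(x_0),
\end{equation*}
extending to unordered tuples by the obvious relabelling. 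These numbers are manifestly non-negative and, using $\sum_{x_n}\mathbb{P}(x_n,t_n|x_{n-1},t_{n-1})=1$ together with $\sum_{x_0}p_0(x_0)=1$, they are normalised.

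The core of the argument is to check the Kolmogorov compatibility (consistency) conditions for this family, so that the Kolmogorov extension theorem produces a genuine stochastic process $\{X(t),t\in I\}$ on a suitable probability space whose finite-dimensional distributions are exactly the above. Consistency under permutations of the labels is built in; the nontrivial requirement is that marginalising over an intermediate time $t_k$ (with $t_{k-1}<t_k<t_{k+1}$) returns the distribution on the remaining times. This is precisely where the Chapman--Kolmogorov equation \eqref{Chap-Kolmo} enters: in the product above only the two factors $\mathbb{P}(x_{k+1},t_{k+1}|x_k,t_k)\,\mathbb{P}(x_k,t_k|x_{k-1},t_{k-1})$ involve $x_k$, and \eqref{Chap-Kolmo} collapses their sum over $x_k$ into $\mathbb{P}(x_{k+1},t_{k+1}|x_{k-1},t_{k-1})$, which is the product formula for the reduced time set. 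Marginalising over the \emph{last} time $t_n$ is immediate from normalisation of the transition probabilities, and marginalising over $t_0$ simply updates the initial distribution; the general case of deleting several times follows by iterating these one-step reductions.

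Finally I would read off the Markov property directly from the shape of these joint distributions: forming $\mathbb{P}(x_n,t_n;\ldots;x_0,t_0)/\mathbb{P}(x_{n-1},t_{n-1};\ldots;x_0,t_0)$ telescopes the product and leaves only $\mathbb{P}(x_n,t_n|x_{n-1},t_{n-1})$, which is exactly \eqref{Markov}; the same cancellation shows that the one-step conditional probabilities of the constructed process coincide with the prescribed family, as required.

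I expect the main obstacle to be technical rather than conceptual: carefully invoking the Kolmogorov extension theorem when the index set $I$ is an uncountable subset of $\mathbb{R}$ — legitimate here because $\mathcal{X}$ is finite, so no topological pathologies arise — and doing the consistency bookkeeping for \emph{all} ways of deleting times, not just a single intermediate one. A minor point worth flagging is that the prescribed object must already be a genuine family of conditional probabilities, i.e. each $\mathbb{P}(\,\cdot\,,t_n|x_{n-1},t_{n-1})$ is a probability distribution on $\mathcal{X}$; this, together with \eqref{Chap-Kolmo}, is all that the construction uses.
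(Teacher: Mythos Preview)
Your proposal is correct and follows essentially the same constructive route as the paper: both define the finite-dimensional joint distributions as products of the given transition probabilities against an initial distribution, and both invoke the Chapman--Kolmogorov equation to ensure consistency of the construction. Your version is considerably more detailed---you explicitly verify the Kolmogorov consistency conditions and invoke the extension theorem, whereas the paper simply asserts that the construction ``is always possible as it is compatible with \eqref{Chap-Kolmo}''---but the underlying idea is identical.
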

\begin{proof}
The proof is by construction. Take some probabilities $\mathbb{P}(x_n,t_n)$ and define the two-point joint probabilities by
\[
\mathbb{P}(x_n,t_n;x_{n-1},t_{n-1}):=\mathbb{P}(x_n,t_n|x_{n-1},t_{n-1})\mathbb{P}(x_{n-1},t_{n-1}).
\]
Then, set
\begin{equation}
\mathbb{P}(x_n,t_n|x_{n-1},t_{n-1};\ldots;x_0,t_0):=\mathbb{P}(x_n,t_n|x_{n-1},t_{n-1}),\quad \mathrm{for\ all\ \ }\{t_n\geq t_{n-1}\geq\ldots\geq t_0\}\subset I.
\end{equation}
and construct higher joint probabilities by using expressions analogous to Eq. \eqref{Chap-Kolmo0}. This construction is always possible as it is compatible with \eqref{Chap-Kolmo}, which is the presupposed condition satisfied by $\mathbb{P}(x_n,t_n|x_{n-1},t_{n-1})$.
\end{proof}

\subsection{Transition matrices}

In this section we shall focus on the evolution of one-point probabilities $\mathbb{P}(x,t)$ during a stochastic process. Thus, consider a linear map $T$ that connects the probability of a random variable $X$, at different times $t_0$ and $t_1$:
\begin{equation}
\mathbb{P}(x_1,t_1)=\sum_{x_0\in\mathcal{X}}T(x_1,t_1|x_0,t_0)\mathbb{P}(x_0,t_0).
\end{equation}
Since $\sum_{x_1\in\mathcal{X}}\mathbb{P}(x_1,t_1)=1$ and $\mathbb{P}(x_1,t_1)\geq0$ for every $\mathbb{P}(x_0,t_0)$, we conclude that
\begin{align}
&\sum_{x_1\in\mathcal{X}}T(x_1,t_1|x_0,t_0)=1,\\
&T(x_1,t_1|x_0,t_0)\geq0, \quad x_1,x_0\in\mathcal{X}.
\end{align}
Matrices $T$ fulfilling these properties are called \emph{stochastic matrices}.

Consider $t=t_0$ to be the initial time of some (not necessarily Markovian) stochastic process $\{X(t),t\in I\}$. From the definition of conditional probability,
\begin{equation}
\mathbb{P}(x_2,t_2;x_0,t_0)=\mathbb{P}(x_2,t_2|x_0,t_0)\mathbb{P}(x_0,t_0)\Rightarrow\mathbb{P}(x_2,t_2)=\sum_{x_0\in\mathcal{X}}\mathbb{P}(x_2,t_2|x_0,t_0)\mathbb{P}(x_0,t_0),
\end{equation}
and therefore $T(x_2,t_2|x_0,t_0)=\mathbb{P}(x_2,t_2|x_0,t_0)$ for every $t_2$. This relation is not valid in general for $t_1>t_0$, $T(x_2,t_2|x_1,t_1)\neq\mathbb{P}(x_2,t_2|x_1,t_1)$. The reason is that $\mathbb{P}(x_2,t_2|x_1,t_1)$ is not fully defined for a general stochastic process; we need to know the value of $X$ for previous time instants as $\mathbb{P}(x_2,t_2|x_1,t_1;x_0,t_0)$ could be different from $\mathbb{P}(x_2,t_2|x_1,t_1;x'_0,t_0)$ for $x_0\neq x'_0$. However that is not the case for Markov processes which satisfy the following result.

\begin{theorem}
Consider a Markov process $\{X(t),t\in I\}$. Given any two time instants $t_1$ and $t_2$ we have
\begin{equation}
T(x_2,t_2|x_1,t_1)=\mathbb{P}(x_2,t_2|x_1,t_1).
\end{equation}
\end{theorem}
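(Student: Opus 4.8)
The plan is to show that for a Markov process the one-step transition matrix $T(x_2,t_2|x_1,t_1)$ coincides with the conditional probability $\mathbb{P}(x_2,t_2|x_1,t_1)$ for \emph{arbitrary} $t_1 < t_2$, not merely when $t_1$ is the initial time $t_0$. The key observation is that the Markov condition \eqref{Markov} makes $\mathbb{P}(x_2,t_2|x_1,t_1)$ genuinely well defined, independent of the values of $X$ at times earlier than $t_1$: the ambiguity flagged in the preceding paragraph (that $\mathbb{P}(x_2,t_2|x_1,t_1;x_0,t_0)$ might depend on $x_0$) simply does not arise. So the task reduces to verifying that this well-defined conditional probability is precisely the linear map propagating one-point probabilities from time $t_1$ to time $t_2$.

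First I would fix $t_0 \le t_1 < t_2$ with $t_0$ the initial time, and write the marginal $\mathbb{P}(x_2,t_2)$ by summing the joint probability over intermediate and initial values. Using the chain rule analogous to \eqref{Chap-Kolmo0},
\begin{equation}
\mathbb{P}(x_2,t_2) = \sum_{x_1,x_0\in\mathcal{X}} \mathbb{P}(x_2,t_2|x_1,t_1;x_0,t_0)\,\mathbb{P}(x_1,t_1|x_0,t_0)\,\mathbb{P}(x_0,t_0).
\end{equation}
Next I would invoke the Markov property \eqref{Markov} to replace $\mathbb{P}(x_2,t_2|x_1,t_1;x_0,t_0)$ by $\mathbb{P}(x_2,t_2|x_1,t_1)$, which no longer depends on $x_0$ and can be pulled out of the sum over $x_0$. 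The remaining sum $\sum_{x_0} \mathbb{P}(x_1,t_1|x_0,t_0)\mathbb{P}(x_0,t_0)$ is just $\mathbb{P}(x_1,t_1)$, so
\begin{equation}
\mathbb{P}(x_2,t_2) = \sum_{x_1\in\mathcal{X}} \mathbb{P}(x_2,t_2|x_1,t_1)\,\mathbb{P}(x_1,t_1).
\end{equation}
Comparing with the defining relation $\mathbb{P}(x_2,t_2) = \sum_{x_1} T(x_2,t_2|x_1,t_1)\mathbb{P}(x_1,t_1)$ then gives the claim, once one argues that the identity holds for \emph{every} admissible one-point distribution $\mathbb{P}(\cdot,t_1)$ and hence the coefficients must agree entrywise.

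The main subtlety — the step I would be most careful about — is precisely this last point: reading off equality of the kernels from equality of the images. For a fixed process we only have one distribution $\mathbb{P}(\cdot,t_1)$, so strictly speaking $T$ is defined as the linear map that acts correctly on all possible initial data, and one must note that by varying $\mathbb{P}(\cdot,t_0)$ (equivalently, by the construction of Theorem~\ref{Theo-CK-Markov}) the induced $\mathbb{P}(\cdot,t_1)$ ranges over enough distributions — in particular one can take it concentrated on a single point $x_1$ — to force $T(x_2,t_2|x_1,t_1) = \mathbb{P}(x_2,t_2|x_1,t_1)$ for each pair $x_1,x_2$. A clean way to phrase this is to observe that for a Markov process the family $\{\mathbb{P}(x_2,t_2|x_1,t_1)\}$ is itself a stochastic matrix satisfying Chapman--Kolmogorov, so it legitimately plays the role of $T$; uniqueness of the propagator on the relevant set of states then closes the argument.
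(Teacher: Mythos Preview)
Your proposal is correct and rests on the same idea as the paper's one-line proof: the Markov property makes $\mathbb{P}(x_2,t_2|x_1,t_1)$ well defined independently of earlier history, so the factorization $\mathbb{P}(x_2,t_2;x_1,t_1)=\mathbb{P}(x_2,t_2|x_1,t_1)\mathbb{P}(x_1,t_1)$ holds and summing over $x_1$ exhibits $\mathbb{P}(x_2,t_2|x_1,t_1)$ as the transition kernel. Your detour through a three-time expansion via $t_0$ is unnecessary---the paper works directly with the two-time joint---but your explicit discussion of the kernel-uniqueness subtlety (reading off equality of matrices from equality on a single distribution) is a point the paper simply leaves implicit.
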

\begin{proof} It follows from the fact that we can write
$\mathbb{P}(x_2,t_2;x_1,t_1)=\mathbb{P}(x_2,t_2|x_1,t_1)\mathbb{P}(x_1,t_1)$, as $\mathbb{P}(x_2,t_2|x_1,t_1)$ is well defined for any $t_1$ and $t_2$.
\end{proof}

\noindent From this theorem and the Chapman-Kolmogorov equation \eqref{Chap-Kolmo} we obtain the following corollary.
\begin{cor} Consider a Markov process $\{X(t),t\in I\}$, then for any $t_3\geq t_2\geq t_1\geq t_0$, the transition matrix satisfies the properties
\begin{align}
&\sum_{x_2\in\mathcal{X}}T(x_2,t_2|x_1,t_1)=1,\label{divisible1}\\
&T(x_2,t_2|x_1,t_1)\geq0, \label{divisible2}\\
&T(x_3,t_3|x_1,t_1)=\sum_{x_2\in\mathcal{X}}T(x_3,t_3|x_2,t_2)T(x_2,t_2|x_1,t_1)\label{divisible3}.
\end{align}
\end{cor}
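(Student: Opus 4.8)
The plan is to reduce all three claims to the theorem immediately preceding this corollary together with the Chapman--Kolmogorov equation \eqref{Chap-Kolmo}. The key observation is that for a Markov process the preceding theorem identifies the transition matrix with the two-point conditional probability, $T(x_b,t_b|x_a,t_a)=\mathbb{P}(x_b,t_b|x_a,t_a)$, for \emph{any} ordered pair of times $t_b\geq t_a$ in $I$, not merely when $t_a$ is the initial instant. This uniform identification is precisely what fails for a general stochastic process, and it is the only nontrivial input.

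First I would establish \eqref{divisible1} and \eqref{divisible2}. By the theorem, $T(x_2,t_2|x_1,t_1)=\mathbb{P}(x_2,t_2|x_1,t_1)$, so $\sum_{x_2\in\mathcal{X}}T(x_2,t_2|x_1,t_1)=\sum_{x_2\in\mathcal{X}}\mathbb{P}(x_2,t_2|x_1,t_1)=1$ since a conditional probability distribution is normalized, and $T(x_2,t_2|x_1,t_1)=\mathbb{P}(x_2,t_2|x_1,t_1)\geq 0$ since probabilities are non-negative. Equivalently, one simply notes that the theorem exhibits $T$ as a conditional probability, hence a stochastic matrix in the sense defined above, for every pair of times.

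For \eqref{divisible3}, I would again substitute the identification $T=\mathbb{P}$ into both sides: the left-hand side becomes $\mathbb{P}(x_3,t_3|x_1,t_1)$ and the right-hand side becomes $\sum_{x_2\in\mathcal{X}}\mathbb{P}(x_3,t_3|x_2,t_2)\mathbb{P}(x_2,t_2|x_1,t_1)$, and these are equal by the Chapman--Kolmogorov equation \eqref{Chap-Kolmo}, itself derived directly from the Markov condition \eqref{Markov}. The ordering $t_3\geq t_2\geq t_1$ is exactly what is needed for all three conditional probabilities appearing here to be well defined and for \eqref{Chap-Kolmo} to apply.

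There is essentially no obstacle to overcome; the one point requiring care is that every step hinges on the Markov property, which is what guarantees (via the previous theorem) that $T(x_2,t_2|x_1,t_1)$ equals $\mathbb{P}(x_2,t_2|x_1,t_1)$ even for intermediate times $t_1>t_0$ and not only at $t_0$. Once that identification is in hand, each displayed property is an immediate translation of a standard fact about conditional probabilities—normalization, positivity, and Chapman--Kolmogorov—into the language of transition matrices.
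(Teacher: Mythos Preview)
Your proposal is correct and follows exactly the approach indicated by the paper, which simply states that the corollary follows ``from this theorem and the Chapman--Kolmogorov equation \eqref{Chap-Kolmo}.'' You have merely spelled out in detail what the paper leaves implicit: the identification $T=\mathbb{P}$ from the preceding theorem immediately yields normalization and positivity, and Chapman--Kolmogorov gives the composition law.
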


In summary, for a Markov process the transition matrices are the two-point conditional probabilities and satisfy the composition law Eq. \eqref{divisible3}. Essentially, Eq. \eqref{divisible3} states that the evolution from $t_1$ to $t_3$ can be written as the composition of the evolution from $t_1$ to some intermediate time $t_2$, and from this $t_2$ to the final time $t_3$.

In case of non-Markovian processes, $T(x_2,t_2|x_1,t_1)$ might be not well defined for $t_1\neq t_0$. Nevertheless, if the matrix $\mathbb{P}(x_1,t_1|x_0,t_0)$ is invertible for every $t_1$, then $T(x_2,t_2|x_1,t_1)$ can be written in terms of well-defined quantities. Since the evolution from $t_1$ to $t_2$ (if it exists) has to be the composition of the backward evolution to the initial time $t_0$ and the forward evolution from $t_0$ to $t_2$, we can write
\begin{eqnarray}
T(x_2,t_2|x_1,t_1)&=&\sum_{x_0\in\mathcal{X}}T(x_2,t_2|x_0,t_0)T(x_0,t_0|x_1,t_1)\nonumber\\
&=&\sum_{x_0\in\mathcal{X}}\mathbb{P}(x_2,t_2|x_0,t_0)[\mathbb{P}(x_1,t_1|x_0,t_0)]^{-1}.
\end{eqnarray}
In this case the composition law Eq. \eqref{divisible3} is satisfied and Eq. \eqref{divisible1} also holds. However, condition Eq. \eqref{divisible2} may be not fulfilled, which prevents any interpretation of $T(x_2,t_2|x_1,t_1)$ as a conditional probability and therefore manifests the non-Markovian character of such a stochastic process.

\begin{defi}[Divisible process] A stochastic process $\{X(t),t\in I\}$ for which the associates transition matrices satisfy Eqs. \eqref{divisible1}, \eqref{divisible2} and \eqref{divisible3} is called \emph{divisible}. \label{Def:clas-divisible}
\end{defi}

There are divisible processes which are non-Markovian. As an example (see \cite{Parzen99,vanKampen07}), consider a stochastic process $\{X(t),t\in I\}$ with two possible results $\mathcal{X}=\{0,1\}$, and just three discrete times  $I=\{t_1,t_2,t_3\}$ ($t_3>t_2>t_1$). Define the joint probabilities as
\begin{multline}
\mathbb{P}(x_3,t_3;x_2,t_2;x_1,t_1):=\frac{1}{4}\left(\delta_{x_3,0}\delta_{x_2,0}\delta_{x_1,1}+\delta_{x_3,0}\delta_{x_2,1}\delta_{x_1,0}\right.\\
+\left.\delta_{x_3,1}\delta_{x_2,0}\delta_{x_1,0}+\delta_{x_3,1}\delta_{x_2,1}\delta_{x_1,1}\right).
\end{multline}
By computing the marginal probabilities we obtain $\mathbb{P}(x_3,t_3;x_2,t_2)=\mathbb{P}(x_2,t_2;x_1,t_1)=\mathbb{P}(x_3,t_3;x_1,t_1)=1/4$, and then
\begin{multline}
\mathbb{P}(x_3,t_3|x_2,t_2;x_1,t_1)=\frac{\mathbb{P}(x_3,t_3;x_2,t_2;x_1,t_1)}{\mathbb{P}(x_2,t_2;x_1,t_1)}=\left(\delta_{x_3,0}\delta_{x_2,0}\delta_{x_1,1}\right.\\
+\left. \delta_{x_3,0}\delta_{x_2,1}\delta_{x_1,0}+\delta_{x_3,1}\delta_{x_2,0}\delta_{x_1,0}+\delta_{x_3,1}\delta_{x_2,1}\delta_{x_1,1}\right).
\end{multline}
Therefore the process is non-Markovian as, for example, $\mathbb{P}(1,t_3|0,t_2;0,t_1)=1$ and  $\mathbb{P}(1,t_3|0,t_2;1,t_1)=0$. However the transition matrices can be written as
\[
T(x_3,t_3|x_2,t_2)=\frac{\mathbb{P}(x_3,t_3;x_2,t_2)}{\mathbb{P}(x_2,t_2)}=\frac{1}{2},
\]
and similarly $T(x_2,t_2|x_1,t_1)=T(x_3,t_3|x_1,t_1)=1/2$. Hence the conditions \eqref{divisible1}, \eqref{divisible2} and \eqref{divisible3} are clearly fulfilled. Other examples of non-Markovian divisible processes can be found in \cite{Levi49,Feller59,Vacch-NJP,Vacch-JPB,Vacch-Class}.

Despite the existence non-Markovian divisible processes, we can establish the following key theorem.

\begin{theorem} \label{TheoTransMatrixMarko} A family of transition matrices $T(x_2,t_2|x_1,t_1)$ with $t_2>t_1$ which satisfies Eqs. \eqref{divisible1}, \eqref{divisible2} and \eqref{divisible3} can always be seen as the transition matrices of some underlying Markov process $\{X(t),t\in I\}$.
\end{theorem}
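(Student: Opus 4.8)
The plan is to reduce the statement to the two earlier results of this section: Theorem~\ref{Theo-CK-Markov}, which builds a Markov process out of any Chapman--Kolmogorov-consistent family of conditional probabilities, and the theorem above stating that for a Markov process $T(x_2,t_2|x_1,t_1)=\mathbb{P}(x_2,t_2|x_1,t_1)$. The crucial remark is that \eqref{divisible1}--\eqref{divisible2} say exactly that each $T(\,\cdot\,,t_2|x_1,t_1)$ is a genuine probability distribution on $\mathcal{X}$, while \eqref{divisible3} is formally the Chapman--Kolmogorov equation \eqref{Chap-Kolmo}.

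Concretely, I would first fix an arbitrary one-point distribution $\mathbb{P}(x_0,t_0)$ at the initial time $t_0$ and propagate it forward,
\begin{equation}
\mathbb{P}(x,t):=\sum_{x_0\in\mathcal{X}}T(x,t|x_0,t_0)\,\mathbb{P}(x_0,t_0),\qquad t\in I,
\end{equation}
which by \eqref{divisible1}--\eqref{divisible2} is again a probability distribution for every $t$, and by \eqref{divisible3} is route-independent (propagating directly from $t_0$ to $t$ agrees with propagating through any intermediate time). I would then \emph{declare} the two-point conditional probabilities of the process to be the given matrices, $\mathbb{P}(x_2,t_2|x_1,t_1):=T(x_2,t_2|x_1,t_1)$ for $t_2>t_1$, set $\mathbb{P}(x_2,t_2;x_1,t_1):=T(x_2,t_2|x_1,t_1)\,\mathbb{P}(x_1,t_1)$, and note that \eqref{divisible3} is precisely the Chapman--Kolmogorov relation \eqref{Chap-Kolmo} for this family. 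Theorem~\ref{Theo-CK-Markov} then applies and yields a Markov process $\{X(t),t\in I\}$ having these conditional probabilities; by the theorem identifying transition matrices with two-point conditional probabilities for Markov processes, its transition matrices are exactly $T(x_2,t_2|x_1,t_1)$, which is the assertion.

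The only delicate point---and the one I would treat with care---is the bookkeeping: the consistency of the one-point marginals $\mathbb{P}(x,t)$ along different chains of intermediate times, and the (harmless) ambiguity in defining conditionals on events $\{X(t_1)=x_1\}$ of zero probability. Both are settled by \eqref{divisible3} and by the freedom to choose those conditionals arbitrarily subject to \eqref{divisible1}, so there is no genuine obstacle; the argument is essentially a transcription of the constructive proof of Theorem~\ref{Theo-CK-Markov}, now fed with $T$ in place of an a priori given family of conditional probabilities.
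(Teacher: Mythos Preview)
Your proposal is correct and follows essentially the same approach as the paper's proof: identify the matrices $T(x_2,t_2|x_1,t_1)$ with conditional probabilities via \eqref{divisible1}--\eqref{divisible2}, recognize \eqref{divisible3} as the Chapman--Kolmogorov equation \eqref{Chap-Kolmo}, and invoke Theorem~\ref{Theo-CK-Markov}. The paper's version is terser---it does not spell out the choice of initial distribution, the consistency of marginals, or the zero-probability bookkeeping---but your added detail is sound and merely makes explicit what the construction in Theorem~\ref{Theo-CK-Markov} already handles.
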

\begin{proof}
Since the matrices $T(x_2,t_2|x_1,t_1)$ satisfy \eqref{divisible1} and \eqref{divisible2}, they can be understood as conditional probabilities $\mathbb{P}(x_2,t_2|x_1,t_1)=T(x_2,t_2|x_1,t_1)$, and since \eqref{divisible3} is also satisfied, the process fulfils Eq. \eqref{Chap-Kolmo}. Then the final statement follows from Theorem \ref{Theo-CK-Markov}.
\end{proof}
\noindent Thus, we conclude that:
\begin{cor} At the level of one-point probabilities, divisible and Markovian processes are equivalent. The complete hierarchy of time-conditional probabilities has to be known to make any distinctions. \label{Cor-Divisible}
\end{cor}

\subsection{Contractive property}\label{sec:contractiveClassical}

There is another feature of Markov processes that will be useful in the quantum case. Consider a vector $v(x)$, where $x$ denotes its different components. Then its \emph{$L_1-$norm} is defined as
\begin{equation}
\|v(x)\|_1:=\sum_{x}|v(x)|.
\end{equation}

This norm is particularly useful in hypothesis testing problems. Namely, consider a random variable $X$ which is distributed according to either probability $p_1(x)$ or probability $p_2(x)$. We know that, with probability $q$, $X$ is distributed according to $p_1(x)$, and, with probability $1-q$, $X$ is distributed according to $p_2(x)$. Our task consists of sampling $X$ just once with the aim of inferring the correct probability distribution of $X$ [$p_1(x)$ or $p_2(x)$]. Then the minimum (averaged) probability to give the wrong answer turns out to be
\begin{equation}
\mathbb{P}_{\rm min}({\rm fail})=\frac{1-\|w(x)\|_1}{2},
\end{equation}
where $w(x):=qp_1(x)-(1-q)p_2(x)$. The proof of this result follows the same steps as in the quantum case (see Section \ref{sectionQcontraction}). Thus the $L_1$-norm of the vector $w(x)$ gives the capability to distinguish correctly between $p_1(x)$ and $p_2(x)$ in the two-distribution discrimination problem.

Particularly, in the unbiased case $q=1/2$, we have
\[
\|w(x)\|_1=\frac{1}{2}\|p_1(x)-p_2(x)\|_1,
\]
which is known as the \emph{Kolmogorov distance}, \emph{$L_1$-distance}, or \emph{variational distance} between $p_1(x)$ and $p_2(x)$.

In the identification of non-divisible processes, the $L_1$-norm also plays an important role.
\begin{theorem} \label{theoremClassi-contraction} Let $T(x_2,t_2|x_1,t_1)$ be the transition matrices of some stochastic process. Then such a process is divisible if and only if the $L_1$-norm does not increase when $T(x_2,t_2|x_1,t_1)$ is applied to every vector $v(x)$, $x\in\mathcal{X}$, for all $t_1$ and $t_2$,
\begin{equation}\label{Classic-contraction}
\left\|\sum_{x_1\in\mathcal{X}}T(x_2,t_2|x_1,t_1)v(x_1)\right\|_1\leq\| v(x_2)\|_1, \quad t_1\leq t_2.
\end{equation}
\end{theorem}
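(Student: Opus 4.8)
The plan is to reduce the claim to an elementary fact about stochastic matrices, leaning on the observations made immediately before the theorem. First I would recall that, whenever they exist, the transition matrices automatically satisfy the normalization \eqref{divisible1} and the composition law \eqref{divisible3}: indeed $T(x_2,t_2|x_1,t_1)=\sum_{x_0}\mathbb{P}(x_2,t_2|x_0,t_0)[\mathbb{P}(x_1,t_1|x_0,t_0)]^{-1}$ is a stochastic matrix composed with the inverse of a stochastic matrix, and the inverse of a matrix all of whose columns sum to one again has columns summing to one (the all-ones row vector is a left eigenvector with eigenvalue one, and this is inherited by the inverse). Hence, by Definition~\ref{Def:clas-divisible}, the process is divisible if and only if the positivity condition \eqref{divisible2} holds for every pair $t_1\le t_2$. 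The theorem then follows from the purely linear-algebraic statement: a matrix $T$ with $\sum_{x_2}T(x_2|x_1)=1$ for all $x_1$ is entrywise nonnegative if and only if $\|Tv\|_1\le\|v\|_1$ for every vector $v$.

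For the ``only if'' direction of this statement (equivalently, ``divisible $\Rightarrow$ non-increasing $L_1$-norm''), I would bound $\|\sum_{x_1}T(x_2,t_2|x_1,t_1)v(x_1)\|_1$ above by $\sum_{x_2}\sum_{x_1}T(x_2,t_2|x_1,t_1)\,|v(x_1)|$ using the triangle inequality together with \eqref{divisible2}, and then perform the sum over $x_2$ via \eqref{divisible1} to recover $\|v\|_1$, which is exactly \eqref{Classic-contraction}. For the converse, the key step is to insert the indicator vectors into \eqref{Classic-contraction}: fixing $y\in\mathcal{X}$ and taking $v(x_1)=\delta_{x_1,y}$, the vector with components $\sum_{x_1}T(x_2,t_2|x_1,t_1)v(x_1)=T(x_2,t_2|y,t_1)$ is just the column of $T$ indexed by $y$, so the hypothesis gives $\sum_{x_2}|T(x_2,t_2|y,t_1)|\le1$, while \eqref{divisible1} gives $\sum_{x_2}T(x_2,t_2|y,t_1)=1$; since $|r|\ge r$ for every real $r$, these two sums can coincide only if $|T(x_2,t_2|y,t_1)|=T(x_2,t_2|y,t_1)$ for all $x_2$, i.e.\ $T(x_2,t_2|y,t_1)\ge0$. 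Running this over all $y\in\mathcal{X}$ and all $t_1\le t_2$ gives \eqref{divisible2}, and hence, together with the inherited \eqref{divisible1} and \eqref{divisible3}, divisibility.

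I do not anticipate a genuine difficulty here; the only point that needs care is the bookkeeping that \eqref{divisible1} and \eqref{divisible3} come ``for free'' from the construction of the transition matrices, so that the $L_1$ condition has to do nothing more than reinstate the positivity \eqref{divisible2}. It is also worth making explicit the standing assumption under which the $T(x_2,t_2|x_1,t_1)$ are defined (for instance invertibility of $\mathbb{P}(x_1,t_1|x_0,t_0)$); if they fail to exist the process cannot be divisible in the sense of Definition~\ref{Def:clas-divisible} and there is nothing to prove.
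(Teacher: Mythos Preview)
Your argument is correct and follows essentially the same lines as the paper. The ``only if'' direction is identical (triangle inequality plus \eqref{divisible1}--\eqref{divisible2}). For the converse, the paper tests \eqref{Classic-contraction} on an arbitrary probability distribution $p(x)\ge0$ and squeezes
\[
\|p\|_1=\sum_{x_1,x_2}T(x_2,t_2|x_1,t_1)p(x_1)\le\sum_{x_2}\Bigl|\sum_{x_1}T(x_2,t_2|x_1,t_1)p(x_1)\Bigr|\le\|p\|_1
\]
to conclude $\sum_{x_1}T(x_2,t_2|x_1,t_1)p(x_1)\ge0$ for every probability $p$, and hence \eqref{divisible2}. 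Your choice of indicator vectors $v=\delta_{\cdot,y}$ is simply the cleanest special case of this, extracting each column directly; it is a minor streamlining rather than a genuinely different route. Your preliminary remark that \eqref{divisible1} and \eqref{divisible3} are automatic is also exactly what the paper invokes (``as we mentioned earlier, if $T(x_2,t_2|x_1,t_1)$ does exist, it always satisfies Eqs.~\eqref{divisible1} and \eqref{divisible3}'').
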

\begin{proof} The ``only if'' part follows from the properties \eqref{divisible1} and \eqref{divisible2}:
\begin{eqnarray}
\left\|\sum_{x_1\in\mathcal{X}}T(x_2,t_2|x_1,t_1)v(x_1)\right\|_1&=&\sum_{x_2\in\mathcal{X}}\left|\sum_{x_1\in\mathcal{X}}T(x_2,t_2|x_1,t_1)v(x_1)\right|\nonumber\\
&\leq&\sum_{x_1,x_2\in\mathcal{X}}T(x_2,t_2|x_1,t_1)|v(x_1)|\nonumber\\
&=&\sum_{x_1\in\mathcal{X}}|v(x_1)|=\sum_{x_2\in\mathcal{X}}|v(x_2)|=\| v(x_2)\|.
\end{eqnarray}
For the ``if'' part, as we mentioned earlier, if $T(x_2,t_2|x_1,t_1)$ does exist, it always satisfies Eqs. \eqref{divisible1} and \eqref{divisible3}. Take a vector to be a probability distribution $v(x)=p(x)\geq0$ for all $x\in\mathcal{X}$, because of Eq. \eqref{divisible1} we have
\begin{equation}
\| p(x_1)\|_1=\sum_{x_1\in\mathcal{X}}p(x_1)=\sum_{x_1,x_2\in\mathcal{X}}T(x_2,t_2|x_1,t_1)p(x_1).
\end{equation}
Since, by hypothesis, Eq. \eqref{Classic-contraction} holds for any vector, we obtain the following chain of inequalities
\begin{eqnarray}
 \| p(x_1)\|_1&=&\sum_{x_1,x_2\in\mathcal{X}}T(x_2,t_2|x_1,t_1)p(x_1)\leq\sum_{x_2\in\mathcal{X}}\left|\sum_{x_1\in\mathcal{X}}T(x_2,t_2|x_1,t_1)p(x_1)\right|\nonumber\\
&\leq&\sum_{x_2\in\mathcal{X}}|p(x_2)|=\sum_{x_1\in\mathcal{X}}|p(x_1)|=\| p(x_1)\|.
\end{eqnarray}
Therefore,
\[
\sum_{x_2\in\mathcal{X}}\left|\sum_{x_1\in\mathcal{X}}T(x_2,t_2|x_1,t_1)p(x_1)\right|=\sum_{x_1,x_2\in\mathcal{X}}T(x_2,t_2|x_1,t_1)p(x_1),
\]
for any probability $p(x_1)$, which is only possible if $\sum_{x_1\in\mathcal{X}}T(x_2,t_2|x_1,t_1)p(x_1)\geq0$. Then Eq. \eqref{divisible2} has to be satisfied.
\end{proof}

Because of this theorem and Eq. \eqref{divisible3}, $\mathbb{P}_{\rm min}({\rm fail})$ increases monotonically with time for a divisible process. In this regard, if the random variable $X$ undergoes a Markovian process, the best chance to rightly distinguish between the two possible distributions $p_1(x)$ and $p_2(x)$ is to sample $X$ at time instants as close as possible to the initial time $t_0$. However that is not the case if $X$ is subject to a non-divisible (and then non-Markovian) process. Then, in order to decrease the error probability, it could be better to wait until some time, say $t_1$, where $\left\|w(x_1,t_1)\right\|_1=\left\|qp_1(x_1,t_1)-(1-q)p_2(x_1,t_1)\right\|_1$ increases again (without exceeding its initial value). The fact that the error probability may decrease for some time $t_1$ after the initial time $t_0$ can be understood as a trait of underlying memory in the process. That is, the system retains some information about the probability of $X$ at $t_0$, which arises at a posterior time in the evolution.

In summary, classical Markovian processes are defined via multi-time conditional probabilities,  Eq. \eqref{Markov}. However, if the experimenter only has access to one-point probabilities, Markovian processes become equivalent to divisible processes. The latter are more easily characterized, as they only depend on properties of transition matrices and the $L_1$-norm.

\section{Markovianity in Quantum Processes}\label{sec-QuantumMarkov}

After the succinct review of classical Markovian processes in the previous section, here we shall try to adapt those concepts to the quantum case. By the adjective ``quantum'' we mean that the system undergoing evolution is a quantum system. Our aim is to find a simple definition of a quantum Markovian process by keeping a close analogy to its classical counterpart. Since this is not straightforward, we comment first on some points which make a definition of quantum Markovianity difficult to formulate. For the sake of simplicity, in the following we shall consider finite dimensional quantum systems unless otherwise stated.

\subsection{Problems of a straightforward definition}
Since the quantum theory is a statistical theory, it seems meaningful to ask for some analogue to classical stochastic processes and particularly Markov processes. However, the quantum theory is based on non-commutative algebras and this makes its analysis considerably more involved. Indeed, consider the classical definition of Markov process Eq. \eqref{Markov}, to formulate a similar condition in the quantum realm we demand a way to obtain $\mathbb{P}(x_n,t_n|x_{n-1},t_{n-1};\ldots;x_0,t_0)$ for quantum systems. The problem arises because we can sample a classical random variable without affecting its posterior statistics; however, in order to ``sample'' a quantum system, we need to perform measurements, and these measurements disturb the state of the system and affect the subsequent outcomes. Thus, $\mathbb{P}(x_n,t_n|x_{n-1},t_{n-1};\ldots;x_0,t_0)$ does not only depend on the dynamics but also on the measurement process, and a definition of quantum Markovianity in terms of it, even if possible, does not seem very appropriate. Actually, in such a case the Markovian character of a quantum dynamical system would depend on which measurement scheme is chosen to obtain $\mathbb{P}(x_n,t_n|x_{n-1},t_{n-1};\ldots;x_0,t_0)$. This is very inconvenient as the definition of Markovianity should be independent of what is required to verify it.

\subsection{Definition in terms of one-point probabilities: divisibility}\label{NonMarkoSection}
Given the aforementioned problems to construct $\mathbb{P}(x_n,t_n|x_{n-1},t_{n-1};\ldots;x_0,t_0)$ in the quantum case, a different approach focuses on the study of one-time probabilities $\mathbb{P}(x,t)$. For these, the classical definition of Markovianity reduces to the concept of divisibility (see Definition \ref{Def:clas-divisible}), and a very nice property is that divisibility may be defined in the quantum case without any explicit mention to measurement processes. To define quantum Markovianity in terms of divisibility may seem to lose generality, nevertheless Theorem \ref{TheoTransMatrixMarko} and Corollary \ref{Cor-Divisible} assert that this loss is innocuous, as divisibility and Markovianity are equivalent properties for one-time probabilities. These probabilities are the only ones that can be constructed in the quantum case avoiding the difficulties associated to measurement disturbance.

Let us consider a system in a quantum state given by some (non-degenerate) density matrix $\rho$, the spectral decomposition yields
\begin{equation}
\rho=\sum_x p(x)|\psi(x)\rangle\langle\psi(x)|.
\end{equation}
Here the eigenvalues $p(x)$ form a classical probability distribution, which may be interpreted as the probabilities for the system to be in the corresponding eigenstate $|\psi(x)\rangle$,
\begin{equation}
\mathbb{P}(\ket{\psi(x)}):=p(x).
\end{equation}
Consider now some time evolution of the quantum system such that the spectral decomposition of the initial state is preserved; $\rho(t_0)=\sum_x p(x,t_0)|\psi(x)\rangle\langle\psi(x)|$ is mapped to
\begin{equation}\label{spectral(t)}
\rho(t)=\sum_x p(x,t)|\psi(x)\rangle\langle\psi(x)|\in\mathcal{S},
\end{equation}
where $\mathcal{S}$ denotes the set of quantum states with the same eigenvectors as $\rho(t_0)$. Since this process can be seen as a classical stochastic process on the variable $x$, which labels the eigenstates $|\psi(x)\rangle$, we consider it to be divisible if the evolution of $p(x,t)$ satisfies the classical definition of divisibility (Definition \ref{Def:clas-divisible}). In such a case, there are transition matrices $T(x_1,t_1|x_0,t_0)$, such that
\begin{equation}\label{spectralTransMatrix}
p(x_1,t_1)=\sum_{x_0\in\mathcal{X}}T(x_1,t_1|x_0,t_0)p(x_0,t_0),
\end{equation}
fulfilling Eqs. \eqref{divisible1}, \eqref{divisible2} and \eqref{divisible3}. This Eq. \eqref{spectralTransMatrix} can be written in terms of density matrices as
\begin{equation}
\rho(t_1)=\mathcal{E}_{(t_1,t_0)}\left[\rho(t_0)\right].
\end{equation}
Here, $\mathcal{E}_{(t_1,t_0)}$ is a dynamical map that preserves the spectral decomposition of $\rho(t_0)$ and satisfies
\begin{align}
\mathcal{E}_{(t_1,t_0)}\left[\rho(t_0)\right]&=\sum_{x_0 \in\mathcal{X}}p(x_0,t_0) \mathcal{E}_{(t_1,t_0)}[|\psi(x_0)\rangle\langle\psi(x_0)|]\nonumber\\
&=\sum_{x_1,x_0 \in\mathcal{X}}T(x_1,t_1|x_0,t_0)p(x_0,t_0)|\psi(x_1)\rangle\langle\psi(x_1)|.
\end{align}
Furthermore, because of Eqs. \eqref{divisible1}, \eqref{divisible2} and \eqref{divisible3}, $\mathcal{E}_{(t_2,t_1)}$ preserves positivity and the trace of any state in $\mathcal{S}$ and obeys the composition law
\begin{equation}\label{CompositionLaw}
\mathcal{E}_{(t_3,t_1)}=\mathcal{E}_{(t_3,t_2)}\mathcal{E}_{(t_2,t_1)}, \quad t_3\geq t_2 \geq t_1.
\end{equation}

On the other hand, since the maps $\{\mathcal{E}_{(t_2,t_1)},t_2\geq t_1\geq t_0\}$ are supposed to describe some quantum evolution, they are linear (there is not experimental evidence against this fact \cite{Zeilinger,Weinberg,Wineland}). Thus, their action on another set $\mathcal{S}'$ of quantum states with different spectral projectors to $\mathcal{S}$ is physically well defined provided that the positivity of the states of $\mathcal{S}'$ is preserved (i.e. any density matrix in $\mathcal{S}'$ is transformed in another valid density matrix). Hence, by consistence, we formulate the following general definition of a P-divisible process.

\begin{defi}[P-divisible process] \label{defpre-Marko2} We say that a quantum system subject to some time evolution characterized by the family of trace-preserving linear maps $\{\mathcal{E}_{(t_2,t_1)},t_2\geq t_1\geq t_0\}$ is \emph{P-divisible} if, for every $t_2$ and $t_1$, $\mathcal{E}_{(t_2,t_1)}$ is a positive map (preserve the positivity of any quantum state) and fulfils Eq. \eqref{CompositionLaw}.
\end{defi}

The reason to use the terminology ``P-divisible'' (which stands for positive-divisible) instead of ``divisible'' comes from the difference between positive and complete positive maps which is essential in quantum mechanics. More explicitly, a linear map $\Upsilon$ acting on a matrix space $\mathcal{M}$ is a positive map if for $A\in \mathcal{M}$,
\begin{equation}
A\geq0 \Rightarrow \Upsilon(A)\geq0,
\end{equation}
i.e. $\Upsilon$ transforms positive semidefinite matrices into positive semidefinite matrices. In addition, $\Upsilon$ is said to be completely positive if for any matrix space $\mathcal{M}'$ such that $\dim(\mathcal{M}')\geq\dim(\mathcal{M})$, and $B\in\mathcal{M}'$,
\begin{equation}\label{CP}
B\geq0 \Rightarrow \Upsilon\otimes\mathds{1}(B)\geq0.
\end{equation}
These concepts are properly extended to the infinity dimensional case \cite{Paulsen}.

Complete positive maps are much easier to characterize than maps that are merely positive \cite{Choi,Jamiolkowski}; they admit the so-called Kraus representation, $\Upsilon(\cdot)=\sum_j K_j(\cdot)K_j^\dagger$, and it can be shown  that if Eq. \eqref{CP} is fulfilled with $\dim(\mathcal{M}')=\dim(\mathcal{M})^2$, it is also true for any $\mathcal{M}'$ such that $\dim(\mathcal{M}')\geq\dim(\mathcal{M})$.

It is well-know that the requirement of positivity alone for a dynamical map presents difficulties. Concretely, in order to keep the positivity of density matrices in presence of entanglement with another extra system we must impose complete positivity instead of positivity \cite{NC00,Kraus71,Daviesbook76,Kraus83,BrPe02,AlickiLendi87,Libro}. Thus, now we are able to give a definition of quantum Markovian process.

\begin{defi}[Markovian quantum process] \label{def-Marko} We shall say that a quantum system subject to a time evolution given by some family of trace-preserving linear maps $\{\mathcal{E}_{(t_2,t_1)},t_2\geq t_1\geq t_0\}$ is \emph{Markovian} (or  \emph{divisible} \cite{Wolf1}) if, for every $t_2$ and $t_1$, $\mathcal{E}_{(t_2,t_1)}$ is a complete positive map and fulfills the composition law Eq. \eqref{CompositionLaw}.
\end{defi}

For the sake of comparison, the following table shows the clear parallelism between classical transition matrices and quantum evolution families in a Markovian process.

\begin{table}[h!]
\begin{tabular}{|l|c|c|}
\cline{2-3}
    \multicolumn{1}{l|}{} & Classical & Quantum \\
\hline
{\small Normalization}   & {\footnotesize $\sum_{x_2\in\mathcal{X}}T(x_2,t_2|x_1,t_1)=1$} & {\footnotesize $\mathcal{E}_{(t_2,t_1)}$ trace-preserving} \\
\hline
{\small Positivity}   & {\footnotesize $T(x_2,t_2|x_1,t_1)\geq0$} & {\footnotesize $\mathcal{E}_{(t_2,t_1)}$ completely positive} \\
\hline
{\small Composition Law}   & {\scriptsize $T(x_3,t_3|x_1,t_1)=\sum_{x_2\in\mathcal{X}}T(x_3,t_3|x_2,t_2)T(x_2,t_2|x_1,t_1)$} & {\footnotesize $\mathcal{E}_{(t_3,t_1)}=\mathcal{E}_{(t_3,t_2)}\mathcal{E}_{(t_2,t_1)}$} \\
\hline
\end{tabular}
\end{table}

Before we move on, it is worth to summarize the argument leading to the definition of Markovian quantum process, as it is the central concept of this work. Namely, since a direct definition from the classical condition Eq. \eqref{Markov} is problematic because of quantum measurement disturbance, we focus on one-time probabilities. For those, classical Markovian processes and divisible processes are equivalent, thus we straightforward formulate the divisibility condition for quantum dynamics preserving the spectral decomposition of certain set of states $\mathcal{S}$. Then the Markovian (or divisibility) condition for any quantum evolution follows by linearity when taking into account the completely positive requirement in the quantum evolution. We have sketched this reasoning in the scheme presented in figure 1.

\begin{figure}[t]\label{fig2}
\begin{center}
\includegraphics[width=0.8\textwidth]{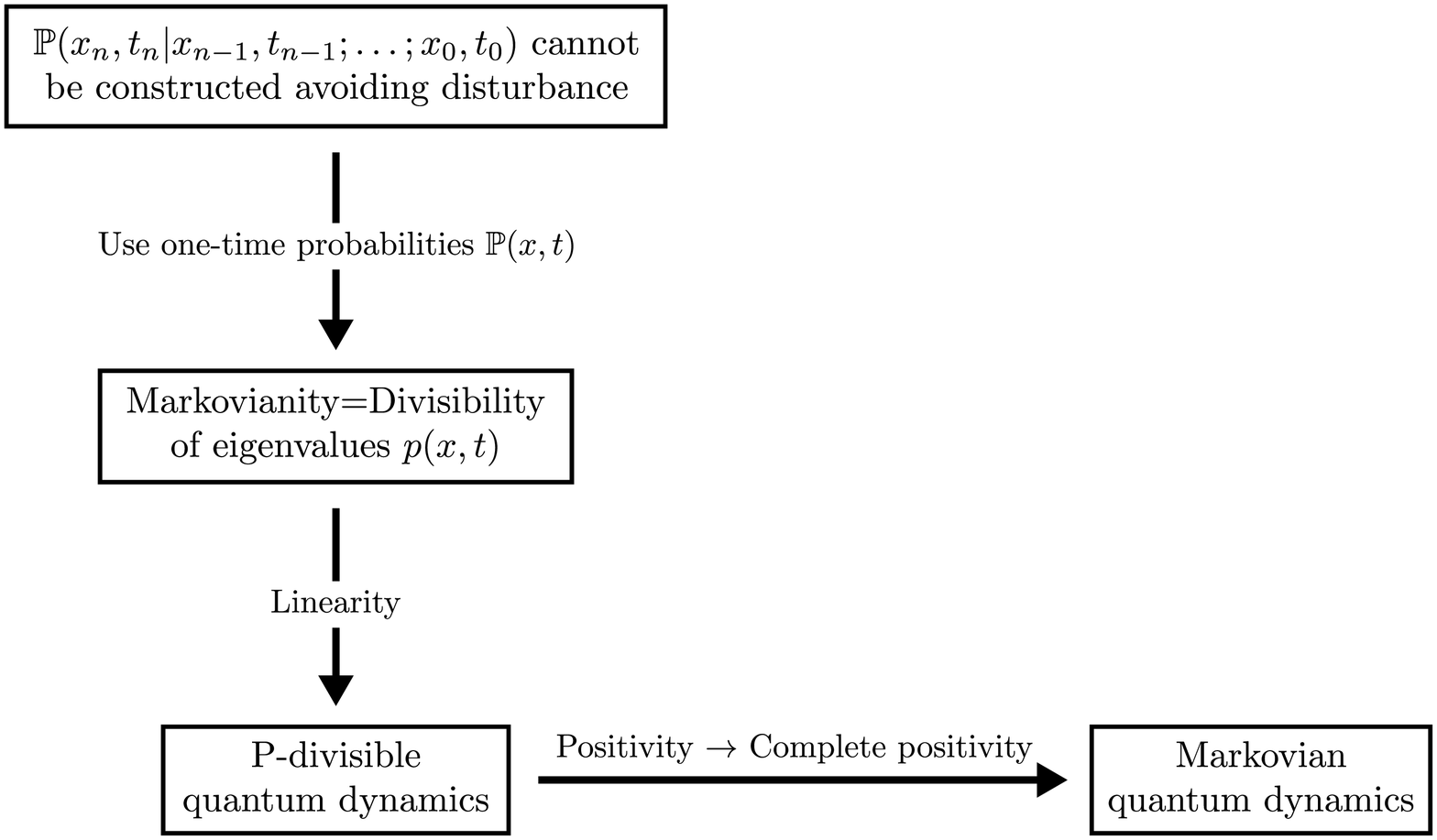}
\end{center}
\caption{Scheme of the arguments employed in the definition of quantum Markovian process (see main text). The equality in the second box is a consequence of Corollary \ref{Cor-Divisible}.}
\end{figure}

Finally, we review a fundamental result regarding differentiable quantum Markovian processes (i.e. processes such that the limit $\lim_{\epsilon\downarrow0}[\mathcal{E}_{(t+\epsilon,t)}-\mathds{1}]/\epsilon:=\mathcal{L}_t$ is well-defined). In this case, there is a mathematical result which is quite useful to characterize Markovian dynamics.

\begin{theorem} [Gorini-Kossakowski-Susarshan-Lindblad] An operator $\mathcal{L}_t$ is the generator of a quantum Markov (or divisible) process if and only if it can be written in the form \label{KossLindTheo}
\begin{equation}\label{diffMarkov}
\frac{d\rho(t)}{dt}=\mathcal{L}_t[\rho(t)] =-{\rm i}[H(t),\rho(t)]+\sum_k\gamma_k(t)\left[V_k(t)\rho(t)V_k^\dagger(t)-\frac{1}{2}\{V_k^\dagger(t)V_k(t),\rho(t)\}\right],
\end{equation}
where $H(t)$ and $V_k(t)$ are time-dependent operators, with $H(t)$ self-adjoint, and $\gamma_k(t)\geq0$ for every $k$ and time $t$.
\end{theorem}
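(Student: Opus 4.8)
The plan is to treat \eqref{diffMarkov} as the pointwise-in-$t$ version of the Gorini--Kossakowski--Sudarshan--Lindblad structure theorem and to exploit that the two-parameter family is entirely encoded in its generator. Since $\{\mathcal{E}_{(t_2,t_1)}\}$ is differentiable and obeys the composition law \eqref{CompositionLaw}, it is the unique propagator of the linear equation $\partial_{t_2}\mathcal{E}_{(t_2,t_1)}=\mathcal{L}_{t_2}\,\mathcal{E}_{(t_2,t_1)}$ with $\mathcal{E}_{(t_1,t_1)}=\mathds{1}$; conversely every $\mathcal{L}_t$ generates through this equation a family satisfying \eqref{CompositionLaw}. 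Thus the composition law carries no information on either side, trace preservation of every $\mathcal{E}_{(t_2,t_1)}$ is equivalent to $\operatorname{tr}\mathcal{L}_t[\rho]=0$ for all $\rho$, and the whole content of the theorem is to characterize, in terms of $\mathcal{L}_t$, when all the maps $\mathcal{E}_{(t_2,t_1)}$ are completely positive.

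For the ``only if'' implication I would fix an orthonormal basis $\{F_j\}_{j=0}^{d^2-1}$ of the space of complex $d\times d$ matrices for the Hilbert--Schmidt inner product, with $F_0=\mathds{1}/\sqrt d$ and $\operatorname{tr}F_j=0$ for $j\geq1$, and expand the near-identity map as $\mathcal{E}_{(t+\epsilon,t)}[\rho]=\sum_{jk}c_{jk}(\epsilon)\,F_j\rho F_k^\dagger$. Complete positivity of $\mathcal{E}_{(t+\epsilon,t)}$ is equivalent (Choi's theorem) to $\bigl(c_{jk}(\epsilon)\bigr)\geq0$, while $\mathcal{E}_{(t,t)}=\mathds{1}$ forces $c_{jk}(0)=d\,\delta_{j0}\delta_{k0}$. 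Differentiating at $\epsilon\downarrow0$ gives $\mathcal{L}_t[\rho]=\sum_{jk}\dot c_{jk}\,F_j\rho F_k^\dagger$; pulling the $j=0$ and $k=0$ terms out and collecting them into a single operator, splitting that operator into self-adjoint and anti-self-adjoint parts, and using $\operatorname{tr}\mathcal{L}_t[\rho]=0$ to fix the self-adjoint part, reorganizes the generator into $-\mathrm{i}[H(t),\rho]+\sum_{j,k\geq1}a_{jk}(t)\bigl(F_j\rho F_k^\dagger-\tfrac12\{F_k^\dagger F_j,\rho\}\bigr)$ with $H(t)=H(t)^\dagger$ and $\bigl(a_{jk}(t)\bigr)=\bigl(\dot c_{jk}(0)\bigr)_{j,k\geq1}$ Hermitian. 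The decisive observation is that this Kossakowski matrix is positive semidefinite: for any vector $v$ supported on indices $\geq1$ one has $v^\dagger\bigl(c_{jk}(\epsilon)\bigr)v\geq0$ for $\epsilon\geq0$ and $v^\dagger\bigl(c_{jk}(0)\bigr)v=0$, whence the one-sided derivative $v^\dagger\bigl(a_{jk}\bigr)v\geq0$. Diagonalizing $\bigl(a_{jk}(t)\bigr)$ with eigenvalues $\gamma_k(t)\geq0$ and taking $V_k(t)$ to be the corresponding linear combinations of the $F_j$ puts $\mathcal{L}_t$ in the asserted form \eqref{diffMarkov}.

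For the ``if'' implication I would start from \eqref{diffMarkov} with $\gamma_k(t)\geq0$ and write $\mathcal{L}_t=\Psi_t+\mathcal{Z}_t$, where $\Psi_t[\rho]:=\sum_k\gamma_k(t)\,V_k(t)\rho V_k(t)^\dagger$ is manifestly completely positive and $\mathcal{Z}_t[\rho]:=Z(t)\rho+\rho Z(t)^\dagger$ with $Z(t):=-\mathrm{i}H(t)-\tfrac12\sum_k\gamma_k(t)V_k(t)^\dagger V_k(t)$. The family $\mathcal{V}_{(t_2,t_1)}$ generated by $\mathcal{Z}_t$ alone acts by $\mathcal{V}_{(t_2,t_1)}[\rho]=W(t_2,t_1)\,\rho\,W(t_2,t_1)^\dagger$, where $\partial_{t_2}W(t_2,t_1)=Z(t_2)W(t_2,t_1)$ and $W(t_1,t_1)=\mathds{1}$, and is therefore completely positive. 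Iterating the integral equation $\mathcal{E}_{(t_2,t_1)}=\mathcal{V}_{(t_2,t_1)}+\int_{t_1}^{t_2}\mathcal{V}_{(t_2,s)}\,\Psi_s\,\mathcal{E}_{(s,t_1)}\,ds$ exhibits each $\mathcal{E}_{(t_2,t_1)}$ as a series of compositions and integrals of completely positive maps, which converges in operator norm in finite dimension, so $\mathcal{E}_{(t_2,t_1)}$ is completely positive; trace preservation follows from $\operatorname{tr}\mathcal{L}_t[\rho]=0$ and the composition law from the propagator property.

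The step I expect to be most delicate is the bookkeeping in the ``only if'' direction: extracting a bona fide self-adjoint $H(t)$ together with a positive-semidefinite Kossakowski matrix while simultaneously enforcing trace preservation, and in particular reading the positivity of $\bigl(a_{jk}\bigr)$ off as a one-sided derivative at the boundary point $\epsilon=0$ rather than assuming smoothness of $\epsilon\mapsto c_{jk}(\epsilon)$ there. I would keep the whole argument within the finite-dimensional setting adopted in the paper; in infinite dimensions the statement holds for bounded generators and requires Lindblad's more technical treatment, which I would only cite.
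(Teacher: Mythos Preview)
The paper does not actually prove this theorem: after stating it, the authors only remark that it follows from the time-homogeneous results of Kossakowski, Gorini--Kossakowski--Sudarshan and Lindblad, and refer to \cite{Libro,TrazaNoHermitica} for a complete proof covering time-dependent generators. So there is no in-paper argument to compare against.

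Your sketch is nonetheless essentially the standard route those references take. The ``only if'' direction is the Gorini--Kossakowski--Sudarshan argument: expand $\mathcal{E}_{(t+\epsilon,t)}$ in a Hilbert--Schmidt basis with $F_0\propto\mathds{1}$, identify complete positivity with positive semidefiniteness of the Choi/coefficient matrix, differentiate at $\epsilon=0$, absorb the $j=0$ or $k=0$ pieces into $-\mathrm{i}[H,\cdot]$ plus an anticommutator using trace annihilation, and read off $\bigl(a_{jk}\bigr)\geq0$ as a one-sided derivative of a nonnegative function vanishing at the endpoint. The ``if'' direction via the splitting $\mathcal{L}_t=\Psi_t+\mathcal{Z}_t$ and the Dyson-type integral equation is likewise standard and correct in finite dimension; an equivalent and slightly quicker variant is to note that for each fixed $t$ the map $\mathrm{e}^{s\mathcal{L}_t}$ is CPTP for all $s\geq0$ by the time-homogeneous theorem, and then to obtain $\mathcal{E}_{(t_2,t_1)}$ as a limit of Trotter-type products $\prod_j\mathrm{e}^{\mathcal{L}_{s_j}\Delta s}$, which are compositions of CPTP maps. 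Your own caveat about the one-sided derivative at $\epsilon=0$ is well placed: the existence of $\mathcal{L}_t$ is precisely the hypothesis that $\epsilon\mapsto c_{jk}(\epsilon)$ has a right derivative there, so no extra smoothness is being assumed.
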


This theorem is a consequence of the pioneering work by A. Kossakowski \cite{Kossakowski1,Kossakowski2} and co-workers \cite{GoKoSh76}, and independently G. Lindblad \cite{Lindblad76}, who analyzed the case of time-homogeneous equations, i.e. time-independent generators $\mathcal{L}_t\equiv\mathcal{L}$. For a complete proof including possible time-dependent $\mathcal{L}_t$ see \cite{Libro,TrazaNoHermitica}.

\subsection{Where is the memoryless property in quantum Markovian processes?}

As mentioned before, the motivation behind Definition  \ref{def-Marko} for quantum Markovian processes has been to keep a formal analogy with the classical case. However, it is not immediately apparent that the memoryless property present in the classical case is also present in the quantum domain. There are at least two ways to visualize this property which is hidden in Definition \ref{def-Marko}. As discussed below, one is based on the contractive properties of the completely positive maps and the other resorts to a collisional model of system-environment interactions.

\subsubsection{Contractive property of a quantum Markovian process}\label{sectionQcontraction}

Similarly to the classical case (see Section \ref{sec:contractiveClassical}), memoryless properties of quantum Markovian processes become quite clear in hypothesis testing problems \cite{BrLaPi1,PRAPolonia}. In the quantum case, we consider a system, with associated Hilbert space $\mathcal{H}$, whose state is represented by the density matrix $\rho_1$ with probability $q$, and $\rho_2$ with probability $(1-q)$. We wish to determine which density matrix describes the true state of the quantum system by performing a measurement. If we consider some general positive operator valued measure (POVM) $\{\Pi_x\}$ (cf. \cite{NC00}), where $x\in\mathcal{X}$ is the set of possible outcomes, we may split this set in two complementary subsets. If the outcome of the measurement is inside some $A\subset\mathcal{X}$, then we say that the state is $\rho_1$. Conversely, if the result of the measurement belongs to the complementary set $A^c$ such that $A\cup A^c=\mathcal{X}$, we say that the state is $\rho_2$. Let us group the results of this measurement in another POVM given by the pair $\{T,\mathbb{I}-T\}$, with $T=\sum_{x\in A} \Pi_x$.

Thus, when the true state is $\rho_1$ (which happens with probability $q$) we erroneously identify the state as $\rho_2$ with probability
\begin{eqnarray}
\sum_{j\in A^c} \Tr[\rho_1 \Pi_x]&=&\Tr\left[\rho_1 \left(\sum_{x\in A^c} \Pi_x\right)\right]=\Tr\left[\rho_1 (\mathbb{I}-T)\right].
\end{eqnarray}
On the other hand, when the true state is $\rho_2$ (which happens with probability $1-q$), we erroneously identify the state as $\rho_1$ with probability
\begin{equation}
\sum_{j\in A} \Tr[\rho_2 \Pi_x]=\Tr\left[\rho_2 \left(\sum_{x\in A} \Pi_x\right)\right]=\Tr\left[\rho_2 T\right].
\end{equation}
The problem in one-shot two-state discrimination is to examine the trade-off between the two error probabilities $\Tr\left[\rho_2 T\right]$ and $\Tr\left[\rho_1 (\mathbb{I}-T)\right]$. Thus, consider the best choice of $T$ that minimizes the total averaged error probability
\begin{align}
\min_{0\leq T\leq \mathbb{I}} \{(1-q)\Tr\left[\rho_2 T\right]+q\Tr\left[\rho_1 (\mathbb{I}-T)\right]\}
&=\min_{0\leq T\leq \mathbb{I}} \{q+\Tr\left[(1-q)\rho_2 T-q\rho_1 T\right]\}\nonumber\\
&=q-\max_{0\leq T\leq \mathbb{I}} [\Tr\left(\Delta T\right)],
\end{align}
where $\Delta=q\rho_1-(1-q)\rho_2$ is a Hermitian operator, with trace $\Tr \Delta=2q-1$ vanishing in the unbiased case $q=1/2$. $\Delta$ is sometimes called Helstrom matrix \cite{Helstrom}. We have the following result.

\begin{theorem} \label{BestMesurementT} With the best choice of $T$, the minimum total error probability in the one-shot two-state discrimination problem becomes
\begin{equation}\label{minimization}
\mathbb{P}_{\rm min}({\rm fail})=\min_{0\leq T\leq \mathbb{I}} \{(1-q)\Tr\left[\rho_2 T\right]+q\Tr\left[\rho_1 (\mathbb{I}-T)\right]\}=\frac{1-\|\Delta\|_1}{2},
\end{equation}
where $\|\Delta\|_1=\Tr\sqrt{\Delta^\dagger\Delta}$ is the trace norm of the Helstrom matrix $\Delta$.
\end{theorem}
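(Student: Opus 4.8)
The plan is to evaluate the optimization $\max_{0\leq T\leq\mathbb{I}}\Tr(\Delta T)$ that already appears in the reduction above, and then rewrite the answer in terms of the trace norm. First I would invoke the spectral decomposition of the Helstrom matrix, which is legitimate since $\Delta=q\rho_1-(1-q)\rho_2$ is Hermitian: write $\Delta=\sum_i\lambda_i\ket{i}\bra{i}$ with real eigenvalues $\lambda_i$ and orthonormal eigenvectors $\ket{i}$. Then for any $T$ with $0\leq T\leq\mathbb{I}$ one has $\Tr(\Delta T)=\sum_i\lambda_i\bra{i}T\ket{i}$, an expression that only involves the diagonal entries $t_i:=\bra{i}T\ket{i}$, and the operator inequality forces $0\leq t_i\leq 1$ for every $i$.

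The next step is the elementary optimization: since each $t_i$ ranges independently in $[0,1]$ as far as the objective $\sum_i\lambda_i t_i$ is concerned, the supremum is attained by taking $t_i=1$ whenever $\lambda_i>0$ and $t_i=0$ whenever $\lambda_i<0$ (the value of $t_i$ for $\lambda_i=0$ being irrelevant). Moreover this choice is realized by an admissible operator, namely the orthogonal projector $T_{\rm opt}=P_+$ onto the span of the eigenvectors with positive eigenvalue; hence
\begin{equation}
\max_{0\leq T\leq\mathbb{I}}\Tr(\Delta T)=\sum_{i:\,\lambda_i>0}\lambda_i.
\end{equation}

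It remains to express the sum of positive eigenvalues through quantities appearing in the statement. Using $\|\Delta\|_1=\Tr\sqrt{\Delta^\dagger\Delta}=\sum_i|\lambda_i|$ together with $\Tr\Delta=\sum_i\lambda_i=2q-1$, adding these two identities gives $\|\Delta\|_1+\Tr\Delta=2\sum_{i:\,\lambda_i>0}\lambda_i$, so that $\max_{0\leq T\leq\mathbb{I}}\Tr(\Delta T)=\tfrac{1}{2}(\|\Delta\|_1+2q-1)$. Substituting into $\mathbb{P}_{\rm min}({\rm fail})=q-\max_{0\leq T\leq\mathbb{I}}\Tr(\Delta T)$ yields $q-\tfrac{1}{2}(\|\Delta\|_1+2q-1)=\tfrac{1}{2}(1-\|\Delta\|_1)$, which is the claimed formula.

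The argument is essentially routine once set up correctly; the only points that need care are the reduction to the diagonal of $T$ in the eigenbasis of $\Delta$ (so that off-diagonal freedom in $T$ is genuinely irrelevant to the objective) and the sign bookkeeping that turns the sum of positive eigenvalues into $\tfrac12(\|\Delta\|_1+\Tr\Delta)$. I would also remark that the proof exhibits the optimal measurement explicitly — the Holevo–Helstrom projector onto the positive part of $\Delta$ — which is what makes the trace norm the natural distinguishability measure and underlies the contractive-property discussion that follows.
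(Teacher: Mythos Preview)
Your proof is correct and follows essentially the same route as the paper's: both use the spectral decomposition of the Hermitian Helstrom matrix, identify the optimal $T$ as the projector onto the positive eigenspace of $\Delta$, and then convert the sum of positive eigenvalues into $\tfrac12(\|\Delta\|_1+\Tr\Delta)$ via the two identities $\|\Delta\|_1=\sum_i|\lambda_i|$ and $\Tr\Delta=2q-1$. The only cosmetic difference is that the paper packages the positive and negative eigenvalue contributions as operators $\Delta^\pm$ with $\Delta=\Delta^+-\Delta^-$, whereas you work directly with the eigenvalues; the algebra is identical.
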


Thus, note that when $q=0$ or $q=1$ we immediately obtain zero probability of wrongly identifying the true state.

\begin{proof} The proof follows the same steps as for the unbiased $q=1/2$ case (see \cite{NC00,Hayashi}). The spectral decomposition allows us to write $\Delta=\Delta^+-\Delta^-$, with positive operators $\Delta^\pm=\pm\sum_k\lambda_k^\pm\ket{\psi_k^{\pm}}\bra{\psi_k^{\pm}}$ where $\lambda_k^+$ are the positive eigenvalues of $\Delta$ and $\lambda_k^-$ the negative ones. Then it is clear that for $0\leq T\leq \mathbb{I}$
\begin{equation}
\Tr\left(\Delta T\right)=\Tr\left(\Delta^+T\right)-\Tr\left(\Delta^-T\right)\leq\Tr\left(\Delta^+T\right)\leq\Tr\left(\Delta^+\right),
\end{equation}
so that
\begin{equation}
\mathbb{P}_{\rm min}({\rm fail})=q-\max_{0\leq T\leq \mathbb{I}} [\Tr\left(\Delta T\right)]=q-\Tr\left(\Delta^+\right).\label{minimizationprueba}
\end{equation}
On the other hand, because $|\psi_j^\pm\rangle\langle\psi_j^\pm|$ are orthogonal projections (in other words as $\|\Delta\|_1=\sum_j|\lambda_j|$), the trace norm of $\Delta$ is
\begin{equation}
\|\Delta\|_1=\|\Delta^+\|_1+\|\Delta^-\|_1=\Tr(\Delta^+)+\Tr(\Delta^-).
\end{equation}
Since
\begin{equation}
\Tr(\Delta^+)-\Tr(\Delta^-)=\Tr(\Delta)=2q-1,
\end{equation}
we have
\begin{equation}
\|\Delta\|_1=2\Tr(\Delta^+)+(1-2q).
\end{equation}
Using this relation in \eqref{minimizationprueba} we straightforwardly obtain the result \eqref{minimization}.
\end{proof}

Thus the trace norm of $\Delta=q\rho_1-(1-q)\rho_2$ gives our capability to distinguish correctly between $\rho_1$ and $\rho_2$ in the one-shot two-state discrimination problem.

On the other hand, the following theorem connects trace-preserving and positive maps with the trace norm. It was first proven by Kossakowski in references \cite{Kossakowski1,Kossakowski2}, while Ruskai also analyzed the necessary condition in \cite{Ruskai}.

\begin{theorem}\label{theoCPTcontractions} A trace preserving linear map $\mathcal{E}$ is positive if and only if for any Hermitian operator $\Delta$ acting on $\mathcal{H}$,
\begin{equation}\label{contraction}
\|\mathcal{E}(\Delta)\|_1\leq\|\Delta\|_1.
\end{equation}
\end{theorem}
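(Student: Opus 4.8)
The plan is to prove the two implications separately, exploiting the spectral decomposition of the Hermitian operator $\Delta$ together with the fact that $\mathcal{E}$ is trace preserving.

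For the "only if" direction (positivity $\Rightarrow$ contractivity), I would start from the Jordan decomposition $\Delta = \Delta^+ - \Delta^-$ with $\Delta^\pm \geq 0$ and $\Delta^+\Delta^- = 0$, so that $\|\Delta\|_1 = \Tr(\Delta^+) + \Tr(\Delta^-)$. Since $\mathcal{E}$ is positive, $\mathcal{E}(\Delta^\pm) \geq 0$, and the triangle inequality for the trace norm gives $\|\mathcal{E}(\Delta)\|_1 \leq \|\mathcal{E}(\Delta^+)\|_1 + \|\mathcal{E}(\Delta^-)\|_1$. For a positive operator the trace norm is just the trace, so $\|\mathcal{E}(\Delta^\pm)\|_1 = \Tr[\mathcal{E}(\Delta^\pm)] = \Tr(\Delta^\pm)$ because $\mathcal{E}$ preserves the trace. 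Adding the two, $\|\mathcal{E}(\Delta)\|_1 \leq \Tr(\Delta^+) + \Tr(\Delta^-) = \|\Delta\|_1$, which is the claim. This direction is essentially immediate once the Jordan decomposition is in place.

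For the "if" direction (contractivity $\Rightarrow$ positivity) I would argue by contradiction: suppose $\mathcal{E}$ is not positive, so there exists a state $\rho \geq 0$, $\Tr\rho = 1$, with $\mathcal{E}(\rho)$ having at least one negative eigenvalue. Write $\mathcal{E}(\rho) = Q^+ - Q^-$ in Jordan form with $Q^- \neq 0$. Now choose $\Delta = \rho$ (which is Hermitian), so $\|\Delta\|_1 = 1$. On the other hand $\|\mathcal{E}(\rho)\|_1 = \Tr(Q^+) + \Tr(Q^-)$, while $\Tr[\mathcal{E}(\rho)] = \Tr\rho = 1$ forces $\Tr(Q^+) - \Tr(Q^-) = 1$. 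Combining these, $\|\mathcal{E}(\rho)\|_1 = 1 + 2\Tr(Q^-) > 1 = \|\Delta\|_1$, contradicting the contraction hypothesis \eqref{contraction}. Hence $\mathcal{E}$ must be positive.

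The main obstacle — really the only subtle point — is making sure the trace-preserving hypothesis is used correctly in both directions: it is precisely what turns $\Tr[\mathcal{E}(\Delta^\pm)]$ into $\Tr(\Delta^\pm)$ and what pins down the difference $\Tr(Q^+)-\Tr(Q^-)$ in the converse. One should also note that in the "if" direction it suffices to test the contraction inequality on positive operators (density matrices), not all Hermitian ones, which slightly strengthens the statement but is not needed here. Everything else is a routine application of the Jordan decomposition and the identity $\|A\|_1 = \Tr|A| = \Tr(A^+) + \Tr(A^-)$ for Hermitian $A$.
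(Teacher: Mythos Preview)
Your proof is correct and essentially matches the paper's argument. The ``only if'' direction is identical; for the ``if'' direction the paper phrases the same computation as a direct chain of inequalities $\|\Delta\|_1=\Tr(\Delta)=\Tr[\mathcal{E}(\Delta)]\leq\|\mathcal{E}(\Delta)\|_1\leq\|\Delta\|_1$ forcing $\|\mathcal{E}(\Delta)\|_1=\Tr[\mathcal{E}(\Delta)]$ and hence $\mathcal{E}(\Delta)\geq0$, rather than as a contradiction, but the content is the same and your remark that testing on positive operators suffices is exactly what the paper exploits.
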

\begin{proof} Assume that $\mathcal{E}$ is positive and trace preserving, then for every positive semidefinite $\Delta\geq0$ the trace norm is also preserved, $\|\mathcal{E}(\Delta)\|_1=\|\Delta\|_1$. Consider $\Delta$ not to be necessarily positive semidefinite, then by using the same decomposition as in the proof of Theorem \ref{BestMesurementT}, $\Delta=\Delta^+-\Delta^-$, we have
\begin{align}
\|\mathcal{E}(\Delta)\|_1&=\|\mathcal{E}(\Delta^+)-\mathcal{E}(\Delta^-)\|_1 \nonumber \\
&\leq\|\mathcal{E}(\Delta^+)\|_1+\|\mathcal{E}(\Delta^-)\|_1=\|\Delta^+\|_1+\|\Delta^-\|_1=\|\Delta\|_1,
\end{align}
where the penultimate equality follows from the positivity of $\Delta^{\pm}$. Therefore, $\mathcal{E}$ fulfils Eq. \eqref{contraction}.

Conversely, assume that $\mathcal{E}$ satisfies Eq. \eqref{contraction} and preserves the trace, then for a positive semidefinite $\Delta$ we have the next chain of inequalities:
\[
\|\Delta\|_1=\Tr(\Delta)=\Tr[\mathcal{E}(\Delta)]\leq\|\mathcal{E}(\Delta)\|_1\leq\|\Delta\|_1, \quad \mathrm{for }\ \Delta\geq0,
\]
hence $\|\mathcal{E}(\Delta)\|_1=\mathrm{Tr}[\mathcal{E}(\Delta)]$. Since $\|\Delta\|_1=\mathrm{Tr}(\Delta)$ if and only if $\Delta\geq0$, we obtain that $\mathcal{E}(\Delta)\geq0$.

\end{proof}

There is a clear parallelism between this theorem and Theorem \ref{theoremClassi-contraction} for classical stochastic processes. As a result, quantum Markov processes are also characterized in the following way.

\begin{theorem} \label{contractionMarkov} A quantum evolution $\{\mathcal{E}_{(t_2,t_1)},t_2\geq t_1\geq t_0\}$ is Markovian if and only if for all $t_2$ and $t_1$, $t_2\geq t_1$,
\begin{equation}
\left\|\left[\mathcal{E}_{(t_2,t_1)}\otimes\mathds{1}\right](\tilde{\Delta})\right\|_1\leq\|\tilde{\Delta}\|_1,
\end{equation}
for any Hermitian operator $\tilde{\Delta}$ acting on $\mathcal{H}\otimes\mathcal{H}$.
\end{theorem}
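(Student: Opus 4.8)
The plan is to reduce the characterization of Markovianity (i.e. complete positivity of every two-point map $\mathcal{E}_{(t_2,t_1)}$ together with the composition law) to a trace-norm contraction statement on the enlarged Hilbert space $\mathcal{H}\otimes\mathcal{H}$, by combining Definition \ref{def-Marko} with Theorem \ref{theoCPTcontractions} applied to the dilated maps $\mathcal{E}_{(t_2,t_1)}\otimes\mathds{1}$. The key observation is that $\mathcal{E}$ is completely positive precisely when $\mathcal{E}\otimes\mathds{1}$ is positive (with the ancilla of the same dimension as the system, which for a finite-dimensional $\mathcal{H}$ is exactly the statement recalled after Eq. \eqref{CP}); so complete positivity of $\mathcal{E}_{(t_2,t_1)}$ is equivalent to the trace-norm contraction of $\mathcal{E}_{(t_2,t_1)}\otimes\mathds{1}$ on all Hermitian $\tilde\Delta$ on $\mathcal{H}\otimes\mathcal{H}$, via Theorem \ref{theoCPTcontractions} (noting $\mathcal{E}_{(t_2,t_1)}\otimes\mathds{1}$ is trace preserving whenever $\mathcal{E}_{(t_2,t_1)}$ is).

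First I would prove the ``only if'' direction. Assume $\{\mathcal{E}_{(t_2,t_1)}\}$ is Markovian. Then for each pair $t_2\geq t_1$ the map $\mathcal{E}_{(t_2,t_1)}$ is completely positive and trace preserving, hence $\mathcal{E}_{(t_2,t_1)}\otimes\mathds{1}$ is positive and trace preserving on operators over $\mathcal{H}\otimes\mathcal{H}$. Applying Theorem \ref{theoCPTcontractions} to this map gives $\bigl\|[\mathcal{E}_{(t_2,t_1)}\otimes\mathds{1}](\tilde\Delta)\bigr\|_1\leq\|\tilde\Delta\|_1$ for every Hermitian $\tilde\Delta$ on $\mathcal{H}\otimes\mathcal{H}$, which is the claimed inequality. (Here the composition law plays no role in this direction; it is needed only to guarantee that the individual two-point maps $\mathcal{E}_{(t_2,t_1)}$ are defined and are the objects to which the criterion is applied.)

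For the ``if'' direction, suppose the contraction inequality holds for all $t_2\geq t_1$ and all Hermitian $\tilde\Delta$ on $\mathcal{H}\otimes\mathcal{H}$. Fix a pair $t_2\geq t_1$. Since $\mathcal{E}_{(t_2,t_1)}$ is trace preserving by hypothesis, so is $\mathcal{E}_{(t_2,t_1)}\otimes\mathds{1}$; together with the assumed trace-norm contraction, Theorem \ref{theoCPTcontractions} (converse direction) yields that $\mathcal{E}_{(t_2,t_1)}\otimes\mathds{1}$ is a positive map. By the dimension-counting fact recalled after Eq. \eqref{CP} — positivity of $\mathcal{E}\otimes\mathds{1}$ with an ancilla of dimension $\dim\mathcal{H}$ already implies complete positivity — it follows that $\mathcal{E}_{(t_2,t_1)}$ is completely positive. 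This holds for every $t_2\geq t_1$, and the composition law \eqref{CompositionLaw} is part of the definition of the evolution family under discussion, so $\{\mathcal{E}_{(t_2,t_1)}\}$ satisfies all requirements of Definition \ref{def-Marko} and is Markovian.

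The only subtlety — and the main point to state carefully rather than a genuine obstacle — is the equivalence between ``positive on $\mathcal{H}\otimes\mathcal{H}$'' and ``completely positive'': one must invoke that for finite-dimensional $\mathcal{H}$ it suffices to test positivity of $\mathcal{E}\otimes\mathds{1}$ on a single ancilla of matching dimension (the standard Choi argument, already quoted in the text), so that enlarging the ancilla further adds nothing. With that in hand the theorem is an immediate corollary of Theorem \ref{theoCPTcontractions} applied level-by-level to the dilated maps, exactly parallel to how Theorem \ref{theoremClassi-contraction} characterized classical divisibility; in the infinite-dimensional case one would instead phrase the hypothesis directly in terms of all finite-dimensional ancillas, citing \cite{Paulsen}.
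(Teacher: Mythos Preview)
Your proposal is correct and follows the same approach as the paper: both reduce the statement to Theorem \ref{theoCPTcontractions} applied to the dilated map $\mathcal{E}_{(t_2,t_1)}\otimes\mathds{1}$, using that complete positivity of $\mathcal{E}_{(t_2,t_1)}$ is equivalent to positivity of $\mathcal{E}_{(t_2,t_1)}\otimes\mathds{1}$ on $\mathcal{H}\otimes\mathcal{H}$. In fact your write-up is more complete than the paper's one-sentence proof, which only spells out the ``only if'' direction explicitly; your careful treatment of the converse via the Choi dimension-counting fact is exactly what the paper leaves implicit.
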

\begin{proof} Since for a quantum Markovian process $\mathcal{E}_{(t_2,t_1)}$ is completely positive for any $t_2\geq t_1$, the map $\mathcal{E}_{(t_2,t_1)}\otimes\mathds{1}$ is positive, and the results follows from Theorem \ref{theoCPTcontractions}.
\end{proof}

\begin{figure}[t]
\begin{center}
\includegraphics[width=0.7\textwidth]{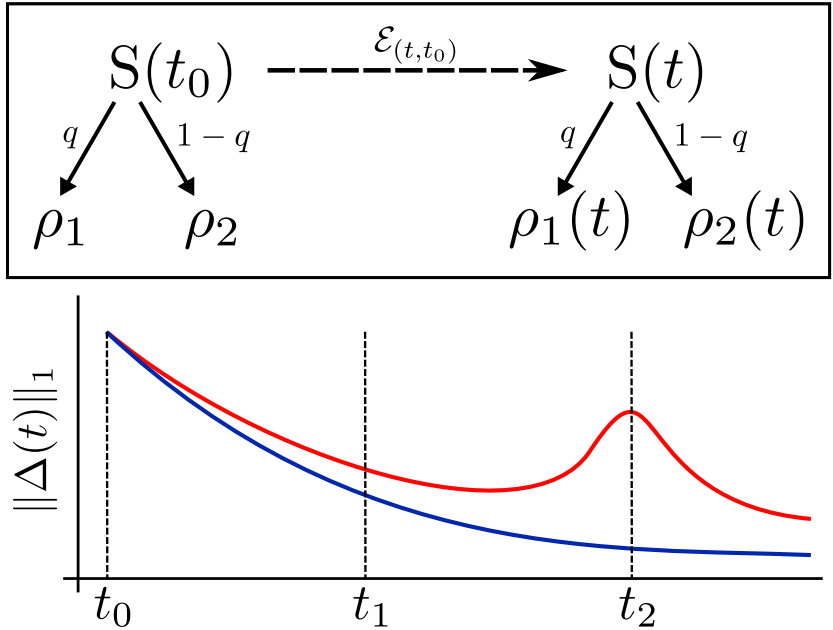}
\end{center}
\caption{Illustration of the quantum two-state discrimination problem. Under Markovian dynamics (blue line) the trace norm of the Helstrom matrix $\Delta(t)=q\rho_1(t)-(1-q)\rho_2(t)$ decreases monotonically from its initial value at $t_0$. On the contrary, for a non-Markovian dynamics (red line) there exist revivals at some time instants, say $t_2$, where the trace norm of $\Delta(t)$ is larger than the previously attained values, for example at $t_1$. Thus, the memoryless property of a Markovian process implies that information about the initial state is progressively lost as time goes by. That is not the case for non-Markovian dynamics.}
\end{figure}

Therefore, similarly to the classical case, a quantum Markovian process increases monotonically the averaged probability $\mathbb{P}_{\rm min}({\rm fail})$, Eq. \eqref{minimization}, to give the wrong answer in one-shot two-state discrimination problem. More concretely, consider a quantum system ``${\rm S}$'' which evolves from $t_0$ to the current time instant $t_1$, through some dynamical map $\mathcal{E}_{(t,t_0)}$. This system ${\rm S}$ was prepared at $t_0$ in the state $\rho_1$ with probability $q$ and $\rho_2$ with probability $(1-q)$, and we aim at guessing which state was prepared by performing a measurement on ${\rm S}$ at the present time $t_1$. If the dynamics $\mathcal{E}_{(t,t_0)}$ is Markovian the best we can do is to measure at the present time $t_1$, however for non-Markovian processes it may be better to wait for some posterior time $t_2>t_1$ where the trace norm of the Helstrom matrix $\Delta(t)=\mathcal{E}_{(t,t_0)}(\Delta)=q\rho_1(t)-(1-q)\rho_2(t)$ increases with respect to its value at time $t_1$ (see illustration in figure 2).

Moreover, the same result applies if we make measurements including a (and arbitrary dimensional) static ancillary system ``$\rm A$'', in such a way that the evolution of the enlarged system ``${\rm S+A}$'' is given by $\mathcal{E}_{(t_2,t_1)}\otimes\mathds{1}$. That is not the case for a P-divisible dynamics where just the positivity of $\mathcal{E}_{(t_2,t_1)}$ is required instead of the complete positivity.

The fact that for a quantum non-Markovian process $\mathbb{P}_{\rm min}({\rm fail})$ can decrease for some time period may again be interpreted as a signature of the underlying memory in the process. The system seems to retain information about its initial state, that arises at some posterior time $t_2$.

\subsubsection{Memoryless environment}\label{sec-collision}

A different way to visualize the memoryless properties characteristic of a quantum Markovian process is in the context of system-environment interactions. Since for a closed quantum system the evolution is given by some two-parameter family of unitary operators $U(t_1,t_0)$, which fulfill $U(t_2,t_0)=U(t_2,t_1)U(t_1,t_0)$ (see for instance \cite{GP90}), the evolution of a closed quantum system is trivially Markovian. However, the situation changes regarding the time evolution of open quantum systems. Despite the most studied models to describe such a dynamics result in Markovian master equations of the form of Eq. \eqref{diffMarkov} \cite{Davies,RevKoss,AlickiLendi87,GardinerZoller04}, it is well-known that the exact dynamics of an open quantum system is essentially non-Markovian \cite{Daviesbook76,BrPe02,Fain02,Libro}.

To illustrate in what sense Markovian dynamics are memoryless in this context, consider the collisional model in the formulation proposed in \cite{Ziman1,Ziman2,Ziman3}, which is depicted in figure 3. In this model, the interaction between system and environment is made up of a sequence of individual collisions at times $t_1, t_2,\ldots, t_n$. Each collision produces a change in the state of the system $\rho_{\rm S}$ given by
\begin{equation}
\rho_{\rm S}(t_{n+1})=\Tr_{\rm E} [U(t_{n+1},t_n)\rho_{\rm S}(t_{n})\otimes\rho_{\rm E}U^\dagger(t_{n+1},t_n)]=\mathcal{E}_{(t_{n+1},t_n)}[\rho_{\rm S}(t_{n})],
\end{equation}
where $\rho_{\rm E}$ is the state of the environment assumed to be the same for every collision, and $U(t_{n+1},t_n)$ is a unitary operator describing the system-environment  interaction. Moreover, $\mathcal{E}_{(t_{n+1},t_n)}(\cdot)=\sum_{ij} K_{ij}(\cdot)K_{ij}^\dagger$ is a completely positive map whose Kraus operators are given by $K_{ij}=\sqrt{p_j^{\rm E}}\langle\phi^i_{\rm E}|U(t_{n+1},t_n)|\phi^j_{\rm E}\rangle$, for $\rho_{\rm E}=\sum_j p_j^{\rm E}|\phi^j_{\rm E}\rangle\langle\phi^j_{\rm E}|$. The successive concatenations of these collisions lead to a quantum Markovian process. Indeed, if we write
\begin{equation}
\rho_{\rm S}(t_{n+2})=\mathcal{E}_{(t_{n+2},t_n)}[\rho_{\rm S}(t_{n})],
\end{equation}
as
\begin{equation}
\rho_{\rm S}(t_{n+2})=\mathcal{E}_{(t_{n+2},t_{n+1})}[\rho_{\rm S}(t_{n+1})]=\mathcal{E}_{(t_{n+2},t_{n+1})}\mathcal{E}_{(t_{n+1},t_{n})}[\rho_{\rm S}(t_{n})],
\end{equation}
we conclude that
\begin{equation}
\mathcal{E}_{(t_{n+2},t_n)}=\mathcal{E}_{(t_{n+2},t_{n+1})}\mathcal{E}_{(t_{n+1},t_{n})},
\end{equation}
and since $\mathcal{E}$ are completely positive maps the process is Markovian. In addition, if the limit $\max_n |t_{n+1}-t_n|\rightarrow0$ does exist, it is possible to obtain equations with the form of \eqref{diffMarkov} for these models \cite{Ziman2,Ziman3}.

\begin{figure}[t]
\begin{center}
\includegraphics[width=0.9\textwidth]{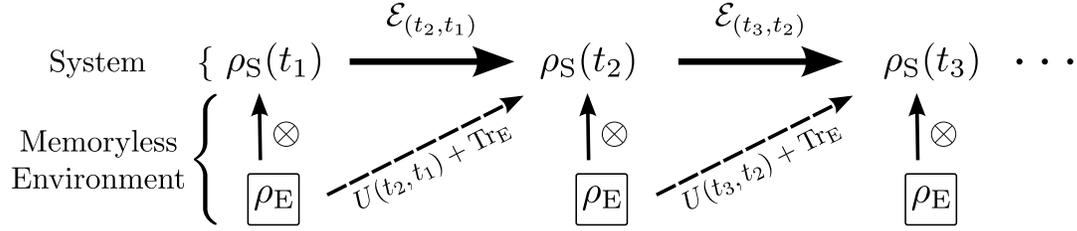}
\end{center}
\caption{Schematic action of a memoryless environment as described by a collisional model. At every time $t_n$ the system interacts with the environmental state $\rho_E$ via some unitary operation $U$. At the following time step $t_{n+1}$, the system finds again the environment in the same state $\rho_E$, forgetting any correlation caused of the previous interaction at $t_n$. Provided the limit $\max_{n}|t_{n+1}-t_{n}|\rightarrow0$ is well-defined, this process can be seen as a discrete version of quantum Markovian dynamics.}
\end{figure}

Notably, any Markovian dynamics can be seen as a collisional model like this. This is a consequence of the following theorem \cite{NC00,AlickiLendi87,Libro}.

\begin{theorem}[Stinespring \cite{Stinespring55}] \label{Stinespring} A completely positive dynamics $\mathcal{E}(\rho_{\rm S})$ can be seen as the reduced dynamics of some unitary evolution acting on an extended state with the form $\rho_{\rm S}\otimes\rho_{\rm E}$, where $\rho_{\rm E}$ is the same independently of $\rho_{\rm S}$.
\end{theorem}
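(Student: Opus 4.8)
The plan is to build the required unitary explicitly from a Kraus decomposition of the completely positive, trace-preserving map $\mathcal{E}$. First I would invoke the Kraus representation, already mentioned in the excerpt: since $\mathcal{E}$ is completely positive, one can write $\mathcal{E}(\rho_{\rm S})=\sum_{j=1}^{N} K_j\,\rho_{\rm S}\,K_j^\dagger$ for some finite family of operators $K_j$ acting on $\mathcal{H}_{\rm S}$, and trace preservation forces $\sum_j K_j^\dagger K_j=\id$. This normalization is exactly the ingredient that will make the map I construct isometric.

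The key construction is as follows. Introduce an environment Hilbert space $\mathcal{H}_{\rm E}$ of dimension $N$ with a fixed orthonormal basis $\{|\phi_j\rangle\}_{j=1}^N$, and single out one fixed unit vector $|\phi_0\rangle\in\mathcal{H}_{\rm E}$ (for instance $|\phi_0\rangle:=|\phi_1\rangle$), setting $\rho_{\rm E}:=|\phi_0\rangle\langle\phi_0|$, which is manifestly independent of $\rho_{\rm S}$. Define a linear operator $V:\mathcal{H}_{\rm S}\to\mathcal{H}_{\rm S}\otimes\mathcal{H}_{\rm E}$ by
\begin{equation}
V|\psi\rangle:=\sum_{j=1}^{N} K_j|\psi\rangle\otimes|\phi_j\rangle .
\end{equation}
Then $\langle V\psi|V\psi'\rangle=\sum_{j}\langle\psi|K_j^\dagger K_j|\psi'\rangle=\langle\psi|\psi'\rangle$, so $V$ is an isometry from $\mathcal{H}_{\rm S}$ into the subspace $\mathcal{H}_{\rm S}\otimes|\phi_0\rangle$... more precisely $V$ maps $\mathcal{H}_{\rm S}\cong\mathcal{H}_{\rm S}\otimes|\phi_0\rangle$ isometrically into $\mathcal{H}_{\rm S}\otimes\mathcal{H}_{\rm E}$. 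Next I would extend $V$ to a full unitary $U$ on $\mathcal{H}_{\rm S}\otimes\mathcal{H}_{\rm E}$: the image $V(\mathcal{H}_{\rm S}\otimes|\phi_0\rangle)$ and the subspace $\mathcal{H}_{\rm S}\otimes|\phi_0\rangle$ have the same (finite) dimension, hence their orthogonal complements in $\mathcal{H}_{\rm S}\otimes\mathcal{H}_{\rm E}$ also have equal dimension, so one may pick any unitary identification between these complements and glue it to $V$ to obtain a unitary $U$ with $U\big(|\psi\rangle\otimes|\phi_0\rangle\big)=V|\psi\rangle$ for all $|\psi\rangle$. Finally I would verify the dilation identity: for any $\rho_{\rm S}$,
\begin{equation}
\Tr_{\rm E}\!\big[U(\rho_{\rm S}\otimes\rho_{\rm E})U^\dagger\big]
=\Tr_{\rm E}\!\Big[\sum_{j,k}K_j\rho_{\rm S}K_k^\dagger\otimes|\phi_j\rangle\langle\phi_k|\Big]
=\sum_{j}K_j\rho_{\rm S}K_j^\dagger=\mathcal{E}(\rho_{\rm S}),
\end{equation}
using $\Tr_{\rm E}|\phi_j\rangle\langle\phi_k|=\delta_{jk}$. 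Since $\rho_{\rm E}$ was fixed once and for all, independent of the input, this is exactly the claimed statement.

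The only genuinely delicate point is the existence of the Kraus decomposition itself, i.e. the structure theorem for completely positive maps; but this is precisely the result the paper has already cited (Choi, Jamio{\l}kowski) and explicitly recalled just before Definition \ref{def-Marko}, so it may be assumed. A minor technical wrinkle is the extension of the isometry $V$ to a unitary $U$: in finite dimension this is elementary dimension counting as sketched above, and in infinite dimension one enlarges $\mathcal{H}_{\rm E}$ (e.g. to $\mathcal{H}_{\rm E}\otimes\mathbb{C}^2$ or an ancilla of matching defect indices) so that the two complementary subspaces are isometric — harmless for the statement, which only requires \emph{some} environment. One should also note that the Kraus family can always be taken finite when $\dim\mathcal{H}_{\rm S}<\infty$ (at most $(\dim\mathcal{H}_{\rm S})^2$ operators), so $\mathcal{H}_{\rm E}$ is finite-dimensional and no convergence issues arise.
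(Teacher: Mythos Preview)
Your argument is correct and is the standard textbook construction of the dilation from a Kraus decomposition. Note, however, that the paper does not actually prove this theorem: it merely states it with a citation to Stinespring and to the references \cite{NC00,AlickiLendi87,Libro}, and then immediately uses the result. So there is no ``paper's own proof'' to compare against; your construction is precisely the one found in those cited references (in particular \cite{NC00}), and nothing more is expected here.
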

Thus, since for Markovian evolutions $\mathcal{E}_{(t_2,t_1)}$ exists for all $t_2\geq t_1$ and is completely positive, we may write it as \begin{equation}
\mathcal{E}_{(t_2,t_1)}(\cdot)=\Tr_{\rm E} [U(t_2,t_1)(\cdot)\otimes\rho_{\rm E}U^\dagger(t_2,t_1)],
\end{equation}
where $U(t_2,t_1)$ may depend on $t_2$ and $t_1$, but $\rho_{\rm E}$ can be taken to be independent of time. Hence, Markovian evolutions may be thought to be made up of a sequence of memoryless collisions, where the environmental state is the same and the total state of system and environment is uncorrelated in every collision as if there were no previous interaction (figure 3). Note that this does not mean that we must impose system and environment to be uncorrelated at any time to get Markovian evolutions \cite{Libro}. Actually the total state of system and environment may be highly correlated even for dynamics leading to Markovian master equations \cite{Rivas09}. Rather, the conclusion is that the obtained evolution may also be thought as the result of memoryless system-environment infinitesimal collisions.

Interestingly, this kind of collisional models can be adapted to simulate non-Markovian dynamics by breaking the condition of uncorrelated collisions \cite{VirmaniPlenio1,VirmaniPlenio2,Ziman4,CollisionnoMarkov,Diosi}.

\subsection{Comparison with other definitions of quantum Markovianity}\label{sec:comparison}
Our approach which is based on the divisibility property is not the unique approach to non-Markovianity and indeed, alternative approaches are being pursued in the literature. Before moving on, it is therefore worthwhile to dedicate a brief section to present these alternative definitions of quantum Markovianity, to comment on these alternative lines of research and to refer the reader to the most relevant literature.

\begin{itemize}
\item Semigroup Definition. Historically, the absence of memory effects in quantum dynamics was commonly associated to the formulation of differential dynamical equations for $\rho(t)$ with time-independent coefficients. In contrast, differential equations with time-dependent coefficients or integro-differential equations were linked to non-Markovian dynamics (see for instance \cite{BrPe02,Shibata,Wilkie,Barnett,Royer,Budini,Daffer04,Lee,ShabaniLidar,ManiscalcoSola1,ManisPetru,Breuer-Vacchini,Koss-Rebo,ChruKosPas}, and references therein). From this point of view, Markovian evolutions would be given only by quantum dynamical semigroups \cite{AlickiLendi87}, i.e. families of trace preserving and completely positive maps, $\mathcal{E}_{\tau}$, fulfilling the condition
\begin{equation}\label{semigroupLaw}
\mathcal{E}_{\tau}\mathcal{E}_{\sigma}=\mathcal{E}_{\tau+\sigma}, \quad \tau,\sigma\geq 0.
\end{equation}
It should be noted however, that this definition does not coincide with the definition adopted in this review and, in our view, suffers from certain drawbacks. The semigroup law Eq. \eqref{semigroupLaw} is just a particular case of the two-parameter composition law Eq. \eqref{CompositionLaw}, which encompasses the case of time-inhomogeneous Markovian processes. In other words, this approach does not distinguish between non-Markovian and Markovian equations of motion with time-dependent coefficients. Moreover, this problem persists in the classical limit.

\item Algebraic Definition. In the 1980s a rigorous definition of quantum stochastic process was introduced by using the algebraic formulation of quantum mechanics \cite{AccFriLew,Lewis}. It is difficult to summarize in a few words those results, but we will try to sketch the main idea for those amongst the readership that are familiar with $C^{\ast}-$algebras. In this context, a quantum stochastic process on a $C^{\ast}-$algebra $\mathcal{A}$ with values in a $C^{\ast}-$algebra $\mathcal{B}$ is defined by a family $\{j_t\}_{t\geq0}$ of $\ast-$homomorphism $j_t:\mathcal{B}\rightarrow \mathcal{A}$. To define a Markov property two ingredients are necessary.  The first one is the following sub-algebra of $\mathcal{A}$,
\begin{equation}
\mathcal{A}_{t]}=\vee\{j_s(b):b\in\mathcal{B},s\leq t\}
\end{equation}
which is called a \emph{past filtration} or a \emph{filtration} \cite{Fagnola}. Here the symbol $\vee S$ denotes the  $C^{\ast}-$algebra generated by the subset $S$ of $\mathcal{A}$. The second one is the introduction of the concept of conditional expectation on $\mathcal{A}$ \cite{Umegaki,Takesaki,Petz,Fagnola}, which is a generalization of the usual conditional expectation, see Eq. \eqref{ConditionalMarkov}, to non-commutative algebras. Mathematically, a \emph{conditional expectation} of $\mathcal{A}$ on a sub-algebra $\mathcal{A}_0\subset\mathcal{A}$ is a linear map
\begin{equation}
\mathbb{E}[\ \cdot\ |\mathcal{A}_0]:\mathcal{A} \rightarrow \mathcal{A}_0,
\end{equation}
satisfying the properties:
\begin{enumerate}
\item For $a\in\mathcal{A}$, $\mathbb{E}[a|\mathcal{A}_0]\geq0$ whenever $a\geq0$.
\item $\mathbb{E}[\mathbb{I}|\mathcal{A}_0]=\mathbb{I}$.
\item For $a_0\in \mathcal{A}_0$ and $a\in \mathcal{A}$, $\mathbb{E}[a_0a|\mathcal{A}_0]=a_0\mathbb{E}[a|\mathcal{A}_0]$.
\item For $a\in\mathcal{A}$, $\mathbb{E}[a^\ast|\mathcal{A}_0]=(\mathbb{E}[a|\mathcal{A}_0])^\ast$.
\end{enumerate}
Thus, the stochastic process $\{j_t\}_{t\geq0}$ is said to be Markovian if for all $s, t\geq0$ and all $X\in\mathcal{A}_{0]}$ a condition analogous to Eq. \eqref{ConditionalMarkov} is fulfilled,
\begin{equation}\label{def-Marko-Alg}
\mathbb{E}[j_{t+s}(X)|\mathcal{A}_{s]}]=\mathbb{E}[j_{t+s}(X)|j_s(\mathcal{A}_{0]})].
\end{equation}

We will not go into further details here. What is important for our purposes is that, on the one hand,  Accardi, Frigerio and Lewis proved in their seminal paper \cite{AccFriLew} that this definition of Markovian process implies our Definition \ref{def-Marko} (rewritten in the Heisenberg picture). On the other hand, the opposite problem, namely to prove that any Markovian evolution according to Definition \ref{def-Marko} is also Markovian according to \eqref{def-Marko-Alg} requires a technically complicated step known as the \emph{dilation problem} (see \cite{dilation1} and references therein). That is quite closely related to what was explained informally in Section \ref{sec-collision}, but we do not enter into details here. Fortunately, under well-chosen and reasonable conditions (boundedness of operators, fulfilment of Lipschitz conditions, etc.) \cite{dilation1,dilation2,dilation3,dilation4,dilation5}, it is possible to prove that Definition \ref{def-Marko} also implies \eqref{def-Marko-Alg}. Therefore, within the scope this paper, i.e. finite dimensional systems, we can consider the algebraic definition of Markovianity to be essentially equivalent to the one given here in terms of the divisibility condition.

\item BLP Definition. Recently, Breuer, Laine and Piilo (BLP) proposed a definition of non-Markovian dynamics in terms of the behavior of the trace distance \cite{BrLaPi1,BrLaPi2,BreuerReview}. Concretely, they state that a quantum evolution, given by some dynamical map $\mathcal{E}_{(t,t_0)}$, is Markovian if the trace distance between any two initial states $\rho_1$ and $\rho_2$ decreases monotonically with time. This definition is a particular case of Definition \ref{def-Marko}. As was explained in Section \ref{sectionQcontraction}, for any Markovian dynamics $\mathcal{E}_{(t,t_0)}$ and Hermitian operator $\tilde{\Delta}$ acting on $\mathcal{H}\otimes\mathcal{H}$, $\|\left[\mathcal{E}_{(t,t_0)}\otimes\mathds{1}\right](\tilde{\Delta})\|_1$ monotonically decreases with time, and so does $\|\mathcal{E}_{(t,t_0)}(\Delta)\|_1$ for any Hermitian $\Delta$ acting on $\mathcal{H}$. Concretely, for $\Delta=\tfrac{1}{2}(\rho_1-\rho_2)$, which corresponds to the unbiased case in the two-state discrimination problem $q=1/2$, the property
\begin{equation}
\|\mathcal{E}_{(t_2,t_0)}(\Delta)\|_1\leq\|\mathcal{E}_{(t_1,t_0)}(\Delta)\|_1, \quad t_2\geq t_1,
\end{equation}
reduces to the BLP definition; this is, for all $\rho_1$ and $\rho_2$,
\begin{equation}\label{BPLTrace}
\|\rho_1(t_2)-\rho_2(t_2)\|_1\leq\|\rho_1(t_1)-\rho_2(t_1)\|_1, \quad t_2\geq t_1.
\end{equation}
However, the reverse implication fails to hold, i.e. not every dynamics fulfilling Eq. \eqref{BPLTrace} satisfies Theorem \ref{contractionMarkov} (e.g. \cite{Br-Q-Bos5,PRAPolonia,DarekFilip}). Thus, we believe that it is more appropriate to consider the BLP definition as a particular case which arises in the study of memory properties in unbiased two-state discrimination problems. Note that the apparent lack memory in an unbiased case does not imply a general memoryless property; it only manifested in a general biased case (and taking into account possible ancillary systems). Nevertheless, from equation \eqref{BPLTrace}, it is possible to construct a very useful witness of non-Markovianity as we will see in Section \ref{secTracedistance}.
\end{itemize}
Remarkably, the previous different definitions of quantum Markovianity satisfy a hierarchical relation with our Definition \ref{def-Marko} based on the divisibility condition. That is sketched in figure 4.
\begin{figure}[h]
\begin{center}
\includegraphics[width=0.6\textwidth]{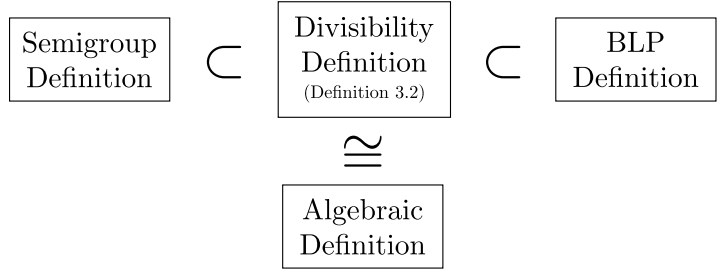}\label{fig4}
\end{center}
\caption{Relation between divisibility, semigroup, algebraic and BLP definitions of quantum Markovianity. The divisibility definition is essentially equivalent to the algebraic one (see main text). In addition, any Markovian dynamics according to the semigroup definition, it is also Markovian according to the divisibility definition, and hence Markovian according to BLP definition. However the converse implication does not hold.}
\end{figure}
\begin{itemize}
\item Markovianity in microscopic derivations. When deriving evolution equations for open quantum systems from microscopic models, the adjective ``Markovian'' is widely used to design master equations obtained under the so-called ``Born-Markov'' approximation. More concretely, if $\rho(t)$ is the state of the open quantum system, the Born approximation truncates the perturbative expansion in the interaction Hamiltonian, $V=\sum_i A_i\otimes B_i$, at first non-trivial order. This leads to some differential equation of the form \cite{vanKampen07,BrPe02,Libro,Weiss08,GardinerZoller04,Fain02,WisemanMilburn,Whitney08}:
    \begin{equation}\label{Bloch-Redfield}
    \frac{d\rho(t)}{dt}=-{\rm i}[H_S,\rho(t)]+\sum_{i,j} \Omega_{ij}(t)\rho(t)A_j+A_j\rho(t)\Omega_{ij}^\dagger(t)-A_j\Omega_{ij}(t)\rho(t)-\rho(t)\Omega_{ij}^\dagger(t)A_j.
    \end{equation}
    where $H_S$ stands for the free Hamiltonian of the open system, and
    \begin{equation}
    \Omega_{ij}(t)=\int_0^t ds C_{ij}(s){\rm e}^{-{\rm i}H_Ss}A_i{\rm e}^{{\rm i}H_Ss}.
    \end{equation}
    Here, $C_{ij}(s)=\Tr(B_j{\rm e}^{-{\rm i}H_Bs}B_i{\rm e}^{{\rm i}H_Bs}\rho_B)$ are the correlation functions of the bath, which is in the state $\rho_B$ and has free Hamiltonian $H_B$. Eq. \eqref{Bloch-Redfield} is sometimes called Bloch-Redfield equation (e.g. \cite{Whitney08}). Now, if the correlation functions of the bath $C_{ij}(s)$ are narrow in comparison to the typical time scale of $\rho(t)$ due to $V$, the upper limit in the integral of $\Omega_{ij}(t)$ can be safely extended to infinity. This conforms what is sometimes called ``Markov'' approximation.

    Two comments are pertinent regarding the connection of these dynamical equations with the Markovian processes as defined this work. Firstly, despite the fact that the ``Born-Markov'' approximation leads to master equations with time-independent coefficients, they do not always define a valid quantum dynamical semigroup \cite{DumckeandSpohn,ZhaoChen02}. This is because they break complete positivity. Thus, these models should not be referred as ``Markovian'' in strict sense, as a Markovian processes must preserve the positivity of any state, or any probability distribution in the classical limit. Secondly, if the ``Born-Markov'' approximation is combined with the secular approximation (i.e. neglecting fast oscillating terms in the evolution equation) a valid quantum dynamical semigroup is obtained \cite{Davies,RevKoss,AlickiLendi87,BrPe02,Libro}, and then the dynamics can be certainly called Markovian. However, the fact that the ``Born-Markov-secular'' approximation generates Markovian dynamics, should not be understood as the only framework to obtain Markovian dynamics.

\end{itemize}

Further to this short summary of definitions for quantum non-Markovianity different from Definition \ref{def-Marko}, the reader may also find proposals based on the behavior of multi-time correlation functions \cite{LindbladDefinition,Ines,Petruccione}, initial-time-dependent generators \cite{t0definition1,t0definition2,t0definition3}, or properties of the asymptotic state \cite{ChruKosPas}. See also \cite{Diosi1,Diosi2,Diosi3,Gaspard} for a definition of non-Markovianity in the context of stochastic Schr\"odinger equations.

\section{Measures of Quantum non-Markovianity} \label{section:measures}

After introducing the concept of quantum non-Markovianity in previous sections, we may ask about its quantification in terms of suitable measures and its detection in actual experiments. As we shall see, recently there have been several developments towards these goals, and we shall present them separately. Thus, the present section is devoted to the quantification problem whereas the detection of non-Markovian dynamics by witnesses is left to Section \ref{section:witness}.

In order to quantify non-Markovianity, the so-called \emph{measures of non-Markovianity} are introduced. Basically, a measure of non-Markovianity is a function which assigns a number (positive or zero) to each dynamics, in such a way that the zero value is obtained if and only if the dynamics is Markovian. We will also use the name \emph{degree of non-Markovianity} for a normalized measure of non-Markovianity, with values between 0 and 1, although other normalizations may eventually be taken.

\subsection{Geometric measures}
Consider a dynamical map $\mathcal{E}_{(t,t_0)}$ describing the evolution from some initial time $t_0$. A first attempt to formulate a measure of non-Markovianity may be a distance-based approach. Here the measure of non-Markovianity is expressed as a distance between $\mathcal{E}_{(t,t_0)}$ and  its closest Markovian dynamics (see figure 5). Specifically, let $\mathfrak{M}$ denote the set of all Markovian dynamics, and $\mathcal{D}(\mathcal{E}_1,\mathcal{E}_2)\in[0,1]$ be some (normalized) distance measure in the space of dynamical maps. We define the \emph{geometric non-Markovianity} at time $t$ as
\begin{equation}
\mathcal{N}^{\rm geo}_t[\mathcal{E}_{(t,t_0)}]:=\min_{\mathcal{E}^M\in\mathfrak{M}}\mathcal{D}[\mathcal{E}_{(t,t_0)},\mathcal{E}^M_{(t,t_0)}],
\end{equation}
which is zero if and only if $\mathcal{E}_{(t,t_0)}$ belongs to the set of Markovian dynamics $\mathfrak{M}$.

The geometric measure of non-Markovianity in some time interval $I$ may be defined as the maximum value of the geometric non-Markovianity for $t\in I$,
\begin{equation}\label{geoDegree}
\mathscr{D}_{\rm NM (g)}^I:=\max_{t\in I}\mathcal{N}^{\rm geo}_t[\mathcal{E}_{(t,t_0)}].
\end{equation}
This quantity lies between 0 and 1 and is positive if an only if the process is non-Markovian, therefore it is a degree of non-Markovianity.

\begin{figure}
\begin{center}
\includegraphics[width=0.7\textwidth]{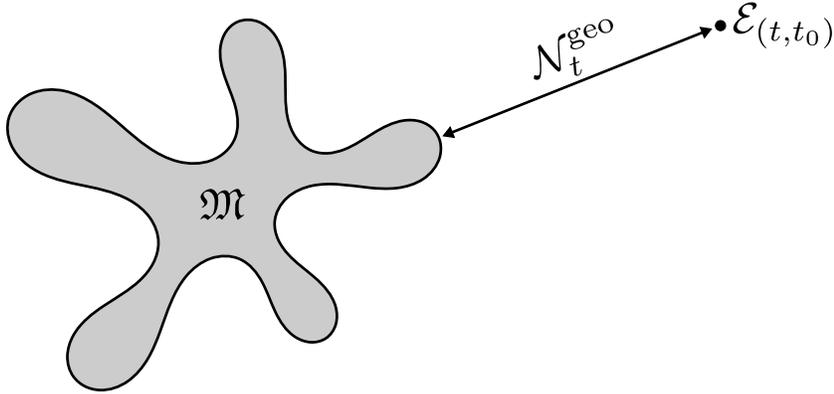}
\end{center}
\caption{Illustration of the geometric measure of non-Markovianity. At each $t$, $\mathcal{N}^{\rm geo}_t[\mathcal{E}_{(t,t_0)}]$ measures the distance between the map $\mathcal{E}_{(t,t_0)}$ and the non-convex set of Markovian maps $\mathfrak{M}$. For a time interval $t\in I$, $\mathscr{D}_{\rm NM (g)}^I$ in Eq. \eqref{geoDegree} is the maximum of every value of $\mathcal{N}^{\rm geo}_t[\mathcal{E}_{(t,t_0)}]$ for $t\in I$.}
\end{figure}

Despite the conceptually clear meaning of $\mathscr{D}_{\rm NM (g)}^I$, it suffers from an important drawback, as it is very hard to compute in practice because of the involved optimization process. In fact, note that the set of Markovian maps $\mathfrak{M}$ is non-convex \cite{Wolf2}, which makes the problem computationally intractable as the dimension of the system grows.

This approach was originally proposed by Wolf and collaborators \cite{Wolf1,Wolf2} to quantify the non-Markovianity of a quantum channel. A quantum channel is a completely positive and trace preserving map $\mathcal{R}$ acting on the set of quantum states. Then, $\mathcal{R}$ is said to be Markovian if it is the ``snapshot of some Markovian dynamics'', i.e. there exists some Markovian dynamics specified by the family of maps $\{\mathcal{E}_{(t,t_0)}, t\geq t_0\}$ such that $\mathcal{R}=\mathcal{E}_{(t_1,t_0)}$ for some $t_1\geq t_0$. Those authors put forward the aforementioned practical problems of the geometric measure of non-Markovianity and introduced an alternative measure (see also \cite{DavidWolf}). Later on, the problem to decide whether a quantum channel is Markovian was shown to be very hard in complexity theory sense \cite{Wolf3,Wolf4}, however it has been analyzed for small-size systems in \cite{BrLaPi2} and \cite{Helm}. For a review about quantum channels with memory see \cite{FilippoReview}.

\subsection{Optimization of the Helstrom matrix norm}\label{section:OptimizationHelstrom}

Another approach to quantify non-Markovianity is based on the result of Theorem \ref{contractionMarkov}. Recall that the trace norm of a Helstrom matrix $\Delta=q\rho_1-(1-q)\rho_2$ is a measure of the capability to distinguish between the states $\rho_1$ and $\rho_2$ given the outcome of some POVM, see Theorem \ref{BestMesurementT}. Thus, if the dynamics is such that for some $t$ and $\epsilon>0$, $\|\Delta(t)\|_1<\|\Delta(t+\epsilon)\|_1$, the probability to distinguish whether the system was in state $\rho_1$ or $\rho_2$ at time $t_0$, is higher at $t+\epsilon$ than it was a time $t$. As commented in Section \ref{sectionQcontraction}, this phenomenon denotes the existence of memory effects in the dynamics, as an increase of information at time $t+\epsilon$ with respect to $t$ suggests that the system is ``remembering'' its original state at $t+\epsilon$. In fact, the intuitive understanding of the word ``memory'' demands that a memoryless process does not have the property to keep information, and that this always decreases with time.

This revival of information at $t+\epsilon$ may be understood as a positive flow of information from the environment to the system. Thus, for purely Markovian dynamics the flow of information goes always from the system to the environment. However, as pointed out in \cite{Enta-3,Pernice,SaroReview}, this interpretation in terms of information flowing between system and environment may be problematic if taken strictly, because it is possible to obtain quantum non-Markovian dynamics with the form
\begin{equation}
\mathcal{E}_{(t,t_0)}(\rho)=\sum_i p_iU_i(t,t_0)\rho U_i^\dagger(t,t_0).
\end{equation}
This type of evolution can be generated simply by applying randomly the unitary evolutions $U_i(t,t_0)$ in accordance to the probabilities $p_i$. It is a fact that these probabilities can be generated independently of the dynamics of $\rho$ by some random (or pseudo-random) number generator.

On the other hand, since we cannot discard the presence of a decoupled and inert, arbitrary dimensional, ancillary space ``A'' (actually, it is enough to consider $\dim\mathcal{H}_{\rm A}=\dim\mathcal{H}$), we generally take an enlarged Helstrom matrix $\tilde{\Delta}=q\rho_{1{\rm A}}-(1-q)\rho_{2{\rm A}}$, where $\Delta=\Tr_{\rm A}(\tilde{\Delta})$, and $\|\tilde{\Delta}(t)\|_1=\|[\mathcal{E}_{(t,t_0)}\otimes\mathds{1}][\tilde{\Delta}(t_0)]\|_1$. Thus, as an increment of information as accounted for by $\|\tilde{\Delta}\|_1$ denotes non-Markovianity in the dynamics, we can take the maximum of information gained to assess how non-Markovian the evolution is. Explicitly, we may write
\begin{equation}\label{tildesigma}
\tilde{\sigma}(\tilde{\Delta},t):=\frac{d\|\tilde{\Delta}(t)\|_1}{dt}:=\lim_{\epsilon\rightarrow0^+}\frac{\|\tilde{\Delta}(t+\epsilon)\|_1-\|\tilde{\Delta}(t)\|_1}{\epsilon},
\end{equation}
by adding up every increment of information in some interval $I$:
\begin{equation}\label{tildesigmaInt}
\int_{t\in I,\tilde{\sigma}>0}dt\tilde{\sigma}(\tilde{\Delta},t).
\end{equation}
Then by maximizing over the initial Helstrom matrix $\tilde{\Delta}$ (i.e. maximizing over $\rho_{1{\rm A}}$, $\rho_{2{\rm A}}$ and the bias $q$) we define
\begin{equation}\label{NHelstrom}
\mathcal{N}_{\rm H}^I:=\max_{\tilde{\Delta}} \int_{t\in I,\tilde{\sigma}>0}dt\tilde{\sigma}(\tilde{\Delta},t),
\end{equation}
where the subindex ``H'' stands for Helstrom, as a measure of non-Markovianity. In virtue of Theorem \ref{contractionMarkov}, $\mathcal{N}_{\rm H}^I=0$ if and only if the process is Markovian in the interval $I$. The quantity $\mathcal{N}_{\rm H}^I$ can be normalized via exponential or rational functions, for instance $\mathscr{D}_{\rm NM (exp-H)}^I:=1-{\rm e}^{-\mathcal{N}_{\rm H}^I}$ or $\mathscr{D}_{\rm NM (rat-H)}^I:=\mathcal{N}_{\rm H}^I/(1+\mathcal{N}_{\rm H}^I)$.

This proposal was first suggested in \cite{PRAPolonia}. For the unbiased case $q=1/2$ and without taking into account the possible presence of ancillary systems it was previously formulated in \cite{BrLaPi1}. As in the case of geometric measures, the main drawback of the quantity $\mathcal{N}_{\rm H}^I$ is the difficult optimization process, which makes this measure rather impractical. For the restricted case of \cite{BrLaPi1} there has been some progress along this line \cite{BreuerMax,BreuerMax2yExperimento}, see also Section \ref{secTracedistance} and references therein.

\subsection{The RHP measure}

As pointed out, even though the two previous measures enjoy several nice geometric or informational interpretations, they are very difficult to compute in practice. A computationally simpler measure of non-Markovianity was introduced in \cite{nuestro} by Rivas, Huelga and Plenio. Given a family $\{\mathcal{E}_{(t,t_0)}, t\geq t_0\}$, the basic idea of this measure is to quantify how much non-completely positive the intermediate dynamics $\{\mathcal{E}_{(t,t_1)}, t\geq t_1\geq t_0\}$ is for every time $t_1$. To obtain these partitions, by time-continuity we have

\begin{equation}
\mathcal{E}_{(t,t_0)}=\mathcal{E}_{(t,t_1)}\mathcal{E}_{(t_1,t_0)}.
\end{equation}
After right-multiplication with the inverse of $\mathcal{E}_{(t_1,t_0)}$ on both sides we obtain the desired partitions
\begin{equation}\label{intermediateE}
\mathcal{E}_{(t,t_1)}=\mathcal{E}_{(t,t_0)}\mathcal{E}^{-1}_{(t_1,t_0)}.
\end{equation}
If these maps are completely positive (CP) for all $t_1$, the time evolution is Markovian (Definition \ref{def-Marko}). At the moment, we shall assume that $\mathcal{E}^{-1}_{(t_1,t_0)}$ does exist (we will come back to this point later on), so that $\mathcal{E}_{(t,t_1)}$ is well defined by Eq. \eqref{intermediateE}. For non-Markovian dynamics there must be some $t_1$, such that $\mathcal{E}_{(t,t_1)}$ is not completely positive. Therefore by measuring how much the intermediate dynamics $\{\mathcal{E}_{(t,t_1)}, t\geq t_1\geq t_0\}$ depart from completely positive maps, we are measuring up to what extent the time evolution is non-Markovian. Note that $\mathcal{E}_{(t,t_1)}$ is always trace-preserving as it is a composition of two trace-preserving maps.

In order to quantify the degree of non-complete positiveness of the maps $\{\mathcal{E}_{(t,t_1)}, t\geq t_1\geq t_0\}$, we resort to the Choi-Jamio{\l}kowski isomorphism \cite{Choi,Jamiolkowski}. Consider the maximally entangled state between two copies of our system $|\Phi\rangle=\frac{1}{\sqrt{d}}\sum_{n=0}^{d-1}|n\rangle |n\rangle$ (here $d$ denotes the dimension), we associate the map $\mathcal{E}_{(t,t_1)}$ to a (Choi-Jamio{\l}kowski) matrix constructed by the rule
\begin{equation}\label{ChoiMatrix}
\left[\mathcal{E}_{(t,t_1)}\otimes\mathds{1}\right]\left(|\Phi\rangle\langle\Phi|\right).
\end{equation}
The Choi's theorem asserts that $\mathcal{E}_{(t,t_1)}$ is completely positive if and only if the matrix Eq. \eqref{ChoiMatrix} is positive semidefinite. In addition, note that since $\mathcal{E}_{(t,t_1)}$ is trace-preserving, the trace norm of matrix \eqref{ChoiMatrix} provides a measure of the non-completely positive character of $\mathcal{E}_{(t,t_1)}$. More concretely,
\begin{equation}
\left\|\left[\mathcal{E}_{(t,t_1)}\otimes \mathds{1}\right]\left(|\Phi\rangle\langle\Phi|\right)\right\|_1 \left\{\begin{array}{ll}
=1 & \mbox{ iff } \mathcal{E}_{(t,t_1)} \mbox{ is CP},\\
>1 & \mbox{ otherwise.}\end{array}\right.
\end{equation}
Following \cite{nuestro}, we define a function $g(t)$ via the right derivative of the trace norm as
\begin{equation}\label{gfunction}
g(t):=\lim_{\epsilon\rightarrow0^+}\frac{\left\Vert\left[\mathcal{E}_{(t+\epsilon,t)}\otimes\mathds{1}\right]\left(|\Phi\rangle\langle\Phi|\right)\right\Vert_1-1}{\epsilon},
\end{equation}
so that $g(t)>0$ for some $t$ if and only if the evolution is non-Markovian. Therefore the total amount of non-Markovianity in an interval $t\in I$ will be given by
\begin{equation}\label{CJI}
\mathcal{N}_{\rm RHP}^I:=\int_I g(t)dt,
\end{equation}
where ``RHP'' stands for Rivas, Huelga and Plenio \cite{nuestro}.

The quantity $\mathcal{N}_{\rm RHP}^I$ may be normalized via exponential or rational methods, for instance $\mathscr{D}_{\rm NM (exp-RHP)}^I:=1-{\rm e}^{-\mathcal{N}_{\rm RHP}^I}$ or $\mathscr{D}_{\rm NM ( rat-RHP)}^I:=\mathcal{N}_{\rm RHP}^I/(1+\mathcal{N}_{\rm RHP}^I)$. However these normalizations turn out to be not very discriminative. Another way to obtain a more useful normalized measure was proposed in \cite{normalizationChinos}, we explain here a slightly modified method in terms of $g(t)$.

Define the function
\begin{equation}\label{gbarfunction}
\bar{g}(t):=\tanh[g(t)],
\end{equation}
where $g(t)$ is given by Eq. \eqref{gfunction}. Therefore $1\geq\bar{g}(t)\geq0$ with $\bar{g}(t)=0$ for all $t$ if and only if the evolution is Markovian. Then, for a bounded interval $t\in I$ we define its normalized degree of non-Markovianity as
\begin{equation}\label{CJdegree}
\mathscr{D}_{\rm NM (RHP)}^I:=\frac{\int_I \bar{g}(t)dt}{\int_I \chi[\bar{g}(t)]dt}, \quad \text{(with ``$0/0=0$'' by convention),}
\end{equation}
where the indicator function $\chi(x)$ is defined as
\begin{equation}
\chi(x):=\left\{\begin{array}{ll}
0 & \mbox{ if }x=0,\\
1 & \mbox{ otherwise.}\end{array}\right.
\end{equation}
Thus, the degree Eq. \eqref{CJdegree} is basically the non-Markovianity accumulated for each $t\in I$, divided by the total length of the subintervals of $I$ where the dynamics is non-Markovian. It is easy to proof that  $\mathscr{D}_{\rm NM (RHP)}^I$ is normalized. Let $I_n\subset I$ be the collection of subintervals such that $\bar{g}(t)>0$ for $t\in I_n$. If $|I_n|$ denotes the length of the subinterval $I_n$, we have
\begin{equation}
\mathscr{D}_{\rm NM (RHP)}^I=\frac{\int_I \bar{g}(t)dt}{\int_I \chi[\bar{g}(t)]dt}=\frac{\sum_n\int_{I_n} \bar{g}(t)dt}{\sum_n\int_{I_n} \chi[\bar{g}(t)]dt}=\frac{\sum_n\int_{I_n} \bar{g}(t)dt}{\sum_n |I_n|}\leq\frac{\sum_n |I_n| }{\sum_n |I_n|}=1,
\end{equation}
because of the bound $\bar{g}(t)\leq1$.

It is worth mentioning several points that one should keep in mind when evaluating this measure of non-Markovianity.
\begin{enumerate}
\item Note that, in general [see an exception in the point (iii) below], we need to know the complete dynamical map $\{\mathcal{E}_{(t,t_0)}, t\geq t_0\}$ to compute the function $g(t)$. The standard way to obtain it is resorting to process tomography. Thus, one considers the evolution for different final times $t$ of a complete set of states which span the space of dynamical maps. Then the dynamical map is reconstructed by tomography of the evolved final states (see for instance \cite{NC00}). This is the only experimental way to proceed. However if we know the theoretical evolution, for example by mean of some model, there is a trick which sometimes helps. In that case, we may consider directly the evolution of the basis $\{|i\rangle\langle j|\}$ for different final times $t$. We write the resulting matrix as a (column) vector by stacking the columns on top of one another. This process is sometimes called \emph{vectorization} and denoted by ``${\rm vec}$'' \cite{vec1,vec2}. As a result, the dynamical map $\mathcal{E}_{(t,t_0)}$ can be seen as a matrix $\mathbf{E}_{(t,t_0)}$ acting on states written as (column) vectors, and moreover in the basis of $\{|i\rangle\langle j|\}$ such a matrix is given by
    \begin{equation}
    \mathbf{E}_{(t,t_0)}=\big[\bm{v}_{11}(t),\ldots,\bm{v}_{1N}(t),\bm{v}_{21}(t),\ldots,\bm{v}_{2N}(t),\ldots\ldots,\bm{v}_{NN}(t)\big]
    \end{equation}
    where the column vectors are $\bm{v}_{ij}(t)={\rm vec}[|i\rangle\langle j|(t)]$. These are the vectorization of the matrix $|i\rangle\langle j|(t)$, which denotes the matrix obtained by evolving the basis element $|i\rangle\langle j|$ from $t_0$ to $t$.

    Once $\mathbf{E}_{(t,t_0)}$ is known for some interval $t\in I$, we can compute the intermediate dynamics in $I$ accordingly to Eq. \eqref{intermediateE}, $\mathbf{E}_{(t,t_1)}=\mathbf{E}_{(t,t_0)}\mathbf{E}^{-1}_{(t_1,t_0)}$, where $\mathbf{E}^{-1}_{(t_1,t_0)}$ is just the standard matrix inverse. Finally, $g(t)$ can be computed in the following way: first, construct the matrix $U_{2\leftrightarrow 3}[\mathbf{E}_{(t+\epsilon,t)}\otimes\mathbb{I}]U_{2\leftrightarrow 3}$ where $U_{2\leftrightarrow 3}$ is the commutation (or ``swap'') matrix between the ``second'' and the ``third'' subspace \cite{footnoteCommMatrix}; second, apply $U_{2\leftrightarrow 3}[\mathbf{E}_{(t+\epsilon,t)}\otimes\mathbb{I}]U_{2\leftrightarrow 3}$ on ${\rm vec}(|\Phi\rangle\langle\Phi|)$; third, write the result as a matrix, i.e. ``devectorize''; forth, compute the trace norm of that matrix which will correspond to $\left\Vert\left[\mathcal{E}_{(t+\epsilon,t)}\otimes\mathds{1}\right]\left(|\Phi\rangle\langle\Phi|\right)\right\Vert_1$; and finally, evaluate the right limit of Eq. \eqref{gfunction}.

\item It may happen that for some $t_1$ the map $\mathcal{E}_{(t_1,t_0)}$ is not bijective, so that the intermediate map $\mathcal{E}_{(t,t_1)}$ given by Eq. \eqref{intermediateE} is ill-defined. There are several ways to deal with this problem. If the singularity in $t_1$ is isolated, and we know the dynamics in some neighborhood of $t_1$, one can evaluate the function $g(t)$ in this neighboring region of $t_1$. By taking the limit $t\rightarrow t_1$ we usually obtain a divergence, $\lim_{t\rightarrow t_1} g(t)\rightarrow\infty$. However, since the hyperbolic tangent removes the divergence $\lim_{t\rightarrow t_1} \bar{g}(t)=1$, we can compute $\mathscr{D}_{\rm NM (RHP)}^I$ without further problems. Another way to remove the singularity may be to compute indirectly the inverse of $\mathcal{E}_{(t_1,t_0)}$ by finding the inverse of $\mathds{1}\epsilon+\mathcal{E}_{(t_1,t_0)}$, which always exist. Then, at the end of the computation of $\bar{g}(t)$, we proceed by taking the limit $\epsilon\rightarrow0$. Another more sophisticated (and in sometimes inequivalent) method has been proposed by using the Moore-Penrose pseudoinverse \cite{pseudoinverses}. See also \cite{ChinosSingulares,DMaldonado} for other considerations about singularities in dynamical maps.
\item There are cases where we know the dynamics fulfills some linear differential equation,
\begin{eqnarray}
\frac{d\rho(t)}{dt}=\mathcal{L}_t\left[\rho(t)\right]=&-&{\rm i}[H(t),\rho(t)]\\
&+&\sum_k\gamma_k(t)\left[V_k(t)\rho(t)V_k^\dagger(t)-\frac{1}{2}\{V_k^\dagger(t)V_k(t),\rho(t)\}\right]\nonumber,
\end{eqnarray}
where the decay rates may be negative $\gamma_k(t)<0$ for some $t$ and so describes non-Markovian evolutions. Then, there is a very practical way to obtain the function $g(t)$. Since for small enough $\epsilon$ we have \cite{Libro}
\begin{equation}
\mathcal{E}_{(t+\epsilon,t)}=\mathcal{T}\exp\left(\int_t^{t+\epsilon}\mathcal{L}_sds\right)\simeq\exp\left(\mathcal{L}_t \epsilon\right)\simeq\mathds{1}+\epsilon\mathcal{L}_t,
\end{equation}
the function $g(t)$ can be computed directly from the generator $\mathcal{L}_t$:
\begin{equation}\label{gfunctionL}
g(t)=\lim_{\epsilon\rightarrow0^+}\frac{\left\Vert\left[\mathds{1}+\epsilon(\mathcal{L}_t\otimes\mathds{1})\right](|\Phi\rangle\langle\Phi|)\right\Vert_1-1}{\epsilon}.
\end{equation}
\item It is possible to extend the definition Eq. \eqref{CJdegree} to unbounded intervals, typically $I=[t_0,\infty)$. However this extension must be carefully handled. It can be understood as a limiting procedure of bounded intervals $I_n$, such that $\lim_{n\rightarrow\infty}{I_n}=[t_0,\infty)$, for example $I_n=[t_0,n)$. Very crucially this limit has to be taken at the last step in the computation:
    \begin{equation}
    \mathscr{D}_{\rm NM (RHP)}^{[t_0,\infty)}:=\lim_{n\rightarrow\infty}\mathscr{D}_{\rm NM (RHP)}^{I_n}, \quad \mbox{with }\lim_{n\rightarrow\infty}{I_n}=[t_0,\infty).
    \end{equation}
\end{enumerate}

\begin{ex} Consider the following dynamical map of a two-dimensional quantum system (qubit), describing the evolution from $t_0=0$ (without loss of generality), \begin{equation}
\mathcal{E}_{(t,0)}(\rho)=[1-p(t)]\rho+p(t)\sigma_z\rho\sigma_z,\quad \text{where } p(t)\in[0,1],
\end{equation}
and $\sigma_z$ is the Pauli matrix. This dynamics describes the process where the nondiagonal elements (coherences) of $\rho$ change the sign with probability $p(t)$, and with probability $1-p(t)$ the qubit remains in the same state $\rho$. Note that for $p(t)=1/2$, the coherences vanish completely.  Let us compute the function $g(t)$. The first step is to obtain $\mathcal{E}_{(t+\epsilon,t)}$ via Eq. \eqref{intermediateE}. As suggested in the point (i) above, it is useful to employ the ``${\rm vec}$'' operation to obtain the inverse. We have
\begin{equation}
\mathbf{E}_{(t,0)}{\rm vec}(\rho)\equiv{\rm vec}[\mathcal{E}_{(t,0)}(\rho)]=\{[1-p(t)]\mathbb{I}_4+p(t)\sigma_z\otimes\sigma_z\}{\rm vec}(\rho),
\end{equation}
where we have used the property ${\rm vec}(ABC)=(C^{\rm t}\otimes A){\rm vec}(B)$ (cf. \cite{vec1,vec2}), and $\mathbb{I}_k$ stands for the $k\times k$ identity matrix. Therefore,
\begin{equation}
\mathbf{E}_{(t,0)}=\{[1-p(t)]\mathbb{I}_4+p(t)\sigma_z\otimes\sigma_z\}={\rm diag}[1,1-2p(t),1-2p(t),1],
\end{equation}
here ``${\rm diag}(a_1,a_2,\ldots,a_N)$'' denotes the diagonal matrix with entries $a_1,a_2,\ldots,a_N$. Hence,
\begin{equation}
\mathbf{E}_{(t+\epsilon,t)}=\mathbf{E}_{(t+\epsilon,0)}\mathbf{E}^{-1}_{(t,0)}={\rm diag}\left[1,\frac{1-2p(t+\epsilon)}{1-2p(t)},\frac{1-2p(t+\epsilon)}{1-2p(t)},1\right].
\end{equation}
Now, as commented in point (i) above, we have
\begin{equation}\label{EjemploRHP11}
{\rm vec}\left\{\left[\mathcal{E}_{(t+\epsilon,t)}\otimes\mathds{1}\right]\left(|\Phi\rangle\langle\Phi|\right)\right\}=U_{2\leftrightarrow 3}\left[\mathbf{E}_{(t+\epsilon,t)}\otimes\mathbb{I}_4\right]U_{2\leftrightarrow 3}{\rm vec}(|\Phi\rangle\langle\Phi|).
\end{equation}
In this case $U_{2\leftrightarrow 3}=\mathbb{I}_2\otimes \left(\begin{smallmatrix}
1  & 0 & 0 & 0 \\
0  & 0 & 1 & 0 \\
0  & 1 & 0 & 0 \\
0  & 0 & 0 & 1
\end{smallmatrix}\right) \otimes \mathbb{I}_2$ and $|\Phi\rangle=\tfrac{1}{\sqrt{2}}(1,0,0,1)^{\rm t}$, so that after some straightforward algebra, Eq. \eqref{EjemploRHP11} reads
\begin{equation}
{\rm vec}\left\{\left[\mathcal{E}_{(t+\epsilon,t)}\otimes\mathds{1}\right]\left(|\Phi\rangle\langle\Phi|\right)\right\}=\frac{1}{2}\left[1,0,0,\tfrac{1-2p(t+\epsilon)}{1-2p(t)},0,0,0,0,0,0,0,0,\tfrac{1-2p(t+\epsilon)}{1-2p(t)},0,0,1\right]^{\rm t}.
\end{equation}
By ``devectorizing'', i.e. writing this vector as the corresponding $4\times4$ matrix and computing the trace norm we immediately obtain
\begin{equation}
\left\|\left[\mathcal{E}_{(t,t_1)}\otimes \mathds{1}\right]\left(|\Phi\rangle\langle\Phi|\right)\right\|_1=\left|\tfrac{p(t)-p(\epsilon + t)}{1 - 2 p(t)}\right|+\left|\tfrac{1-p(t)-p(\epsilon + t)}{1 - 2 p(t)}\right|.
\end{equation}
Finally, by expanding at first order $p(t+\epsilon)\simeq p(t)+p'(t)\epsilon$, the limit in Eq. \eqref{gfunction} can be easily computed to arrive at
\begin{equation}
g(t)=\left|\tfrac{p'(t)}{1 - 2 p(t)}\right|-\tfrac{p'(t)}{1 - 2 p(t)}=\left\{\begin{array}{ll}
0 & \mbox{ if }  \left(\tfrac{p'(t)}{1 - 2 p(t)}\right)\geq0,\\
-\tfrac{2p'(t)}{1 - 2 p(t)} & \mbox{ if }   \left(\tfrac{p'(t)}{1 - 2 p(t)}\right)<0.\end{array}\right.
\end{equation}
Thus, given the function $p(t)$ and some interval $I$, with this result one immediately calculates $\mathcal{N}_{\rm RHP}^I$ or $\mathscr{D}_{\rm NM (RHP)}^I$.
\end{ex}

As aforementioned, the measure of non-Markovianity $\mathcal{N}_{\rm RHP}$, Eq. \eqref{CJI}, was first introduced in \cite{nuestro}. The more discriminative degree Eq. \eqref{CJdegree} is a variant based on the same ideas as \cite{normalizationChinos}, where the normalization problem was further analyzed.  Examples where this measure is studied can be found in \cite{Br-Q-Bos5,Br-Q-Bos7,Br-Q-Bos17,PinjaPreprint2,FrequencyXu,Anomalous-non-Markovian,Addis2} for qubits coupled to bosonic environments, in \cite{CJ1} for more general spin systems coupled to bosonic environments, in \cite{Br-Q-dlevel1,Br-Q-dlevel2} for qubits coupled to other $d-$level systems, in \cite{HRP,Mauro2ambientes,KnobMarkovianity} for qubits interacting with composite environments, in \cite{PRAPolonia,Vacch-NJP,LoFranco} for classical stochastic dynamics and in \cite{Vacch-NJP} for the so-called semi-Markov quantum processes. In addition, the application of the Choi-Jamio{\l}kowski criterion to study the complete positivity of intermediate dynamics for some specific examples is considered in \cite{CJ2} as well.

\subsection{Decay rates measures}

Since a Markovian dynamics is characterized by generators with the form of Eq. \eqref{diffMarkov}, in \cite{MichaelHall} Hall, Cresser, Li and Andersson proposed a measure of non-Markovianity focused on properties of the generator. Let us consider some dynamical evolution given by its generator,
\begin{eqnarray}\label{ACHgenerator}
\frac{d\rho(t)}{dt}=\mathcal{L}_t\left[\rho(t)\right]=&-&{\rm i}[H(t),\rho(t)]\\
&+&\sum_{k,\ell} c_{k\ell}(t)\left[V_k(t)\rho(t)W_\ell^\dagger(t)-\frac{1}{2}\{W_\ell^\dagger(t)V_k(t),\rho(t)\}\right]\nonumber.
\end{eqnarray}
In order to characterize its non-Markovianity, we may write $\mathcal{L}_t$ in an orthonormal basis $\{G_j\}_{j=0}^{d^2-1}$ with respect to the Hilbert-Schmidt product $\Tr(G_m^\dagger G_n)=\delta_{mn}$. More specifically, in \cite{MichaelHall} it is proposed to use a self-adjoint basis with $G_0=\mathbb{I}/\sqrt{d}$, so that $\{G_j\}_{j=1}^{d^2-1}$ can be taken to be the (normalized) generators of the $\mathfrak{su}(d)$ algebra. Thus, by expanding every operator of the dissipative part of the generator,
\begin{align}
V_k(t)&=\sum_{m}v_{km}(t)G_m,\quad v_{km}(t)=\Tr[G_mV_k(t)],\\
W_k(t)&=\sum_{n}w_{kn}(t)G_n,\quad w_{kn}(t)=\Tr[G_mW_k(t)].
\end{align}
Introducing this in Eq. \eqref{ACHgenerator}, one obtains
\begin{equation}
\mathcal{L}_t\left[\rho(t)\right]=-{\rm i}[H(t),\rho(t)]+\sum_{m,n} \tilde{c}_{mn}(t)\left[G_m\rho(t)G_n-\frac{1}{2}\{G_n G_m,\rho(t)\}\right],
\end{equation}
where $\tilde{c}_{mn}(t)=\sum_{k,\ell}w_{\ell n}^\ast(t) c_{k\ell}(t) v_{km}(t)$ forms a Hermitian matrix, $\tilde{c}_{mn}(t)=\tilde{c}_{nm}^\ast(t)$,  because $\mathcal{L}_t$ preserves the Hermiticity of $\rho$. Therefore, this matrix is diagonalized via some unitary operation, $\tilde{c}_{mn}(t)=\sum_j u_{mj}(t)\gamma_j(t) u_{nj}^\ast(t)$ and the generator can be rewritten in the form
\begin{equation}
\mathcal{L}_t\left[\rho(t)\right]=-{\rm i}[H(t),\rho(t)]+\sum_{j=1}^{d^2-1} \gamma_j(t)\left[L_j(t)\rho(t)L_j^\dagger(t)-\frac{1}{2}\{L_j^\dagger(t) L_j(t),\rho(t)\}\right]
\end{equation}
with $L_j(t)=\sum_m u_{mj}(t)G_m$, keeping orthonormality $\Tr[L_i^\dagger(t)L_j(t)]=\delta_{ij}$. Note that since the eigenvalues $\gamma_j(t)$ are independent of the basis, this form is unique (up to degeneracy). Now, Hall, Cresser, Li and Andersson define some functions of the eigenvalues (\emph{canonical decay rates}) $\gamma_j(t)$,
\begin{equation}\label{fj(t)}
f_j(t):=\max\{-\gamma_j(t),0\}.
\end{equation}
Because of Theorem \ref{KossLindTheo}, every $f_j(t)$ vanishes at any time if and only if the evolution is Markovian. Therefore the functions $f_j(t)$ can be used to construct a measure of non-Markovianity. For example, defining $f(t):=\sum_{j=1}^{d^2-1} f_j(t)$, for a (bounded) time interval $I$,
\begin{equation}
\mathcal{N}_{\gamma}^I:=\int_I f(t)dt,
\end{equation}
is a measure of non-Markovianity. Actually, it can be proven \cite{MichaelHall} that $f(t)=\frac{d}{2}g(t)$ [see Eq. \eqref{gfunction}], so this quantity is proportional to $\mathcal{N}_{\rm RHP}^I$, Eq. \eqref{CJI},
\begin{equation} \label{ACH-RHP}
\mathcal{N}_{\gamma}^I=\frac{d}{2}\mathcal{N}_{\rm RHP}^I.
\end{equation}

Interestingly, this approach also suggest a discrete measure, by computing $F_j^I=\int_I f_j(t)dt$, a \emph{non-Markovianity index} can be defined by the rule
\begin{equation}
\mathcal{N}_{\rm index}^I:=\sum_{j=1}^{d^2-1}\chi(F_j^I),
\end{equation}
i.e. the number of non-zero $F_j^I$'s in the interval $I$.

\begin{ex} Consider the evolution of a qubit given by the following master equation
\begin{align}\label{exACH}
\frac{d \rho(t)}{dt}=\mathcal{L}_t[\rho(t)]=-{\rm i}\omega[\sigma_z,\rho(t)]&+\gamma_{-}(t)\left[\sigma_-\rho(t)\sigma_+-\tfrac{1}{2}\{\sigma_+\sigma_-,\rho(t)\}\right]\nonumber\\
&+\gamma_{z}(t)\left[\sigma_z\rho(t)\sigma_z-\rho(t)\right],
\end{align}
subject to the conditions $\int_{t_0}^t\gamma_{-}(s)ds\geq0$ and $\int_{t_0}^t\gamma_{z}(s)ds\geq0$ to ensure the complete positivity of the dynamical map $\mathcal{E}_{(t,t_0)}$. Let us compute the functions $g(t)$ and $f(t)$. For the first one, we use the formula in terms of the generator $\mathcal{L}_t$, Eq. \eqref{gfunctionL}. By computing the eigenvalues of $\left[\mathds{1}+\epsilon(\mathcal{L}_t\otimes\mathds{1})\right](|\Phi\rangle\langle\Phi|)$ and expanding each of them to the first order in $\epsilon$ we obtain
\begin{equation}
\left\Vert\left[\mathds{1}+\epsilon(\mathcal{L}_t\otimes\mathds{1})\right](|\Phi\rangle\langle\Phi|)\right\Vert_1=\tfrac{1}{2}|\gamma_{-}(t)\epsilon|+|\gamma_{z}(t)\epsilon+\mathcal{O}(\epsilon^2)|+\left|1-[\tfrac{1}{2}\gamma_{-}(t) +\gamma_{z}(t)]\epsilon+\mathcal{O}(\epsilon^2)\right|.
\end{equation}
Thus, the limit of Eq. \eqref{gfunctionL} is readily computed,
\begin{equation}
g(t)=\tfrac{1}{2}[|\gamma_{-}(t)|-\gamma_{-}(t)]+|\gamma_{z}(t)|-\gamma_{z}(t).
\end{equation}
Now, in order to find the functions $f_j(t)$, Eq. \eqref{fj(t)}, we have to write Eq. \eqref{exACH} in a orthonormal basis with respect to the Hilbert-Schmidt product. However, since $\sigma_\pm=\tfrac{1}{2}(\sigma_x\pm{\rm i}\sigma_y)$ and because of the orthogonality of the Pauli matrices, $\big\{\tfrac{1}{\sqrt{2}}\mathbb{I}_2,\sigma_-,\sigma_+,\tfrac{1}{\sqrt{2}}\sigma_z\big\}$ forms an orthonormal basis. Thus, the canonical decay rates are $\gamma_-(t)$ and $2\gamma_z(t)$. By noting that $\max\{-\gamma_j(t),0\}=\tfrac{1}{2}[|\gamma_j(t)|-\gamma_j(t)]$ we obtain
\begin{equation}
f(t)=\sum_{j} f_j(t)=\tfrac{1}{2}[|\gamma_-(t)|-\gamma_-(t)]+|\gamma_z(t)|-\gamma_z(t).
\end{equation}
Therefore, $g(t)=f(t)$ as expected in this case since $d=2$. For Markovian evolution $\gamma_-(t)\geq0$, $\gamma_z(t)\geq0$ for all $t$ and $g(t)=f(t)=0$.
\end{ex}

Other examples where this measure is applied can be found in \cite{LoFranco,MichaelHall}. See also \cite{SunExp} for an experimental proposal to probe non-Markovianity by negative decay rates.

\subsection{Hierarchical $k$-divisibility degrees}\label{sec:DariuszSabrina}
Recently, Chru\'{s}ci\'{n}ski and Maniscalco have proposed a hierarchical way to assess non-Markovianity \cite{hierarchical}. Their approach, based on the concept of $k$-divisibility, is interesting as it provides a way to define some kind of maximally non-Markovian dynamics. Basically, a family of dynamical maps, $\{\mathcal{E}_{(t_2,t_1)},t_2\geq t_1\geq t_0\}$, is $k$-divisible, if $\mathcal{E}_{(t_2,t_1)}\otimes\mathds{1}_k$ is a positive map for all $t_2\geq t_1\geq t_0$ (here $\mathds{1}_k$ denotes the identity map acting on the space of $k\times k$ matrices). Therefore, if the dimension of the quantum system is $d$, a $k$-divisible process with $k\geq d$, is what in this work has been called divisible or Markovian process (see Definition \ref{def-Marko}). The 1-divisible processes are the P-divisible processes as introduced in Definition \ref{defpre-Marko2}, and the 0-divisible processes are processes where $\mathcal{E}_{(t_2,t_1)}$ is not a positive operator for some $t_1$ and $t_2\geq t_1$.

Moreover, analogously to Theorems \ref{theoCPTcontractions} and \ref{contractionMarkov}, we have that a process is $k$-divisible if and only if $\tilde{\sigma}_k(\tilde{\Delta},t):=\frac{d}{dt}\big\|\left[\mathcal{E}_{(t,t_0)}\otimes\mathds{1}_k\right]\tilde{\Delta}\big\|_1\leq0$ for every Helstrom matrix $\tilde{\Delta}=q\rho_{1A}-(1-q)\rho_{2A}$ with an ancillary space of dimension $k$. In similar fashion to Eq. \eqref{NHelstrom}, Chru\'{s}ci\'{n}ski and Maniscalco define a set of degrees to quantify departure from $k$-divisibility for $t\in I$,
\begin{equation}\label{Sabrina}
\mathscr{D}_{{\rm ND} (k)}^I:=\sup_{\tilde{\Delta}}\frac{N_{k}^{+}(\tilde{\Delta},I)}{|N_{k}^{-}(\tilde{\Delta},I)|},
\end{equation}
where $N_\pm^I(\tilde{\Delta},t):=\int_{t\in I,\tilde{\sigma}\gtrless 0}dt\tilde{\sigma}_k(\Delta,t)$, and the subindex ``ND'' stands for non-divisibility.

Since $\mathcal{E}_{(t,t_0)}$ is completely positive for any final time $t$, it is easy to prove that $N_{k}^{+}(\tilde{\Delta},I)\leq |N_{k}^{-}(\tilde{\Delta},I)|$ \cite{hierarchical}, therefore $\mathscr{D}_{{\rm ND} (k)}^I\in [0,1]$ for all $k$. Moreover, as $k$ increases, so does the dimension of the space in the optimization problem Eq. \eqref{Sabrina}, and hence it is clear that
\begin{equation}
0\leq \mathscr{D}_{\rm ND (1)}^I\leq \ldots \leq \mathscr{D}_{{\rm ND} (d)}^I \leq 1.
\end{equation}
In this equation, only $\mathscr{D}_{{\rm ND} (d)}^I$ is a degree of non-Markovianity as defined in this work. The other quantities are zero for non-Markovian but $k$-divisible $(k<d)$ dynamics. This hierarchy of degrees of non-divisibility is particularly useful to try a definition of maximally non-Markovian dynamics. Indeed, Chru\'{s}ci\'{n}ski and Maniscalco propose to call ``maximally non-Markovian dynamics'' to those that  $\mathscr{D}_{{\rm ND} (1)}^I=1$, and consequently $\mathscr{D}_{{\rm ND} (2)}^I=\mathscr{D}_{{\rm ND} (3)}^I=\ldots=\mathscr{D}_{{\rm ND} (d)}^I=1$. A particular example of this kind dynamics for a qubit is the one generated by the master equation $\frac{d}{dt}\rho=\gamma(t)(\sigma_z\rho\sigma_z-\rho)$, for an interval $I$ such that $\gamma(t)$ is periodic in $I$. For instance, $\gamma(t)=\sin(t)$ or $\gamma(t)=\tan(t)$ in $t\in [0,2\pi]$. Interestingly, for these two examples the Choi-Jamiolkowski measure Eq. \eqref{CJdegree} provides different values. We obtain $\mathscr{D}_{\rm NM (RHP)}^I=0.758$ and $\mathscr{D}_{\rm NM (RHP)}^I=0.803$, for $\gamma(t)=\sin(t)$ and $\gamma(t)=\tan(t)$ respectively.

\section{Witnesses of Quantum non-Markovianity}\label{section:witness}

In this section we revise the different ways to detect non-Markovianity via witnesses. A \emph{witness of non-Markovianity} is a quantity that vanishes for all Markovian dynamics (see also \cite{TrazaNoHermitica,DarekKoss1,DarekKoss2}), but it may also vanish for some non-Markovian dynamics. Thus, when a witness of non-Markovianity gives a non-zero value, we are sure that the dynamics is non-Markovian.

In general, we can classify the witnesses of non-Markovianity that have been presented in the literature according to two guiding principles. There are witnesses based on monotonic quantities under completely positive maps, and based on monotonic quantities under local completely positive maps. In the following we review several proposals in these two classes and illustrate their use with a simple example.

\subsection{Witnesses based on monotonicity under completely positive maps}

\subsubsection{Trace distance and the BLP quantifier} \label{secTracedistance}

If we consider the unbiased situation in the two-state discrimination problem, $q=1/2$ and the Helstrom matrix reads as $\Delta=(\rho_1-\rho_2)/2$, where we have neglected the possible presence of ancillary systems. Thus, the trace norm of $\Delta$ becomes the \emph{trace distance} between states $\rho_1$ and $\rho_2$,
\begin{equation}
D_{1}(\rho_1,\rho_2):=\frac{1}{2}\|\rho_1-\rho_2\|_1.
\end{equation}
Analogously to Eqs. \eqref{tildesigma} and \eqref{tildesigmaInt}, we write,
\begin{equation}\label{sigmaDist}
\sigma(\rho_1,\rho_2,t):=\frac{dD_1[\rho_1(t),\rho_2(t)]}{dt}:=\lim_{\epsilon\rightarrow0^+}\frac{D_1[\rho_1(t+\epsilon),\rho_2(t+\epsilon)]-D_1[\rho_1(t),\rho_2(t)]}{\epsilon},
\end{equation}
and for some interval $I$,
\begin{equation}\label{Intdist}
\int_{t\in(t_1,t_2),\sigma>0}dt\sigma(\rho_1,\rho_2,t).
\end{equation}
If this quantity is not zero for some pair of states $\rho_1$ and $\rho_2$, we are sure the dynamics is non-Markovian in $I$, as it is a lower bound to the non-Markovianity measure $\mathcal{N}_{\rm H}^I$, Eq. \eqref{NHelstrom}. Particularly, we may be interested in finding the largest value of Eq. \eqref{Intdist} in the time interval $(0,\infty)$. To this end, Breuer, Laine and Piilo \cite{BrLaPi1,BrLaPi2} define the quantifier
\begin{equation}
\mathcal{N}_{\rm BLP}:=\max_{\rho_1,\rho_2}\int_{\sigma>0}dt\sigma(\rho_1,\rho_2,t).
\end{equation}

\begin{ex} Consider the following master equation of a qubit system
\begin{equation}\label{PureDephMaster}
\frac{d\rho(t)}{dt}=\gamma(t)[\sigma_z\rho(t)\sigma_z-\rho(t)],
\end{equation}
with $\int_{t_0}^t\gamma(s)ds\geq0$ for completely positive dynamics. This equation can be easily integrated, by writing $\rho(0)=\big(\begin{smallmatrix}
\rho_{11} & \rho_{12}\\
\rho_{21}& \rho_{22}
\end{smallmatrix}\big)$, we obtain ($t_0=0$  without loss of generality)
\begin{equation}\label{PureDephIntegrated}
\rho(t)=\mathcal{E}_{(t,0)}[\rho(0)]=\begin{bmatrix}
\rho_{11} & \rho_{12}R(t)\\
\rho_{21}R(t) & \rho_{22}
\end{bmatrix}, \quad \text{with } R(t)={\rm e}^{-2\int_{0}^t\gamma(s)ds},
\end{equation}
Note that $0\leq R(t)\leq 1$. Let us compute the trace distance between two different initial states, for example $\rho_1=\tfrac{1}{2}\big(\begin{smallmatrix}
1 & 1\\
1 & 1
\end{smallmatrix}\big)$ and $\rho_2=\tfrac{1}{2}\big(\begin{smallmatrix}
1 & -1\\
-1 & 1
\end{smallmatrix}\big)$. Because of Eq. \eqref{PureDephIntegrated} we immediately obtain
\begin{equation}
D_1[\rho_1(t),\rho_2(t)]=\frac{1}{2}\|\rho_1(t)-\rho_2(t)\|_1=\left\|\Big[\begin{smallmatrix}
0 & R(t)\\
R(t) & 0
\end{smallmatrix}\Big]\right\|_1=|R(t)|=R(t),
\end{equation}
so that,
\begin{equation}
\sigma(\rho_1,\rho_2,t)=\frac{dD_1[\rho_1(t),\rho_2(t)]}{dt}=-2\gamma(t)R(t).
\end{equation}
Therefore, if $\sigma(\rho_1,\rho_2,t)>0$ for some $t$, then $\gamma(t)<0$, and the dynamics is non-Markovian.
\end{ex}

The use of the trace distance to witness non-Markovianity was originally proposed in \cite{BrLaPi1} and further analyzed in \cite{BrLaPi2}. Due to its simplicity and intuitive physical interpretation, it has been applied to detect non-Markovian features in dynamics of qubit \cite{Addis2,Br-Q-Bos1,Br-Q-Bos2,Br-Q-Bos3,Br-Q-Bos4,Br-Q-Bos5,Br-Q-Bos6,Br-Q-Bos7,Br-Q-Bos8,Br-Q-Bos9,Br-Q-Bos10,Br-Q-Bos11,Br-Q-Bos12,Br-Q-Bos13,Br-Q-Bos14,Br-Q-Bos15,Br-Q-Bos16,Br-Q-Bos17,Br-Q-Bos18,JQIS,Br-Q-Bos19,PinjaPreprint1,NoMarkovZeno,PinjaPreprint2,FrequencyXu,Anomalous-non-Markovian,Steffen,XuEnesimo} and qutrit systems \cite{Br-Qtr-Bos} coupled to bosonic environments, qubits coupled to other finite dimensional systems \cite{Br-Q-dlevel1,Br-Q-dlevel2,Br-Q-dlevel3}, and to composite \cite{Mauro2ambientes,KnobMarkovianity,Br-CompoEnv1,Br-CompoEnv2,Br-CompoEnv3} and chaotic \cite{Br-Chaotic,Br-Chaotic1,Br-Chaotic2} environments. It has also been employed to analyze memory-kernel master equations \cite{PhenoMasterEq,Vacch-Piece}, quantum semi-Markov process \cite{Vacch-NJP}, classical noise \cite{Vacch-NJP,PRAPolonia,ShibataPRA,LoFranco,ManiscalcoParisClassical,ComparisonBLPMutual}, in fermionic systems \cite{Chancellor} and collisional models \cite{TraceCollisional}, and to study exciton-phonon dynamics in energy transfer of photosynthetic complexes \cite{PatrickAlan,KoNaOlaya}. Moreover this witness has been implemented experimentally within a linear optics set up \cite{ExperimentoPiilo1,ExperimentoPiilo2,ExperimentoPiilo3,RLoFranco,BreuerMax2yExperimento}. On the other hand, some connections have been found between the non-monotonic behavior of the trace distance and geometric phases \cite{Geometricphases}, Loschmidt echo \cite{Loschmidt,Loschmidt2}, dynamical recovering of the quantum coherence by applying local operations \cite{DinamicalRecovering}, and the appearance of system-environment correlations \cite{LauraMauroBassano}. In this regard, this witness has also been proposed as a tool to detect the presence of initial system-environment correlations \cite{InCorr1,InCorr2,InCorr3,InCorr4,CesarKavanLaura,Br-Q-Bos11,Br-Q-Bos14,Br-Q-Bos16,Dajka,Wissmann}.

While very efficient under certain conditions, there are some non-Markovian processes which cannot be witnessed by the trace distance, for example those where the non-Markovianity is encoded just in the ``non-unital part'' of the dynamics \cite{NoUnital}. This part corresponds to the affine vector $\bm{c}_{(t,t_0)}$ when the dynamics is visualized in the Bloch space, see Eq. \eqref{affineBloch}. Necessary and sufficient conditions for trace distance to witness non-Markovianity can be found in \cite{MichaelHall}.

As a matter of curiosity, the trace distance has been also adapted to measure the degree of non-Markovianity of musical compositions \cite{Musical}.

\subsubsection{Fidelity}
The fidelity $F(\rho_1,\rho_2)$ between two quantum states $\rho_1$ and $\rho_2$ is a generalization of the transition probability $|\langle\psi_1|\psi_2\rangle|^2$ between two pure states $\ket{\psi_1}$ and $\ket{\psi_2}$, to density matrices. Specifically, the fidelity is defined \cite{Uhlmann,Jozsa} by the equation
\begin{equation}\label{fidelityMax}
F(\rho_1,\rho_2):=\max_{|\Psi_1\rangle,|\Psi_2\rangle}|\langle\Psi_1|\Psi_2\rangle|^2.
\end{equation}
Here, $\ket{\Psi_1}$ and $\ket{\Psi_2}$ are two purifications of $\rho_1=\Tr_{\rm A}(|\Psi_1\rangle\langle\Psi_1|)$ and $\rho_2=\Tr_{\rm A}(|\Psi_2\rangle\langle\Psi_2|)$, where $\Tr_{\rm A}$ denotes the partial trace on some ancillary system A, and the maximum is taken over the all possible purifications (see \cite{NC00} for a pedagogical introduction \cite{footnote2}). Uhlmann \cite{Uhlmann} solved the optimization problem obtaining
\begin{equation}
F(\rho_1,\rho_2)=\|\sqrt{\rho_1}\sqrt{\rho_2}\|_1^2=\left[\Tr\sqrt{\sqrt{\rho_1}\rho_2\sqrt{\rho_1}}\right]^2.
\end{equation}

Among several properties, the fidelity is monotonic under complete positive maps $\mathcal{E}$,
\begin{equation}\label{monoFidelity}
F[\mathcal{E}(\rho_1),\mathcal{E}(\rho_2)]\geq F(\rho_1,\rho_2),
\end{equation}
reaching the equality if and only if the completely positive map is unitary $\mathcal{E}(\cdot)=U(\cdot)U^\dagger$ \cite{Molnar}. Thus, the fidelity is monotonically increasing for Markovian evolutions, and therefore it may be used to witness non-Markovian behavior.

\begin{ex} Consider again the simple model of pure dephasing, Eqs. \eqref{PureDephMaster} and \eqref{PureDephIntegrated}. Again, for the initial states $\rho_1=\tfrac{1}{2}\big(\begin{smallmatrix}
1 & 1\\
1 & 1
\end{smallmatrix}\big)$ and $\rho_2=\tfrac{1}{2}\big(\begin{smallmatrix}
1 & -1\\
-1 & 1
\end{smallmatrix}\big)$, a straightforward computation of the fidelity gives
\begin{equation}
F[\rho_1(t),\rho_2(t)]=\left[\Tr\sqrt{\sqrt{\rho_1(t)}\rho_2(t)\sqrt{\rho_1(t)}}\right]^2=1-R^2(t),
\end{equation}
so that
\begin{equation}
\frac{dF[\rho_1(t),\rho_2(t)]}{dt}=-2R'(t)R(t)=4\gamma(t)R^2(t).
\end{equation}
Hence if the fidelity decreases at some time $t$, then $\gamma(t)<0$ and the dynamics is non-Markovian.
\end{ex}

In Ref. \cite{FidelityGauss} the approach of \cite{BrLaPi1,BrLaPi2}, originally proposed for the trace distance (see previous section), is reconsidered with the so-called \emph{Bures distance} \cite{Hubner}:
\begin{equation}\label{bures}
D_B(\rho_1,\rho_2):=\sqrt{2\left[1-\sqrt{F(\rho_1,\rho_2)}\right]}.
\end{equation}
Since the authors of \cite{FidelityGauss} aim at quantifying non-Markovianity in Gaussian states of harmonic oscillators, the use of the fidelity instead the trace distance is more convenient because a closed formula for the latter for Gaussian states is still lacking. Other examples can be found in \cite{Doug,GalveZambrini,ChinosFisher1}. Regarding Gaussian states, an alternative approach to witness non-Markovianity is suggested in \cite{GaussianCovMat}.

A different witness in terms of fidelity was previously proposed in \cite{Usha-Devi1}, however that was only able to detect deviations from time-homogeneous Markov processes, i.e. quantum dynamical semigroups where $\mathcal{E}_{(t_2,t_1)}=\mathcal{E}_{(t_2-t_1)}$ for every $t_1$ and $t_2$. See \cite{DavidWolf} for another work exclusively focused on time-homogeneous dynamics.

\subsubsection{Quantum relative entropies}

Another similar witness is constructed with the (von Neumann) \emph{relative entropy} between two quantum states $\rho_1$ and $\rho_2$,
\begin{equation}
S(\rho_1 \| \rho_2):=\Tr(\rho_1\log\rho_1)-\Tr(\rho_1\log\rho_2).
\end{equation}
Despite the relative entropy neither being symmetric $S(\rho_1 \| \rho_2)\neq S(\rho_2 \| \rho_1)$, nor satisfying the triangle inequality, it is often intuited as a distance measure because $S(\rho_1 \| \rho_2)\geq0$, vanishing if and only if $\rho_1=\rho_2$ (Klein's inequality \cite{NC00}). Moreover, if the intersection of the kernel of $\rho_2$ with the support of $\rho_1$ is non-trivial, then $S(\rho_1 \| \rho_2)$ becomes infinity.

Analogously to the Bures and the trace distance, the quantum relative entropy is monotonic under completely positive and trace preserving maps $\mathcal{E}$,
\begin{equation} \label{QREmono}
S[\mathcal{E}(\rho_1) \| \mathcal{E}(\rho_2)]\leq S(\rho_1 \| \rho_2).
\end{equation}
The proof of this result was fist given by Lindblad \cite{LindbladEntropy} for finite dimensional systems, and Uhlmann \cite{UhlmannEntropy} extend it to the general case (see also \cite{Hayashi} and \cite{RuskaiEntropy}).

Therefore the quantum relative entropy between any two states is monotonically decreasing with time in a Markovian process, and any increment of it at some time instant reveals the non-Markovian character of the dynamics.

\begin{ex} For the model of pure dephasing, Eqs. \eqref{PureDephMaster} and \eqref{PureDephIntegrated}, and initial states $\rho_1=\tfrac{1}{2}\big(\begin{smallmatrix}
1 & 1\\
1 & 1
\end{smallmatrix}\big)$ and $\rho_2=\tfrac{1}{2}\big(\begin{smallmatrix}
1 & -1\\
-1 & 1
\end{smallmatrix}\big)$, the quantum relative entropy becomes
\begin{equation}
S[\rho_1(t)\|\rho_2(t)]=\Tr[\rho_1(t)\log\rho_1(t)]-\Tr[\rho_1(t)\log\rho_2(t)]=R(t)\log\left[\frac{1+R(t)}{1-R(t)}\right],
\end{equation}
so that its derivative is
\begin{equation}
\frac{dS[\rho_1(t)\|\rho_2(t)]}{dt}=-2\gamma(t)R(t)\left\{\frac{2R(t)}{1-R^2(t)}+\log\left[\frac{1+R(t)}{1-R(t)}\right]\right\}.
\end{equation}
Since $0\leq R(t)\leq1$, everything multiplying $\gamma(t)$ in the above equation is negative. Hence, an increment in the quantum relative entropy at some $t$ implies $\gamma(t)<0$ and non-Markovianity.
\end{ex}

The use of the quantum relative entropy to witness non-Markovianity was originally proposed in \cite{BrLaPi2}. In \cite{TrazaNoHermitica} it is suggested to use more general relative entropies due to Renyi \cite{RenyiMonotonicity} and Tsallis \cite{TsallisMonotonicity} for the same task. In this regard, \cite{Kastoryano} enumerates several distances fulfilling the monotonicity condition. Additionally \cite{Usha-Devi2} proposed to use the monotonicity of the relative entropy to detect the presence of initial system-environment correlations.

\subsubsection{Quantum Fisher information}

Following \cite{Holevo,BCaves} (see also \cite{WisemanMilburn}) the \emph{quantum Fisher information} can be defined as the infinitesimal Bures distance \eqref{bures} between two quantum states. For simplicity, assume some one-parametric family of quantum states $\rho_\theta$, then
\begin{equation}\label{FisherBures}
D_B^2(\rho_\theta,\rho_{\theta+\delta\theta})=\frac{1}{4}\mathcal{F}(\rho_\theta)(\delta\theta)^2+\mathcal{O}[(\delta\theta)^3],
\end{equation}
where $\mathcal{F}(\rho_\theta)$ is the so-called quantum Fisher information of the family $\rho_\theta$. Equivalently, we write
\begin{equation}\label{Fisher}
\mathcal{F}(\rho_\theta):=4\lim_{\delta\theta\rightarrow0}\left[\frac{D_B(\rho_\theta,\rho_{\theta+\delta\theta})}{\delta\theta}\right]^2.
\end{equation}
Thus, the quantum Fisher information of $\rho_\theta$ measures the sensitivity of the Bures distance when $\theta$ is varied. In turn, this can be interpreted as the information about $\theta$ which is contained in the family $\rho_\theta$, in such a way that if $\rho_\theta$ does not depend on $\theta$, $\mathcal{F}(\rho_\theta)=0$. We will come back to this point later.

Additionally, the quantum Fisher information admits other different but equivalent definitions \cite{Holevo,BCaves,WisemanMilburn,Helstrom}. For example \cite{BCaves}, it can be defined as the maximum Fisher information of classical probabilities $p(x|\theta)=\Tr(\Pi_x\rho_\theta)$, where the optimization is made over all possible POVMs $\{\Pi_x\}$,
\begin{equation}
\mathcal{F}(\rho_\theta):=\max_{\Pi_x} \mathcal{F}[p(x|\theta)].
\end{equation}
Recall that the Fisher information of a probability distribution $p(x|\theta)$ is defined as
\begin{equation}
\mathcal{F}[p(x|\theta)]:=\int \frac{1}{p(x|\theta)}\left[\frac{\partial p(x|\theta)}{\partial\theta}\right]^2dx.
\end{equation}
Another definition is given in terms of the so-called \emph{symmetric logarithmic derivative} operator $L$, which is defined via the implicit equation
\begin{equation}
\frac{d\rho_\theta}{d\theta}:=\frac{1}{2}\left(L\rho_\theta+\rho_\theta L\right),
\end{equation}
and depends on the particular form of $\rho_\theta$, $L=L(\rho_\theta)$. The quantum Fisher information is given by the variance of this operator in the family $\rho_\theta$ \cite{Helstrom},
\begin{equation}\label{FisherSLD}
\mathcal{F}(\rho_\theta):=\Tr\left[L^2(\rho_\theta)\rho_\theta\right].
\end{equation}
The equivalence between Eqs. \eqref{Fisher} and \eqref{FisherSLD} can be found explicitly proven in \cite{Hubner}.

Going back to the problem of witnessing non-Markovianity, the quantum Fisher information is also monotonically decreasing under Markovian dynamics, as it cannot increase under completely positive maps. This can be showed directly from Eq. \eqref{Fisher}. Because of \eqref{monoFidelity}, the Bures distance (and its square) is monotonically decreasing under a completely positive $\mathcal{E}$, so that
\begin{equation}
\mathcal{F}[\mathcal{E}(\rho_\theta)]=4\lim_{\delta\theta\rightarrow0}\frac{D_B^2[\mathcal{E}(\rho_\theta),\mathcal{E}(\rho_{\theta+\delta\theta})]}{(\delta\theta)^2}\leq4\lim_{\delta\theta\rightarrow0}\frac{D_B^2(\rho_\theta,\rho_{\theta+\delta\theta})}{(\delta\theta)^2}=\mathcal{F}(\rho_\theta).
\end{equation}
The use of the quantum Fisher information to witness non-Markovianity is originally due to Lu, Wang and Sun in \cite{Sun}. These authors provided a proof of the monotonicity of the Fisher information by using the definition Eq. \eqref{FisherSLD}, and introduced a flow of quantum Fisher information by
\begin{equation}\label{FisherFlow}
\mathcal{I}_{\rm QFI}(t):=\frac{\partial \mathcal{F}[\rho_\theta(t)]}{\partial t}.
\end{equation}
Thus if $\mathcal{I}_{\rm QFI}(t)>0$ for some $t$, the time evolution is non-Markovian. Moreover, if the evolution is given by some master equation,
\begin{equation}
\frac{d\rho(t)}{dt}=\mathcal{L}_t\rho(t)=-{\rm i}[H(t),\rho(t)]+\sum_k\gamma_k(t)\left[V_k(t)\rho(t)V_k^\dagger(t)-\frac{1}{2}\{V_k^\dagger(t)V_k(t),\rho(t)\}\right],
\end{equation}
the quantum Fisher information flow can be written as
\begin{eqnarray}
\mathcal{I}_{\rm QFI}(t)&=&\sum_k \gamma_k(t)\mathcal{J}_k(t), \\ \mathcal{J}_k(t)&:=&-\Tr\left\{\rho_\theta(t)[L(\rho_\theta,t),V_k(t)]^\dagger[L(\rho_\theta,t),V_k(t)]\right\}\leq0.
\end{eqnarray}
Therefore, $\mathcal{I}_{\rm QFI}$ is negative if all $\gamma_k(t)\geq0$ in accordance with Theorem \ref{KossLindTheo}.

\begin{ex} Consider the family of states $\rho_\theta=\tfrac{1}{2}\big(\begin{smallmatrix}
1 & {\rm e}^{-{\rm i}\theta} \\
{\rm e}^{{\rm i}\theta} & 1
\end{smallmatrix}\big)$ which is typically generated by applying the phase shift $\theta$ to the state $\ket{\psi}=\tfrac{1}{\sqrt{2}}(1,1)^{\rm t}$. If $\rho_\theta$ is subject to the pure dephasing, Eqs. \eqref{PureDephMaster} and \eqref{PureDephIntegrated}, we can compute the Fisher information directly by expanding the squared Bures distance between $\rho_\theta(t)$ and $\rho_{\theta+\delta\theta}(t)$ up to the second order in $\delta\theta$ [Eq. \eqref{FisherBures}] . After some algebra we find
\begin{equation}\label{FisherEx}
D_B^2(\rho_\theta,\rho_{\theta+\delta\theta})=\frac{1}{4}R^2(t)(\delta\theta)^2+\mathcal{O}[(\delta\theta)^3]\Rightarrow \mathcal{F}(\rho_\theta)=R^2(t).
\end{equation}
Thus, the quantum Fisher information flow, Eq. \eqref{FisherFlow} is
\begin{equation}
\mathcal{I}_{\rm QFI}(t)=\gamma(t)\mathcal{J}(t), \quad \text{with } \mathcal{J}(t)=-4R^2(t),
\end{equation}
and $\mathcal{I}_{\rm QFI}(t)>0$ for some $t$ denotes $\gamma(t)<0$ and non-Markovianity.
\end{ex}

Other examples where the quantum Fisher information flow is computed can be found in the original reference \cite{Sun} and in \cite{Loschmidt,ChinosFisher1,ChinosFisher2}, where its possible relation with the Loschmidt echo was explored.

Notably, this witness of non-Markovianity may be relevant in the context of quantum parameter estimation. Specifically, the error (variance) of any (unbiased) estimation of the parameter $\theta$ is related to the quantum Fisher information through the quantum Cramer-Rao bound \cite{Holevo,BCaves,WisemanMilburn,Helstrom}:
\begin{equation}
(\Delta\theta)^2\geq\frac{1}{\mathcal{F}(\rho_\theta)}.
\end{equation}
Thus, an increment in $\mathcal{F}(\rho_\theta)$ could be linked with a increment of information about the parameter $\theta$. Nevertheless note that the quantum Fisher information provide just a lower bound to the error on $\theta$, and in fact there are cases where this bound is not achievable.

\subsubsection{Capacity measures}

In \cite{DarekSabrina} Bylicka, Chru\'{s}ci\'{n}ski and Maniscalco have suggested to use capacity measures to detect non-Markovianity. Specifically, given a complete positive and trace-preserving map $\mathcal{E}$ and some quantum state $\rho$, we introduce the mutual information between $\rho$ and $\mathcal{E}(\rho)$ via
\begin{equation}
I(\rho,\mathcal{E}):=S(\rho)+I_c(\rho,\mathcal{E}).
\end{equation}
Here $S(\rho)=-\Tr(\rho\log\rho)$ is the von Neumann entropy and $I_c(\rho,\mathcal{E})$ is the so-called quantum coherent information, defined as \cite{NC00},
\begin{equation}
I_c(\rho,\mathcal{E}):=S[\mathcal{E}(\rho)]-S\{[\mathcal{E}\otimes\mathds{1}](|\Psi_\rho\rangle\langle\Psi_\rho|)\},
\end{equation}
where $|\Psi_\rho\rangle\in\mathcal{H}\otimes\mathcal{H}_{\rm A}$ is a purification of $\rho=\Tr_{\rm A}(|\Psi_\rho\rangle\langle\Psi_\rho|)$. Remarkably, the quantity $S\{[\mathcal{E}\otimes\mathds{1}](|\Psi_\rho\rangle\langle\Psi_\rho|)\}$ does not depend on the particular choice of purification.
The quantum coherent information is monotonic under completely positive maps \cite{footnoteDataProcesing}
\begin{equation}
I_c(\rho,\mathcal{E}_2\mathcal{E}_1)\leq I_c(\rho,\mathcal{E}_1),
\end{equation}
and the same equation is satisfied for $I(\rho,\mathcal{E})$. Thus, in Ref. \cite{DarekSabrina} the following two capacity measures are proposed to witness non-Markovianity,
\begin{align}
C_{ea}\left[\mathcal{E}_{(t,t_0)}\right]&:=\sup_\rho I\left[\rho,\mathcal{E}_{(t,t_0)}\right],\label{CeaDefi}\\
Q\left[\mathcal{E}_{(t,t_0)}\right]&:=\sup_\rho I_c\left[\rho,\mathcal{E}_{(t,t_0)}\right],\label{QDefi}
\end{align}
The entanglement assisted capacity $C_{ea}$ sets a bound on the amount of classical information which can be transmitted along the dynamical process described by $\mathcal{E}_{(t,t_0)}$ when sender at $t_0$ and receiver at $t$ are allowed to share an unlimited amount of entanglement. Similarly, the capacity $Q$ provides the limit to the rate at which quantum information can be reliably sent by the quantum process (for a singe use).

\begin{ex} \label{ExCapacity} Let us calculate the capacity measures for the pure dephasing model, Eqs. \eqref{PureDephMaster} and \eqref{PureDephIntegrated}. It is immediate to check that the dynamical map in this case is given by
\begin{equation}\label{PureDephE}
\mathcal{E}_{(t,0)}[\rho]=\left[\tfrac{1+R(t)}{2}\right]\rho+\left[\tfrac{1-R(t)}{2}\right]\sigma_z\rho\sigma_z.
\end{equation}
It can be shown \cite{Addis2,MarkWilde} that the maximum for both measures $C_{ea}$ and $Q$ is reached for a maximally mixed state, $\rho=\mathbb{I}_2/2$, and then the required purification has to be a maximally entangled state, e.g. $|\Psi_\rho\rangle=|\Phi\rangle=\tfrac{1}{\sqrt{2}}(1,0,0,1)^{\rm t}$,
\begin{equation}\label{Ex1PHI}
[\mathcal{E}_{(t,0)}\otimes\mathds{1}](|\Psi_\rho\rangle\langle\Psi_\rho|)=\frac{1}{2}\left[\begin{smallmatrix}
1 & 0 & 0 & R(t) \\
0 & 0 & 0 & 0  \\
0 & 0 & 0 & 0  \\
R(t) & 0 & 0 & 1 \\
\end{smallmatrix}\right].
\end{equation}
Thus a direct computation leads to
\begin{align}
C_{ea}\left[\mathcal{E}_{(t,0)}\right]&=2\log 2+\left[\tfrac{1+R(t)}{2}\right]\log\left[\tfrac{1+R(t)}{2}\right]+\left[\tfrac{1-R(t)}{2}\right]\log\left[\tfrac{1-R(t)}{2}\right],\label{CeaEq}\\
Q\left[\mathcal{E}_{(t,0)}\right]&=\log2+\left[\tfrac{1+R(t)}{2}\right]\log\left[\tfrac{1+R(t)}{2}\right]+\left[\tfrac{1-R(t)}{2}\right]\log\left[\tfrac{1-R(t)}{2}\right]\label{QEq}.
\end{align}
Therefore both quantities have the same derivative
\begin{equation}
\frac{dC_{ea}\left[\mathcal{E}_{(t,0)}\right]}{dt}=\frac{dQ\left[\mathcal{E}_{(t,0)}\right]}{dt}=-\gamma(t)R(t)\log\left[\frac{1+R(t)}{1-R(t)}\right],
\end{equation}
which can be positive only for non-Markovian dynamics, $\gamma(t)<0$.
\end{ex}

Further examples of the use of these witnesses are found in \cite{ManiscalcoParisClassical} and \cite{Addis,Addis2} for a qubit interacting with a random classical field and with a bosonic environment, respectively.

\subsubsection{Bloch volume measure}

Another interesting proposal to witness non-Markovianity was suggested by Lorenzo, Plastina and Paternostro in \cite{Mauro}. These authors expand the state $\rho$ in the basis $\{G_j\}_{j=0}^{d^2-1} $ where $G_0=\mathbb{I}/\sqrt{d}$ and $\{G_j\}_{j=1}^{d^2-1}$ are the (normalized) generators of the $\mathfrak{su}(d)$ algebra,
\begin{equation}
\rho=\frac{\mathbb{I}}{d}+\sum_{j=1}^{d^2-1} r_jG_j, \quad r_i=\Tr(G_i\rho).
\end{equation}
Then, it is well-know that the action of a dynamical map can be seen as an affine transformation of the Bloch vector $\bm{r}=(r_1,\ldots, r_{d^2-1})^{\rm t}$,
\begin{equation}\label{affineBloch}
\rho(t)=\mathcal{E}_{(t,t_0)}(\rho)\longleftrightarrow \bm{r}_t=\bm{M}_{(t,t_0)}\bm{r}_{t_0}+\bm{c}_{(t,t_0)},
\end{equation}
where $[\bm{M}_{(t,t_0)}]_{ij}=\Tr[G_i\mathcal{E}_{(t,t_0)}(G_j)]$ and $[\bm{c}_{(t,t_0)}]_i=\Tr[G_i\mathcal{E}_{(t,t_0)}(\mathbb{I})]/d$ for $i,j>0$.

It can be proven that, since $\mathcal{E}_{(t,t_0)}$ is a composition of completely positive maps, the absolute value of the determinant of $\bm{M}_{(t,t_0)}$ decreases monotonically with time \cite{Wolf1}. Interestingly $|\det[\bm{M}_{(t,t_0)}]|$ describes the change in volume of the set of states accessible through the evolution \cite{Mauro}, so that Markovian evolutions reduce (or leave invariant) the volume of accessible states. Thus, this witness enjoys a nice geometrical interpretation. However, similarly to the trace distance, since the volume of accessible states is independent of the affine vector $\bm{c}_{(t,t_0)}$, it is not sensitive to dynamics where the non-Markovianity is encoded in $\bm{c}_{(t,t_0)}$. More concretely, it can be shown that the volume of accessible states only detects non-Markovian dynamics such that $\Tr[\bm{M}_{(t,t_0)}]>0$ \cite{MichaelHall}.

\begin{ex} For the pure dephasing model, Eq. \eqref{PureDephMaster}, we take the (normalized) generators of $\mathfrak{su}(2)$ algebra and the identity as basis, i.e. $\big\{\tfrac{1}{\sqrt{2}}\mathbb{I}_2,\tfrac{1}{\sqrt{2}}\sigma_x,\tfrac{1}{\sqrt{2}}\sigma_y,\tfrac{1}{\sqrt{2}}\sigma_z\big\}$. From \eqref{PureDephE} it is immediate to obtain that for this model $\bm{c}_{(t,0)}=0$ and
\begin{equation}
\bm{M}_{(t,t_0)}=\left[\begin{smallmatrix}
R(t) & 0 & 0  \\
0 & R(t) & 0   \\
0 & 0 & 1 &
\end{smallmatrix}\right],
\end{equation}
so that,
\begin{equation}
|\det[\bm{M}_{(t,t_0)}]|=R(t)^2,
\end{equation}
and we arrive to the same conclusions as with the quantum Fisher information Eq. \eqref{FisherEx}.
\end{ex}

For other examples of the use of $|\det[\bm{M}_{(t,t_0)}]|$ to track non-Markovianity see the original reference \cite{Mauro} and \cite{Mauro2ambientes,PlastinaFermionic,Tufarelli}.

\subsection{Witnesses based on monotonicity under local completely positive maps}

This second kind of witnesses are typically correlation measures between the dynamical system and some ancilla A, in such a way that they do not increase under the action of local maps, $\mathcal{E}\otimes\mathds{1}_{\rm A}$. Let us analyze three of them.

\subsubsection{Entanglement}\label{sec:entanglement}

From an operational point of view, entanglement can be defined as those correlations between different quantum systems which cannot be generated by local operations and classical communication (LOCC) procedures \cite{MartinShash}. Thus, entanglement turns out to be a resource to perform tasks which cannot be done just by LOCC.

The degree of entanglement of a quantum state may be assessed by the so-called {\it entanglement measures}. These must fulfill a set of axioms in order to account for the genuine properties present in the concept of entanglement \cite{MartinShash,MartinEnta1,MartinEnta2,HorodeckiReview}. One of these requirements is the monotonicity axiom, which asserts that the amount of entanglement cannot increase by the application of LOCC operations. Actually, the quantifiers of entanglement that fulfil this axiom but do not coincide with the entropy of entanglement for pure states are simply called {\it entanglement monotones}. Since local operations are a particular example of LOCC, if some entanglement measure (or monotone) increases under a local map $\mathcal{E}\otimes\mathds{1}$, $\mathcal{E}$ cannot be completely positive.

Thus, consider a system ${\rm S}$ evolving according to some dynamical map $\mathcal{E}_{(t,t_0)}$. We will study the evolution of an entangled state $\rho_{\rm SA}$ between S and some static ancillary system ${\rm A}$, $\rho_{\rm SA}(t)=\left[\mathcal{E}_{(t,t_0)}\otimes \mathds{1}\right](\rho_{\rm SA})$. Then, an increment in the amount of entanglement of $\rho_{\rm SA}(t)$ witnesses non-Markovianity. More specifically, consider initially a maximally entangled state $\rho_{\rm SA}=|\Phi\rangle\langle\Phi|$, $|\Phi\rangle=\frac{1}{\sqrt{d}}\sum_{n=0}^{d-1}|n\rangle_{\rm S} |n\rangle_{\rm A}$. Provided that $E$ is an entanglement measurement (or monotone), the positive quantity
\begin{equation}\label{noMarkoenta}
\mathcal{I}^{({\rm E})}:=\Delta E+\int_{t_0}^{t_1}\left|\frac{dE[\rho_{\rm SA}(t)]}{dt}\right|dt,
\end{equation}
is different from zero only if $\mathcal{E}_{(t,t_0)}$ is non-Markovian in the interval $(t_0,t_1)$. Here $\Delta E:=E[\rho_{\rm SA}(t_1)]-E[\rho_{\rm SA}(t_0)]$.

The use of the entanglement to witness non-Markovianity was first proposed in \cite{nuestro}, where the expression \eqref{noMarkoenta} was suggested. This proposal has been theoretical addressed for cases of qubits coupled to bosonic environments \cite{Br-Q-Bos3,Br-Q-Bos7,Br-Q-Bos15,Enta-1,Enta-2,Anomalous-non-Markovian}, for a damped harmonic oscillator \cite{nuestro,Doug,GalveZambrini}, and for random unitary dynamics and classical noise models \cite{Enta-3,Enta-4}. Experimentally this witness has been analyzed in \cite{ExperimentoPiilo1,RLoFranco,Enta-5}.

In addition, a link between the generation of entanglement by non-Markovian dynamics and the destruction of accessible information \cite{Vedral} as been established in \cite{AccessibleEntanglement}. Also, there is a relation between dynamical recovering of quantum coherence by applying local operations and entanglement generation, see \cite{DinamicalRecovering}. Further connections between entanglement and non-Markovianity can be found in \cite{Br-Q-Bos13,Enta-6}.

\subsubsection{Quantum mutual information} The total amount of correlations (as classical as quantum) as measured by the quantum mutual information is another witness of non-Markovianity \cite{Mutual1}. This quantity is defined as
\begin{equation}
I(\rho_{\rm SA}):=S(\rho_{\rm S})+S(\rho_{\rm A})-S(\rho_{\rm SA}),
\end{equation}
where $S(\rho):=-\Tr(\rho\log\rho)$ is the von Neumann entropy and $\rho_{\rm S,A}=\Tr_{\rm A,S}(\rho_{\rm SA})$. This expression can be rewritten as a relative entropy,
\begin{equation}
I(\rho_{\rm SA})=S(\rho_{\rm SA}\Vert \rho_{\rm S}\otimes\rho_{\rm A}).
\end{equation}
Thus, if we apply a local operation, we have
\begin{align}
I[(\mathcal{E}\otimes\mathds{1})\rho_{\rm SA}]&=S\{(\mathcal{E}\otimes\mathds{1})\rho_{\rm SA}\Vert \Tr_{\rm A}[(\mathcal{E}\otimes\mathds{1})\rho_{\rm SA}]\otimes\Tr_{\rm S}[(\mathcal{E}\otimes\mathds{1})\rho_{\rm SA}]\}\nonumber \\
&=S[(\mathcal{E}\otimes\mathds{1})\rho_{\rm SA}\Vert (\mathcal{E}\otimes\mathds{1})\rho_{\rm S}\otimes\rho_{\rm A}]\nonumber \\
&\leq S(\rho_{\rm SA}\Vert \rho_{\rm S}\otimes\rho_{\rm A})=I(\rho_{\rm SA}),
\end{align}
where have used that $\Tr_{\rm A}[(\mathcal{E}\otimes\mathds{1})\rho_{\rm SA}]=\mathcal{E}(\rho_{\rm S})$ and $\Tr_{\rm S}[(\mathcal{E}\otimes\mathds{1})\rho_{\rm SA}]=\rho_{\rm A}$, and the monotonicity of the relative entropy Eq. \eqref{QREmono}. Hence, the quantum mutual information is monotonic under local trace-preserving completely positive maps. For references where quantum mutual information has been used to study non-Markovianity see \cite{Mauro2ambientes,ComparisonBLPMutual,Mutual2,Anomalous-non-Markovian,LoFrancoPaladino,Addis2}.

\subsubsection{Quantum discord} Finally, the quantum discord \cite{ZurekAnnalen,Zurek,ModiReview} can also be used to detect non-Markovian dynamics. Recall that the quantum discord between two quantum systems is a non-symmetric measure of correlations, so that it is not the same to measure the quantum discord between S and A, with respect to A, as with respect to S. For our purposes, we consider the quantum discord as measured by the ancillary system:
\begin{equation}\label{discord}
D_{\rm A}^Q(\rho_{\rm SA}):=S(\rho_{\rm A})-S(\rho_{\rm SA})+\min_{\{\Pi_j^{\rm A}\}}\sum_j S\big(\rho_{S|\Pi_j^{\rm A}}\big),
\end{equation}
where the minimization is taken over all POVM $\{\Pi_j^{\rm A}\}$ on A, and $\rho_{S|\Pi_j^{\rm A}}$ is the system state after the outcome corresponding to $\Pi_j^{\rm A}$ has been detected,
\begin{equation}
\rho_{S|\Pi_j^{\rm A}}:=\frac{\Tr_{\rm A} (\Pi_j^{\rm A}\rho_{\rm SA})}{\Tr (\Pi_j^{\rm A}\rho_{\rm SA})}.
\end{equation}

The quantum discord Eq. \eqref{discord} is monotonic under local maps on the system $\mathcal{E}\otimes\mathds{1}$ (see \cite{ModiReview})
\begin{equation}
D_{\rm A}^Q[\mathcal{E}\otimes\mathds{1}(\rho_{\rm SA})]\leq D_{\rm A}^Q(\rho_{\rm SA}).
\end{equation}
However, note that this equation does not hold for local maps on the ancilla $\mathds{1}\otimes\mathcal{E}$ \cite{DiscordLocal1,DiscordLocal2}. Therefore, as long as we are certain that the ancilla does not evolve, we may use quantum discord to probe non-Markovianity.

The possible usefulness of the concept of quantum discord to witness non-Markovianity has been first discussed in \cite{Girolami}. Other relations between Markovianity and quantum discord can be found in \cite{Alipour,HaikkaYManiscalco,ZhangDiscord}. Note however that the existence of quantitative connections between quantum discord and completely positive maps remain controversial \cite{Cesar1,SL,Cesar2,NuestroCPDiscord,Sabapathy,Buscemi,DSL}.

\begin{ex} For the sake of illustration, let us analyze the behavior of entanglement, quantum mutual information and quantum discord in the pure dephasing model, Eq. \eqref{PureDephMaster}. Consider initially a maximally entangled state $\rho_{\rm SA}=|\Phi\rangle\langle\Phi|$, so that $\rho_{\rm SA}(t)=\left[\mathcal{E}_{(t,t_0)}\otimes \mathds{1}\right](|\Phi\rangle\langle\Phi|)$ is given by Eq. \eqref{Ex1PHI}.

As an entanglement monotone we may take the logarithmic negativity \cite{MartinShash,MartinNegativity}, arriving at
\begin{equation}
E_N[\rho_{\rm SA}(t)]=\log\|\rho_{\rm SA}^{{\rm t}_A}(t)\|_1=\log[1+R(t)],
\end{equation}
where the superscript ${\rm t}_A$ denotes the partial transpose with respect to the ancillary system. Immediately we obtain
\begin{equation}
\frac{dE_N[\rho_{\rm SA}(t)]}{dt}=-2\gamma(t)\frac{R(t)}{1+R(t)},
\end{equation}
so entanglement can increase only for non-Markovian evolution $\gamma(t)<0$.

In this simple model, the quantum mutual information and quantum discord are found to be
\begin{align}
I[\rho_{\rm SA}(t)]&=C_{ea}\left[\mathcal{E}_{(t,0)}\right],\\
D_{\rm A}^Q[\rho_{\rm SA}(t)]&=Q\left[\mathcal{E}_{(t,0)}\right],
\end{align}
where $C_{ea}$ and $Q$ are given in Eqs. \eqref{CeaEq} and \eqref{QEq}. The equality follows from the choice of the maximally entangled state $\rho_{\rm SA}=|\Phi\rangle\langle\Phi|$ as initial state. This is a purification of the maximally mixed state, which is the one that solves the optimization problem in Eqs. \eqref{CeaDefi} and \eqref{QDefi} for this model as commented in Example \ref{ExCapacity}. For the quantum mutual information the equality is obvious. For the quantum discord, note that the state \eqref{Ex1PHI} belongs to the subclass known as ``X-states'', for which the optimization problem in Eq. \eqref{discord} can be efficiently solved \cite{ModiReview,Chen}. Concretely in this case, we take the measurement of the $\sigma_z$ observable to obtain $\rho_{S|\Pi_{\pm1}^{\rm A}}=|z_{\pm}\rangle\langle z_{\pm} |$, where $\sigma_z|z_{\pm}\rangle=\pm|z_{\pm}\rangle$, and so the von Neumann entropy of any system state after the measurement vanishes.
\end{ex}

\section{Conclusion and Outlook}
In this work we have reviewed the topic of quantum non-Markovianity in the light of recent developments regarding its characterization and quantification. Quantum Markovian processes have been defined by taking the {\em divisibility} approach, which allows us to circumvent the problem of constructing a hierarchy of probabilities in quantum mechanics. We have also discussed the emergence of memorylessness properties within this definition and compared the divisibility approach with other suggested ways to define Markovianity in the quantum realm. We have surveyed recently proposed measures and witnesses of non-Markovianity, explaining their foundations, as well as their motivation and interpretation. Each measure and witness of non-Markovianity has its pros and cons, and the ultimate question of which of them is preferable in practice strongly depends on the context.

We hope that this article can be useful for future research in open quantum systems, and its implications for other areas such as quantum information, or statistical mechanics. Despite the tremendous quantity of new results in the characterization and quantification of non-Markovianity in recent years, there are still several important open questions that remain to be addressed. We conclude this review by providing a non-exhaustive list and some possible research directions.

\begin{enumerate}[$\blacktriangleright$]
\item \textit{Classification of completely positive non-Markovian master equations}. This is probably the most general open problem regarding non-Markovian evolutions. For instance, in Eq. \eqref{ACHgenerator} we have written a generic master equation without taking care of the completely positive character of the dynamics that it generates. When the evolution is non-Markovian, the structure of the generators which leads to completely positive dynamics is pretty much unknown, although a few partial results have been obtained \cite{Hall,t0definition2,Commutative}. The problem basically rests upon the difficult characterization of the generators of dynamical maps $\{\mathcal{E}_{(t_2,t_1)},t_2\geq t_1\geq t_0\}$ under the weak assumption of complete positivity just for instants $t_2\geq t_1=t_0$, and not for any $t_2\geq t_1\geq t_0$ \cite{Libro}.
\item \textit{Computation of some measures of non-Markovianity}. Despite their well grounded physical motivation, it would be desirable to provide efficient ways to compute some of the proposed measures of non-Markovianity. For instance, the geometric degree of non-Markovianity, $\mathscr{D}_{\rm NM (g)}^I$, Eq. \eqref{geoDegree}. Similarly, the measure $\mathcal{N}_{\rm H}^I$, Eq. \eqref{NHelstrom}, has been calculated just for unbiased problems or isolated cases without solving the complete optimization problem.
\item \textit{Performance of witnesses}. Some questions may be posed regarding the performance of witnesses of non-Markovianity. For example, which kind of witnesses, be it the ones based on monotonicity or the ones based on local monotonicity, is more sensitive to non-Markovian dynamics? Moreover, which of them is more efficient to detect non-Markovian dynamics? A recent study with some partial results on this issue is \cite{Addis2}.
\item \textit{Witnesses of non-Markovianity without resorting to full-state tomography}. A question of practical interest is to formulate ways to probe non-Markovianity avoiding full-state (or process) tomography. For example, if we manage to find good enough lower and upper bounds to properties like trace distance \cite{NC00} or entanglement \cite{Audenaert} in terms of simple measurements, we would be able to detect its non-monotonic behavior without resorting to expensive tomographic procedures.
\item \textit{Relation between different measures of non-Markovianity}. Another fundamental question is to elucidate whether different measures of non-Markovianity induce the same order. Probably the answer is negative, but more progress has to be done on this line.
\item \textit{Non-Markovianity as a resource theory}. Related to the previous point is the possible formulation of a resource theory for non-Markovianity. Similarly to other resource theories \cite{MartinShash,MartinEnta1,MartinEnta2,HorodeckiReview,Brandao1,Gour1,Brandao2,Emerson,Gour2,Tilmann,deVicente}, we may wonder if non-Markovianity can be seen as a resource to perform whatever tasks which cannot be done solely by Markovian evolutions. Then, an order relation follows, i.e. some dynamics $\mathcal{E}^{(1)}_{(t,t_0)}$ has smaller amount of Markovianity than another dynamics $\mathcal{E}^{(2)}_{(t,t_0)}$, if $\mathcal{E}^{(1)}_{(t,t_0)}$ can be constructed by $\mathcal{E}^{(2)}_{(t,t_0)}$ and Markovian evolutions. This approach also allows to introduce the notion of maximally non-Markovian evolutions, these would be the ones which cannot be generated in terms of other non-Markovian maps and Markovian evolutions. Would this maximally non-Markovian evolutions be the ones defined in Section \ref{sec:DariuszSabrina}?
\item \textit{Non-Markovianity and other properties}. It will be very relevant to find systems where the presence of non-Markovianity is associated with other notable phenomena. For instance, some quantitative relations between non-Markovianity effects and criticality and phase transitions \cite{Loschmidt2,Sindona,Borrelli,Borrelli2}, Loschmidt echo \cite{Loschmidt,Loschmidt2,Sindona}, symmetry breaking \cite{Chancellor}, and Zeno and anti-Zeno effects \cite{NoMarkovZeno} have already been described.
\item \textit{Potential applications of non-Markovianity}. As final question, we may wonder is {\em ``what is a non-Markovian process good for in practice?''} There are already studies showing its usefulness to prepare steady entangled states \cite{HRP}, to enhance the achievable resolution in quantum metrology \cite{AlexSusanaMartin} or to assist certain tasks in quantum information and computation \cite{ElsiTeleport,DarekSabrina,PiiloEntaDistri}. However more research on this direction is required for the formulation of quantitative results \cite{Anomalous-non-Markovian}.
\end{enumerate}

Definitely, this list can be extended with other open questions. However, we think the enumerated points are representative enough to hopefully stimulate the readers into addressing some of these problems and shed further light on this remarkable phenomenon.

\section*{Acknowledgements}
A.~R. acknowledges to J.~M.~R. Parrondo for discussions about the classical definition of Markovianity and to L. Accardi and K.~B. Sinha for their detailed explanations about the algebraic definition of stochastic quantum processes. Moreover, it has been a pleasure to share ideas about quantum non-Markovianity with D. Chru\'sci\'nski, A. Kossakowski, M.~M. Wolf, F. Ticozzi, M. Paternostro and M. J. W. Hall. We acknowledge financial support from Spanish MINECO grant FIS2012-33152, CAM research consortium QUITEMAD S2009-ESP-1594, UCM-BS grant GICC-910758, EU STREP project PAPETS, the EU Integrating project SQIS, the ERC Synergy grant BioQ and an Alexander von Humboldt Professorship.

\Bibliography{}


\bibitem{Doob90} J. L. Doob, \textit{Stochastic Processes} (Wiley, New York, 1990).

\bibitem{Gardiner97} C. W. Gardiner, \textit{Handbook of Stochastic Methods} (Springer, Berlin, 1997).

\bibitem{Norris97} J. R. Norris, \textit{Markov Chains} (Cambridge University Press, Cambridge, 1997).

\bibitem{Parzen99} E. Parzen, \textit{Stochastic Processes} (Society for Industrial and Applied Mathematics, Philadelphia, 1999).

\bibitem{Ethier05} S. N. Ethier and T. G. Kurtz, \textit{Markov Processes: Characterization and Convergence} (Wiley, New Jersey, 2005).

\bibitem{vanKampen07} N. G. van Kampen, \textit{Stochastic Processes in Physics and Chemistry} (Elsevier, Amsterdam, 2007).

\bibitem{Kol}
A. N. Kolmogorov, \textit{Foundations of the Theory of Probability} (Chelsea Publishing, New York, 1956).

\bibitem{footnote1} Andr\'ei Andr\'eyevich M\'arkov (Ryazan, 14 June 1856 -- Petrograd, 20 July 1922) was a Russian mathematician known because his contributions to number theory, analysis, and probability theory. For a reference about his life and work see G. P. Basharin, A. N. Langville and V. A. Naumov, Linear Algebra Appl. \textbf{386} 3 (2004).

\bibitem{Levi49} P. L\'evy, C. R. Acad. Sci. Paris \textbf{228}, 2004 (1949).

\bibitem{Feller59} W. Feller,  Ann. Math. Statist. \textbf{30}, 1252 (1959).

\bibitem{Vacch-NJP} B. Vacchini, A. Smirne, E.-M. Laine, J. Piilo and H.-P. Breuer, New J. Phys. \textbf{13}, 093004 (2011).

\bibitem{Vacch-JPB} B. Vacchini, J. Phys. B: At. Mol. Opt. Phys. \textbf{45}, 154007 (2012).

\bibitem{Vacch-Class} A. Smirne, A. Stabile and B. Vacchini, Phys. Scr. \textbf{T153}, 014057 (2013).



\bibitem{NC00} M. A. Nielsen and I. L. Chuang, \textit{Quantum Computation and Quantum Information} (Cambridge University Press, Cambridge, 2000).

\bibitem{Zeilinger} R. G\"ahler, A. G. Klein and A. Zeilinger, Phys. Rev. A \textbf{23}, 1611 (1981).

\bibitem{Weinberg} S. Weinberg, Phys. Rev. Lett. \textbf{62}, 485 (1989).

\bibitem{Wineland} J. J. Bollinger, D. J. Heinzen, Wayne M. Itano, S. L. Gilbert and D. J. Wineland, Phys. Rev. Lett. \textbf{63}, 1031 (1989).

\bibitem{Paulsen} See for example V. Paulsen, \textit{Completely Bounded Maps and Operator Algebras} (Cambridge University Press, Cambridge, 2002).

\bibitem{Choi} M.-D. Choi, Linear Algebra Appl. \textbf{10}, 285 (1975).

\bibitem{Jamiolkowski} A. Jamio{\l}kowski, Rep. Math. Phys. \textbf{3}, 275 (1972).

\bibitem{Kraus71} K. Kraus, Ann. Phys. (N.Y.) \textbf{64}, 311 (1971).

\bibitem{Daviesbook76} E. B. Davies, \textit{Quantum Theory of Open Systems} (Academic Press, London, 1976).

\bibitem{Kraus83} K. Kraus, \textit{States, Effects, and Operations Fundamental Notions of Quantum Theory} (Springer, Berlin, 1983).

\bibitem{BrPe02} H.-P. Breuer and F. Petruccione, \textit{The Theory of Open Quantum Systems} Oxford University Press, Oxford, 2002.

\bibitem{AlickiLendi87} R. Alicki and K. Lendi, \textit{Quantum Dynamical Semigroups and Applications} (Springer, Berlin, 2007).

\bibitem{Libro} A. Rivas and S.F. Huelga, \textit{Open Quantum Systems. An Introduction} (Springer, Heidelberg, 2011).

\bibitem{Wolf1} M. M. Wolf and J. I. Cirac, Commun. Math. Phys. \textbf{279}, 147 (2008).

\bibitem{Kossakowski1} A. Kossakowski, Rep. Math. Phys. \textbf{3}, 247 (1972).

\bibitem{Kossakowski2} A. Kossakowski, Bull. Acad. Pol. Sci. Math. Ser. Math. Astron. \textbf{20}, 1021 (1972).

\bibitem{GoKoSh76} V. Gorini, A. Kossakowski and E. C. G. Sudarshan, J. Math. Phys. \textbf{17}, 821 (1976).

\bibitem{Lindblad76} G. Lindblad, Commun. Math. Phys. \textbf{48}, 119 (1976).

\bibitem{TrazaNoHermitica} D. Chru\'{s}ci\'{n}ski and A. Kossakowski, J. Phys. B: At. Mol. Opt. Phys. \textbf{45}, 154002 (2012).


\bibitem{BrLaPi1} H.-P. Breuer, E.-M. Laine and J. Piilo, Phys. Rev. Lett. \textbf{103}, 210401 (2009).

\bibitem{PRAPolonia} D. Chru\'{s}ci\'{n}ski, A. Kossakowski and A. Rivas, Phys. Rev. A \textbf{83}, 052128 (2011).

\bibitem{Helstrom} C. W. Helstrom, \textit{Quantum Detection and Estimation Theory} (Academic Press, New York, 1976).

\bibitem{Hayashi} M. Hayashi, \textit{Quantum Information: An Introduction} (Springer, Berlin, 2006).

\bibitem{Ruskai} M. B. Ruskai, Rev. Math. Phys. \textbf{6}, 1147 (1994).


\bibitem{GP90} A. Galindo and P. Pascual, \textit{Quantum Mechanics} (2 vols.) (Springer, Berlin, 1990).

\bibitem{Davies} E. B. Davies, Commun. Math. Phys. \textbf{39}, 91 (1974); Math. Ann. \textbf{219}, 147 (1976).

\bibitem{GardinerZoller04} C. W. Gardiner and P. Zoller, \textit{Quantum Noise} (Springer, Berlin, 2004).

\bibitem{RevKoss} V. Gorini, A. Frigerio, M. Verri, A. Kossakowski and E. C. G. Sudarshan, Rep. Math. Phys. \textbf{13}, 149 (1978).

\bibitem{Fain02} B. Fain, \textit{Irreversibilities in Quantum Mechanics} (Kluwer Academic Publishers, New York, 2002).

\bibitem{Ziman1} V. Scarani, M. Ziman, P. \v{S}telmachovi\v{c}, N. Gisin and V. Bu\v{z}ek, Phys. Rev. Lett. \textbf{88}, 097905 (2002).

\bibitem{Ziman2} M. Ziman, P. {\v{S}}telmachovi\v{c} and V. Bu\v{z}ek, Open Syst. Inf. Dyn. \textbf{12}, 81 (2005).

\bibitem{Ziman3} M. Ziman and V. Bu\v{z}ek, Phys. Rev. A \textbf{72}, 022110 (2005).

\bibitem{Stinespring55} W. F. Stinespring, Proc. Amer. Math. Soc. \textbf{6}, 211 (1955).

\bibitem{Rivas09} A. Rivas, A. D. K. Plato, S. F. Huelga and M. B. Plenio, New J. Phys. \textbf{12}, 113032 (2010).

\bibitem{VirmaniPlenio1} M. B. Plenio and S. Virmani, Phys. Rev. Lett. \textbf{99}, 120504 (2007).

\bibitem{VirmaniPlenio2} M. B. Plenio and S. Virmani, New J. Phys. \textbf{10}, 043032 (2008).

\bibitem{Ziman4} T. Ryb\'{a}r, S. N. Filippov, M. Ziman and V. Bu\v{z}ek,  J. Phys. B: At. Mol. Opt. Phys. \textbf{45}, 154006 (2012).

\bibitem{CollisionnoMarkov} F. Ciccarello, G. M. Palma and V. Giovannetti, Phys. Rev. A \textbf{87}, 040103(R) (2013) .

\bibitem{Diosi} A. Bodor, L. Di\'osi, Z. Kallus and T. Konrad, Phys. Rev. A \textbf{87}, 052113 (2013).


\bibitem{Shibata}  F. Shibata, Y. Takahashi and N. Hashitsume, J. Stat. Phys. \textbf{17}, 171 (1977).

\bibitem{Wilkie} J. Wilkie, Phys. Rev. E \textbf{62}, 8808 (2000).

\bibitem{Barnett} S. M. Barnett and S. Stenholm, Phys. Rev. A \textbf{64}, 033808 (2001).

\bibitem{Royer} A. Royer, Phys. Lett. A \textbf{315}, 335 (2003).

\bibitem{Budini} A. A. Budini, Phys. Rev. A \textbf{69}, 042107 (2004).

\bibitem{Daffer04} S. Daffer, K. W\'odkiewicz, J. D. Cresser and J. K. McIver, Phys. Rev. A \textbf{70}, 010304 (2004).

\bibitem{Lee} J. Lee, I. Kim, D. Ahn, H. McAneney and M. S. Kim, Phys. Rev. A \textbf{70}, 024301 (2004).

\bibitem{ShabaniLidar} A. Shabani and D. A. Lidar, Phys. Rev. A \textbf{71}, 020101(R) (2005).

\bibitem{ManiscalcoSola1} S. Maniscalco, Phys. Rev. A \textbf{72}, 024103 (2005).

\bibitem{ManisPetru} S. Maniscalco and F. Petruccione, Phys. Rev. A \textbf{73}, 012111 (2006).

\bibitem{Breuer-Vacchini} H.-P. Breuer and B. Vacchini, Phys. Rev. Lett. \textbf{101}, 140402  (2008).

\bibitem{Koss-Rebo} A. Kossakowski and R. Rebolledo, Open Syst. Inf. Dyn. \textbf{16}, 259 (2009).

\bibitem{ChruKosPas} D. Chru\'{s}ci\'{n}ski, A. Kossakowski and S. Pascazio, Phys. Rev. A \textbf{81}, 032101 (2010).


\bibitem{AccFriLew} L. Accardi, A. Frigerio and J. T. Lewis, Publ. Res. I. Math. Sci. \textbf{18}, 97 (1982).

\bibitem{Lewis} J. T. Lewis, Phys. Rep. \textbf{77}, 339 (1981).

\bibitem{Fagnola} F. Fagnola, Proyecciones: J. Math. \textbf{18}, 1 (1999).

\bibitem{Umegaki} H. Umegaki, Tohoku Math. J. \textbf{6}, 177 (1954).

\bibitem{Takesaki} M. Takesaki, J. Funct. Anal. \textbf{9}, 306 (1972).

\bibitem{Petz} D. Petz, Conditional Expectation in Quantum Probability. In: L. Accardi and W. von Waldenfels (Eds.) \textit{Quantum Probability and Applications III} (pp. 251-260, Springer, Berlin, 1988).

\bibitem{dilation1} K. B. Sinha and D. Goswami, \textit{Quantum Stochastic Processes and Noncommutative Geometry} (Cambridge University Press, New York, 2007).

\bibitem{dilation2} L. Accardi and A. Mohari, Infin. Dimens. Anal. Quantum. Probab. Relat. Top. \textbf{2}, 397 (1999).

\bibitem{dilation3} L. Accardi and S. V. Kozyrev, Chaos Soliton. Fract. \textbf{12}, 2639 (2001).

\bibitem{dilation4} R. L. Hudson and K. R. Parthasarathy,  Commun. Math. Phys. \textbf{93}, 301 (1984).

\bibitem{dilation5} U. C. Ji, L. Sahu and K. B. Sinha, Commun. Stoch. Anal. \textbf{4}, 593 (2010).


\bibitem{BrLaPi2} E.-M. Laine, J. Piilo and H.-P. Breuer, Phys. Rev. A \textbf{81}, 062115 (2010).

\bibitem{BreuerReview}  H.-P. Breuer, J. Phys. B: At. Mol. Opt. Phys. \textbf{45}, 154001 (2012).

\bibitem{Br-Q-Bos5} P. Haikka, J. D. Cresser and S. Maniscalco, Phys. Rev. A \textbf{83}, 012112 (2011).

\bibitem{DarekFilip} D. Chru\'{s}ci\'{n}ski and F. A. Wudarski, Phys. Lett. A \textbf{377}, 1425 (2013).


\bibitem{Weiss08} U. Weiss, \textit{Quantum Dissipative Systems} (World Scientific, Singapore, 2008).

\bibitem{WisemanMilburn} H. M. Wiseman and G. J. Milburn, \textit{Quantum Measurement and Control} (Cambridge University Press, Cambridge, 2010).

\bibitem{Whitney08} R. S. Whitney, J. Phys. A: Math. Theor. \textbf{41} (2008), 175304.

\bibitem{DumckeandSpohn} R. D\"umcke and H. Spohn, Z. Phys. \textbf{B34} (1979), 419.

\bibitem{ZhaoChen02} Y. Zhao and G. H. Chen, Phys. Rev. E \textbf{65} (2002), 056120.



\bibitem{LindbladDefinition} G. Lindblad, Commun. Math. Phys. \textbf{65}, 281 (1979).

\bibitem{Ines} D. Alonso and I. de Vega, Phys. Rev. Lett. \textbf{94}, 200403 (2005).

\bibitem{Petruccione} N. Lo Gullo, I. Sinayskiy, Th. Busch and F. Petruccione, \textit{Non-Markovianity criteria for open system dynamics}, arXiv:1401.1126.


\bibitem{t0definition1} D. Chru\'{s}ci\'{n}ski and A. Kossakowski, Phys. Rev. Lett. \textbf{104}, 070406 (2010).

\bibitem{t0definition2} D. Chru\'{s}ci\'{n}ski and A. Kossakowski, \textit{General form of quantum evolution}, arXiv:1006.2764.

\bibitem{t0definition3} F. Benatti, R. Floreanini and S. Olivares, Phys. Lett. A \textbf{376}, 2951 (2012).

\bibitem{Diosi1} L. Di\'osi and W. T. Strunz, Phys. Lett. A \textbf{235}, 569 (1997).
\bibitem{Diosi2} L. Di\'osi, N. Gisin and W. Strunz, Phys. Rev. A \textbf{58}, 1699
(1998).
\bibitem{Diosi3} W. T. Strunz, L. Di\'osi and N. Gisin, Phys. Rev. Lett. \textbf{82}, 1801 (1999).
\bibitem{Gaspard} P. Gaspard and M. Nagaoka, J. Chem. Phys. \textbf{111}, 5676 (1999).


\bibitem{Wolf2} M. M. Wolf, J. Eisert, T. S. Cubitt and J. I. Cirac, Phys. Rev. Lett. \textbf{101}, 150402 (2008).

\bibitem{DavidWolf} M. M. Wolf and D. Perez-Garcia, Phys. Rev. Lett. \textbf{102}, 190504 (2009).

\bibitem{Wolf3} T. S. Cubitt, J. Eisert and M. M. Wolf, Commun. Math. Phys. \textbf{310}, 383 (2012).

\bibitem{Wolf4} T. S. Cubitt, J. Eisert and M. M. Wolf, Phys. Rev. Lett. \textbf{108}, 120503 (2012).

\bibitem{Helm} J. Helm, W. T. Strunz, S. Rietzler and L. E. W\"urflinger, Phys. Rev. A \textbf{83}, 042103 (2011).

\bibitem{FilippoReview} F. Caruso, V. Giovannetti, C. Lupo and S. Mancini, \textit{Quantum channels and memory effects}, arXiv:1207.5435.


\bibitem{Enta-3} R. Lo Franco, B. Bellomo, E. Andersson and G. Compagno, Phys. Rev. A \textbf{85}, 032318 (2012).

\bibitem{Pernice} A. Pernice, J. Helm and W. T. Strunz, J. Phys. B: At. Mol. Opt. Phys. \textbf{45}, 154005 (2012).

\bibitem{SaroReview} R. Lo Franco, B. Bellomo, S. Maniscalco, and G. Compagno, Int. J. Mod. Phys. B \textbf{27}, 1345053 (2013).

\bibitem{BreuerMax} S. Wi{\ss}mann, A. Karlsson, E.-M. Laine, J. Piilo and H.-P. Breuer, Phys. Rev. A \textbf{86}, 062108 (2012).

\bibitem{BreuerMax2yExperimento} B.-H. Liu, S. Wi{\ss}mann, X.-M. Hu, C. Zhang, Y.-F. Huang, C.-F. Li, G.-C. Guo, A. Karlsson, J. Piilo and H.-P. Breuer, \textit{Locality and universality of quantum memory effects},  	 arXiv:1403.4261.


\bibitem{nuestro} A. Rivas, S. F. Huelga, and M. B. Plenio, Phys. Rev. Lett. \textbf{105}, 050403 (2010).

\bibitem{normalizationChinos} S. C. Hou, X. X. Yi, S. X. Yu and C. H. Oh, Phys. Rev. A \textbf{83}, 062115 (2011).

\bibitem{vec1} R. A. Horn and C. R. Johnson, \textit{Topics in Matrix Analysis} (Cambridge University Press, Cambridge, 1991).

\bibitem{vec2} J. R. Magnus and H. Neudecker, \textit{Matrix Differential Calculus with Applications in Statistics and Econometrics} (Wiley University Press, Chichester, 2007).

\bibitem{footnoteCommMatrix} A permutation or commutation matrix (cf. \cite{vec1, vec2}) is a matrix $U_{\rm P}$ with the property that $U_{\rm P}(A\otimes B) U_{\rm P}=B\otimes A$. They are also called unitary swap matrices in quantum information literature. In order to compute them, we may use the relation $U_{\rm P}{\rm vec}(A)={\rm vec}(A^{\rm t})$. For the case of $U_{2\leftrightarrow 3}[\mathbf{E}_{(t+\epsilon,t)}\otimes\mathbb{I}]U_{2\leftrightarrow 3}$, it is tacitly assumed that $\mathbf{E}_{(t+\epsilon,t)}\otimes\mathbb{I}$ acts in a tensor product of four spaces with the same dimension $d$, $\mathcal{H}_1\otimes\mathcal{H}_2\otimes \mathcal{H}_3\otimes \mathcal{H}_4$. Then $U_{2\leftrightarrow 3}$ denotes the permutation matrix interchanging the second and third subspace, i.e. $U_{2\leftrightarrow 3}=\mathbb{I}_d\otimes U_{\rm P}\otimes\mathbb{I}_d$.

\bibitem{pseudoinverses} E. Andersson, J. D. Cresser and M. J. W. Hall, J.  Mod. Opt. \textbf{54}, 1695 (2007).

\bibitem{ChinosSingulares} S. C. Hou, X. X. Yi, S. X. Yu and C. H. Oh, Phys. Rev. A \textbf{86}, 012101 (2012).

\bibitem{DMaldonado} D. Maldonado-Mundo, P. \"{O}hberg, B. W. Lovett and E. Andersson, Phys. Rev. A \textbf{86}, 042107 (2012).

\bibitem{Br-Q-Bos7} H.-S. Zeng, N. Tang, Y.-P. Zheng and G.-Y. Wang, Phys. Rev. A \textbf{84}, 032118 (2011).

\bibitem{Br-Q-Bos17} H.-S. Zeng, N. Tang, Y.-P. Zheng and T.-T. Xu, Eur. Phys. J. D \textbf{66}, 255 (2012).

\bibitem{PinjaPreprint2} C. Addis, P. Haikka, S. McEndoo, C. Macchiavello and S. Maniscalco, Phys. Rev. A \textbf{87}, 052109 (2013).

\bibitem{FrequencyXu} Z.-Y. Xu and S.-Q. Zhu, Chinese Phys. Lett. \textbf{31}, 020301 (2014).

\bibitem{Anomalous-non-Markovian} Z.-Y. Xu, C. Liu, S. Luo and S. Zhu, \textit{Anomalous non-Markovian effect in controllable open quantum systems}, arXiv:1310.1784.

\bibitem{Addis2} C. Addis, B. Bylicka, D. Chru\'{s}ci\'{n}ski and S. Maniscalco, \textit{What we talk about when we talk about non-Markovianity}, arXiv:1402.4975.

\bibitem{CJ1} X. Hao, X. Xu and X. Wang, \textit{Indivisible quantum evolution of a driven open spin-$S$ system}, arXiv:1208.1546.

\bibitem{Br-Q-dlevel1} T. J. G. Apollaro, C. Di Franco, F. Plastina and M. Paternostro, Phys. Rev. A \textbf{83}, 032103 (2011).

\bibitem{Br-Q-dlevel2} M. \v{Z}nidari\v{c}, C. Pineda and I. Garc\'ia-Mata, Phys. Rev. Lett. \textbf{107}, 080404 (2011).

\bibitem{HRP} S. F. Huelga, A. Rivas and M. B. Plenio, Phys. Rev. Lett. \textbf{108}, 160402 (2012).

\bibitem{Mauro2ambientes} T. J. G. Apollaro, S. Lorenzo, C. Di Franco, F. Plastina and M. Paternostro, \textit{Competition between memory-keeping and memory-erasing decoherence channels}, arXiv:1311.2045.

\bibitem{KnobMarkovianity} F. Brito and T. Werlang, \textit{Knob for Markovianity}, arXiv:1404.2502.

\bibitem{LoFranco} M. Mannone, R. Lo Franco and G. Compagno, Phys. Scr. \textbf{T153}, 014047 (2013).

\bibitem{CJ2} A. R. Usha Devi, A. K. Rajagopal, Sudha and R. W. Rendell, J. Quantum Inform. Sci. \textbf{2}, 47 (2012).

\bibitem{MichaelHall} M. J. W. Hall, J. D. Cresser, L. Li and E. Andersson, Phys. Rev. A \textbf{89}, 042120 (2014).

\bibitem{SunExp} L.-P. Yang, C. Y. Cai, D. Z. Xu, W.-M. Zhang and C. P. Sun, Phys. Rev. A \textbf{87}, 012110 (2013).


\bibitem {hierarchical} D. Chru\'{s}ci\'{n}ski and S. Maniscalco, Phys. Rev. Lett. \textbf{112}, 120404 (2014).


\bibitem{DarekKoss1} D. Chru\'{s}ci\'{n}ski and A. Kossakowski, Characterizing non-Markovian dynamics. In: P. Kielanowski, S. T. Ali, A. Odzijewicz, M. Schlichenmaier and T. Voronov (Eds.) \textit{Geometric Methods in Physics} (pp. 285-293, Springer, Basel, 2013).

\bibitem{DarekKoss2} D. Chru\'{s}ci\'{n}ski and A. Kossakowski, Eur. Phys. J. D. \textbf{68}, 7 (2014).


\bibitem{Br-Q-Bos1} Z. Y. Xu, W. L. Yang and M. Feng, Phys. Rev. A \textbf{81}, 044105 (2010).

\bibitem{Br-Q-Bos2} J.-G. Li, J. Zou and B. Shao, Phys. Rev. A \textbf{81}, 062124 (2010).

\bibitem{Br-Q-Bos3} J.-G. Li, J. Zou and B. Shao, Phys. Rev. A \textbf{82}, 042318 (2010).

\bibitem{Br-Q-Bos4} Z. He, J. Zou, L. Li and B. Shao, Phys. Rev. A \textbf{83}, 012108 (2011).

\bibitem{Br-Q-Bos6} P. Haikka, S. McEndoo, G. De Chiara, G. M. Palma and S. Maniscalco, Phys. Rev. A \textbf{84}, 031602(R) (2011).

\bibitem{Br-Q-Bos8} X. Xiao, M.-F. Fang and Y.-L. Li, Phys. Scr. \textbf{83}, 015013 (2011).

\bibitem{Br-Q-Bos9} E.-M. Laine, H.-P. Breuer, J. Piilo, C.-F. Li and G.-C. Guo, Phys. Rev. Lett. \textbf{108}, 210402 (2012).

\bibitem{Br-Q-Bos10} Y.-P. Zheng, N. Tang, G.-Y. Wang and H.-S. Zeng, Chinese Phys. B \textbf{20}, 110301 (2011).

\bibitem{Br-Q-Bos11} A. G. Dijkstra and Y. Tanimura, Phil. Trans. R. Soc. A \textbf{370}, 3658 (2012).

\bibitem{Br-Q-Bos12} Y.-J. Zhang, W. Han, C.-J. Shan and Y.-J. Xia,	J. Opt. Soc. Am. B \textbf{29}, 2060 (2012).

\bibitem{Br-Q-Bos13} J. Li, G. McKeown, F. L. Semi\~ao and M. Paternostro, Phys. Rev. A \textbf{85}, 022116 (2012).

\bibitem{Br-Q-Bos14} C. Uchiyama, Phys. Rev. A \textbf{85}, 052104 (2012).

\bibitem{Br-Q-Bos15} A. Rosario, E. Massoni and F. De Zela, J. Phys. B: At. Mol. Opt. Phys. \textbf{45}, 095501 (2012).

\bibitem{Br-Q-Bos16} M. Ban, S. Kitajima and F. Shibata, Int. J. Theor. Phys. \textbf{51}, 2419 (2012).

\bibitem{Br-Q-Bos18} G. Clos and H.-P. Breuer, Phys. Rev. A \textbf{86}, 012115 (2012).

\bibitem{JQIS} N. Tang, G. Wang, Z. Fan and H. Zeng, J. Quantum Inform. Sci. \textbf{3}, 27 (2013).

\bibitem{Br-Q-Bos19} H. M\"akel\"a, M. M\"ott\"onen, Phys. Rev. A \textbf{88}, 052111 (2013).

\bibitem{PinjaPreprint1} P. Haikka, S. McEndoo and S. Maniscalco, Phys. Rev. A \textbf{87}, 012127 (2013).

\bibitem{NoMarkovZeno} A. Thilagam, J. Chem. Phys. \textbf{138}, 175102 (2013).

\bibitem{Steffen} S. Wi{\ss}mann and H.-P. Breuer, \textit{Nonlocal quantum memory effects in a correlated multimode field}, arXiv:1310.7722.

\bibitem{XuEnesimo} Z.-Y. Xu, S. Luo, W. L. Yang, C. Liu and S. Zhu, Phys. Rev. A \textbf{89}, 012307 (2014).

\bibitem{Br-Qtr-Bos} W.-j. Gu and G.-x. Li, Phys. Rev. A \textbf{85}, 014101 (2012).

\bibitem{Br-Q-dlevel3} H. Xiao-Li, S.-C. Hou, L.-C. Wang and X.-X. Yi, Cent. Eur. J. Phys. \textbf{10}, 947 (2012).

\bibitem{Br-CompoEnv1} J. Jeske, J. H. Cole, C. M\"uller, M. Marthaler and G. Sch\"on, New J. Phys. \textbf{14}, 023013 (2012).
\bibitem{Br-CompoEnv2} S. Lorenzo, F. Plastina and M. Paternostro, Phys. Rev. A \textbf{87}, 022317 (2013).
\bibitem{Br-CompoEnv3} T. Ma, Y. Chen, T. Chen, S. R. Hedemann and T. Yu, \textit{Crossover Between Non-Markovian and Markovian Dynamics Induced by a Hierarchical Environment}, arXiv:1404.5280.

\bibitem{Br-Chaotic} G. B. Lemos and F. Toscano, Phys. Rev. E \textbf{84}, 016220 (2011).
\bibitem{Br-Chaotic1} I. Garc\'{i}a-Mata, C. Pineda and D. Wisniacki, Phys. Rev. A \textbf{86}, 022114 (2012).
\bibitem{Br-Chaotic2} I. Garc\'{i}a-Mata, C. Pineda and D. Wisniacki, J. Phys. A: Math. Theor. \textbf{47}, 115301 (2014).

\bibitem{PhenoMasterEq} L. Mazzola, E.-M. Laine, H.-P. Breuer, S. Maniscalco and J. Piilo, Phys. Rev. A \textbf{81}, 062120 (2010).

\bibitem{Vacch-Piece} B. Vacchini, Phys. Rev. A \textbf{87}, 030101(R) (2013).

\bibitem{ShibataPRA} M. Ban, S. Kitajima and F. Shibata, Phys. Rev. A \textbf{84}, 042115 (2011).
\bibitem{ComparisonBLPMutual} M. Jiang and S. Luo, Phys. Rev. A \textbf{88}, 034101 (2013).
\bibitem{ManiscalcoParisClassical} C. Benedetti, M. G. A. Paris and S. Maniscalco, Phys. Rev. A \textbf{89}, 012114 (2014).

\bibitem{Chancellor} N. Chancellor, C. Petri and S. Haas, Phys. Rev. B \textbf{87}, 184302 (2013).

\bibitem{TraceCollisional} R. McCloskey and M. Paternostro, Phys. Rev. A \textbf{89}, 052120 (2014).

\bibitem{PatrickAlan} P. Rebentrost and A. Aspuru-Guzik, J. Chem. Phys. \textbf{134}, 101103 (2011).
\bibitem{KoNaOlaya} A. Kolli, A. Nazir and A. Olaya-Castro, J. Chem. Phys. \textbf{135}, 154112 (2011).

\bibitem{ExperimentoPiilo1} B.-H. Liu, L. Li, Y.-F. Huang, C.-F. Li, G.-C. Guo, E.-M. Laine, H.-P. Breuer and J. Piilo, Nature Phys. \textbf{7}, 931-934 (2011).
\bibitem{ExperimentoPiilo2} J.-S. Tang, C.-F. Li, Y.-L. Li, X.-B. Zou, G.-C. Guo, H.-P. Breuer, E.-M. Laine and J. Piilo,  EPL \textbf{97} 10002 (2012).
\bibitem{ExperimentoPiilo3} B.-H. Liu, D.-Y. Cao, Y.-F. Huang, C.-F. Li, G.-C. Guo, E.-M. Laine, H.-P. Breuer and J. Piilo, Sci. Rep. \textbf{3}, 1781 (2013).
\bibitem{RLoFranco} J.-S. Xu, K. Sun, C.-F. Li, X.-Y. Xu, G.-C. Guo, E. Andersson, R. Lo Franco and G. Compagno, Nature Commun. \textbf{4}, 2851 (2013).

\bibitem{Geometricphases} S. L. Wu, X. L. Huang, L. C. Wang and X. X. Yi, Phys. Rev. A \textbf{82}, 052111 (2010).

\bibitem{Loschmidt} P. Haikka, J. Goold, S. McEndoo, F. Plastina and S. Maniscalco, Phys. Rev. A \textbf{85}, 060101(R) (2012).
\bibitem{Loschmidt2} P. Haikka and S. Maniscalco, Open Syst. Inf. Dyn. \textbf{21}, 1440005 (2014).

\bibitem{DinamicalRecovering} H. Yang, H. Miao and Y. Chen, \textit{Reveal non-Markovianity of open quantum systems via local operations}, arXiv:1111.6079.

\bibitem{LauraMauroBassano} A. Smirne, L. Mazzola, M. Paternostro and B. Vacchini, Phys. Rev. A \textbf{87}, 052129 (2013).

\bibitem{InCorr1} E.-M. Laine, J. Piilo and H.-P. Breuer, EPL \textbf{92}, 60010 (2010).
\bibitem{InCorr2} A. Smirne, H.-P. Breuer, J. Piilo and B. Vacchini, Phys. Rev. A \textbf{82}, 062114 (2010).
\bibitem{InCorr3} C.-Feng Li, J.-S. Tang, Y.-L. Li and G.-C. Guo, Phys. Rev. A \textbf{83}, 064102 (2011).
\bibitem{InCorr4} A. Smirne, D. Brivio, S. Cialdi, B. Vacchini and M. G. A. Paris, Phys. Rev. A \textbf{84}, 032112 (2011).
\bibitem{CesarKavanLaura} C. A. Rodríguez-Rosario, K. Modi, L. Mazzola and A. Aspuru-Guzik, EPL \textbf{99}, 20010 (2012).
\bibitem{Dajka} J. Dajka and J. Luczka, Rep. Math. Phys. \textbf{70}, 193 (2012).
\bibitem{Wissmann} S. Wissmann, B. Leggio and H.-P. Breuer, \textit{Detecting initial system-environment correlations: Performance of various distance measures for quantum states}, arXiv:1306.3248.


\bibitem{NoUnital} J. Liu, X.-M. Lu and X. Wang, Phys. Rev. A \textbf{87}, 042103 (2013).


\bibitem{Musical} M. Mannone, \textit{Characterization of the degree of Musical non-Markovianity}, arXiv:1306.0229.

\bibitem{Uhlmann} A. Uhlmann, Rep. Math. Phys. \textbf{9}, 273 (1976).

\bibitem{Jozsa} R. Jozsa, J. Mod. Opt. \textbf{41}, 2315 (1994).

\bibitem{footnote2} Note that in \cite{NC00} the fidelity is defined without square $F(\rho_1,\rho_2)=\max_{|\Psi_1\rangle,|\Psi_2\rangle}|\langle\Psi_1|\Psi_2\rangle|$.

\bibitem{Molnar} L. Moln\'ar, Rep. Math. Phys. \textbf{48}, 299 (2001).

\bibitem{FidelityGauss} R. Vasile, S. Maniscalco, M.G.A. Paris, H.-P. Breuer and J. Piilo, Phys. Rev. A \textbf{84}, 052118 (2011).

\bibitem{Hubner} M. H\"ubner, Phys. Lett. A \textbf{163}, 239 (1992).

\bibitem{Doug} V. Venkataraman, A. D. K. Plato, T. Tufarelli and M. S. Kim, J. Phys. B: At. Mol. Opt. Phys. \textbf{47}, 015501 (2014).

\bibitem{GalveZambrini} R. Vasile, F. Galve and R. Zambrini, Phys. Rev. A \textbf{89}, 022109 (2014).

\bibitem{ChinosFisher1} X. Hao, W. Wu and S. Zhu, \textit{Nonunital non-Markovian dynamics induced by a spin bath and interplay of quantum Fisher information}, arXiv:1311.5952.

\bibitem{GaussianCovMat} N. Chancellor, C. Petri, L. Campos Venuti, A. F. J. Levi and S. Haas, Phys. Rev. A \textbf{89}, 052119 (2014).

\bibitem{Usha-Devi1} A. K. Rajagopal, A. R. Usha Devi and R. W. Rendell, Phys. Rev. A \textbf{82}, 042107 (2010).


\bibitem{LindbladEntropy} G. Lindblad, Comm. Math. Phys. \textbf{40}, 147 (1975).

\bibitem{UhlmannEntropy} A. Uhlmann, Comm. Math. Phys. \textbf{54}, 21 (1977).

\bibitem{RuskaiEntropy} A. Lesniewski and M. B. Ruskai, J. Math. Phys. \textbf{40}, 5702 (1999).

\bibitem{RenyiMonotonicity} M. Mosonyi and F. Hiai, IEEE T. Inform. Theory \textbf{57}, 2474 (2011).

\bibitem{TsallisMonotonicity} S. Furuichi, K. Yanagi and K. Kuriyama, J. Math. Phys. \textbf{45}, 4868 (2004).

\bibitem{Kastoryano} M. J. Kastoryano, \textit{Quantum Markov chain mixing and dissipative engineering}, PhD Thesis
(University of Copenhagen, 2011).

\bibitem{Usha-Devi2} A. R. Usha Devi, A. K. Rajagopal and Sudha, Phys. Rev. A \textbf{83}, 022109 (2011).

\bibitem{Holevo} A. S. Holevo, \textit{Probabilistic and Statistical Aspects of Quantum Theory} (North-Holland, Amsterdam, 1982).

\bibitem{BCaves} S. L. Braunstein and C. M. Caves, Phys. Rev. Lett. \textbf{72}, 3439 (1994).

\bibitem{Sun} X.-M. Lu, X. Wang and C. P. Sun, Phys. Rev. A \textbf{82}, 042103 (2010).

\bibitem{ChinosFisher2} X. Hao, N.-H. Tong and S. Zhu, J. Phys. A: Math. Theor. \textbf{46}, 355302 (2013).


\bibitem{DarekSabrina} B. Bylicka, D. Chru\'{s}ci\'{n}ski and S. Maniscalco, \textit{Non-Markovianity as a Resource for Quantum Technologies}, arXiv:1301.2585.

\bibitem{footnoteDataProcesing} This is sometimes called quantum data processing inequality, for a proof see \cite{NC00}.

\bibitem{MarkWilde} M. M. Wilde, \textit{Quantum Information Theory} (Cambridge University Press, Cambridge, 2013).

\bibitem{Addis} C. Addis, G. Brebner, P. Haikka and S. Maniscalco, Phys. Rev. A \textbf{89}, 024101 (2014).

\bibitem{Mauro} S. Lorenzo, F. Plastina and M. Paternostro, Phys. Rev. A \textbf{88}, 020102(R) (2013).

\bibitem{PlastinaFermionic} F. Plastina, A. Sindona, J. Goold, N. Lo Gullo and S. Lorenzo, Open Syst. Inf. Dyn. \textbf{20}, 1340005 (2013).

\bibitem{Tufarelli} T. Tufarelli, M. S. Kim, F. Ciccarello, \textit{Non-Markovianity of a quantum emitter in front of a mirror}, arXiv:1312.3920.


\bibitem{MartinShash} M.~B. Plenio and S. Virmani, Quant. Inf. Comp. \textbf{7}, 1 (2007).

\bibitem{MartinEnta1} V. Vedral, M. B. Plenio, M. A. Rippin and P. L. Knight, Phys. Rev. Lett. \textbf{78}, 2275 (1997).

\bibitem{MartinEnta2} V. Vedral and M. B. Plenio, Phys. Rev. A \textbf{57}, 1619 (1998).

\bibitem{HorodeckiReview} R. Horodecki, P. Horodecki, M. Horodecki and K. Horodecki, Rev. Mod. Phys. \textbf{81}, 865 (2009).

\bibitem{Enta-1} S. Lorenzo, F. Plastina and M. Paternostro, Phys. Rev. A \textbf{84}, 032124 (2011).

\bibitem{Enta-2} S.-T. Wu, Chin. J. Phys. \textbf{50}, 118 (2012).

\bibitem{Enta-4} D. Zhou and R. Joynt, Quantum Inf. Process. \textbf{11}, 571  (2012).

\bibitem{Enta-5} A. Chiuri, C. Greganti, L. Mazzola, M. Paternostro and P. Mataloni,  Sci. Rep. \textbf{2}, 968 (2012).

\bibitem{Vedral} L. Henderson and V. Vedral, J. Phys. A: Math. Gen. \textbf{34}, 6899 (2001).

\bibitem{AccessibleEntanglement} F. F. Fanchini, G. Karpat, B. \c{C}akmak, L. K. Castelano, G. H. Aguilar, O. Jim\'enez Far\'ias, S. P. Walborn, P. H. Souto Ribeiro and M. C. de Oliveira, \textit{Non-Markovianity through accessible information}, arXiv:1402.5395.

\bibitem{Enta-6} S. Cialdi, D. Brivio, E. Tesio and M. G. A. Paris, Phys. Rev. A \textbf{83}, 042308 (2011).


\bibitem{Mutual1} S. Luo, S. Fu and H. Song, Phys. Rev. A \textbf{86}, 044101 (2012).

\bibitem{Mutual2}  F. F. Fanchini, G. Karpat, L. K. Castelano and D. Z. Rossatto, Phys. Rev. A \textbf{88}, 012105 (2013).

\bibitem{LoFrancoPaladino}  A. D'Arrigo, G. Benenti, R. Lo Franco, G. Falci and E. Paladino, \textit{Hidden entanglement, system-environment information flow and non-Markovianity}, arXiv:1402.1948.

\bibitem{ZurekAnnalen} W. H. Zurek, Ann. Phys. (Berlin) \textbf{9}, 855 (2000).

\bibitem{Zurek} H. Ollivier and W. H. Zurek, Phys. Rev. Lett. \textbf{88}, 017901 (2001).

\bibitem{ModiReview} K. Modi, A. Brodutch, H. Cable, T. Paterek and V. Vedral, Rev. Mod. Phys. \textbf{84}, 1655 (2012) .

\bibitem{DiscordLocal1} A. Streltsov, H. Kampermann and D. Bru\ss, Phys. Rev. Lett. \textbf{107}, 170502 (2011).

\bibitem{DiscordLocal2} F. Ciccarello and V. Giovannetti, Phys. Rev. A \textbf{85}, 010102(R) (2012).

\bibitem{Girolami} D. Girolami and G. Adesso, Phys. Rev. Lett. \textbf{108}, 150403 (2012).

\bibitem{Alipour} S. Alipour, A. Mani and A. T. Rezakhani, Phys. Rev. A \textbf{85}, 052108 (2012).

\bibitem{HaikkaYManiscalco} P. Haikka, T. H. Johnson and S. Maniscalco, Phys. Rev. A \textbf{87}, 010103(R) (2013).

\bibitem{ZhangDiscord} Y.-J. Zhang, W. Han, C.-J. Shan and Y.-J. Xia, \textit{Quantum correlations dynamics under different non-markovian environmental models}, arXiv:1111.2423.

\bibitem{Cesar1} C. A. Rodr\'iguez-Rosario, K. Modi, A. Kuah, A. Shaji and E. C. G. Sudarshan, J. Phys. A: Math. Theor \textbf{41}, 205301 (2008).

\bibitem{SL} A. Shabani and D. A. Lidar, Phys. Rev. Lett. \textbf{102}, 100402 (2009).

\bibitem{Cesar2} C. A. Rodr\'iguez-Rosario, K. Modi and A. Aspuru-Guzik, Phys. Rev. A \textbf{81}, 012313 (2010)

\bibitem{NuestroCPDiscord} A. Brodutch, A. Datta, K. Modi, A. Rivas and C. A. Rodr\'iguez-Rosario, Phys. Rev. A \textbf{87}, 042301 (2013).

\bibitem{Sabapathy} K. K. Sabapathy, J. S. Ivan, S. Ghosh and R. Simon, \textit{Quantum discord plays no distinguished role in characterization of complete positivity: Robustness of the traditional scheme}, arXiv:1304.4857.

\bibitem{Buscemi} F. Buscemi, \textit{Complete positivity in the presence of initial correlations: necessary and sufficient conditions given in terms of the quantum data-processing inequality}, arXiv:1307.0363.

\bibitem{DSL} J. M. Dominy, A. Shabani and D. A. Lidar, \textit{A general framework for complete positivity}, arXiv:1312.0908.

\bibitem{MartinNegativity} M. B. Plenio, Phys. Rev. Lett. \textbf{95}, 090503 (2005).

\bibitem{Chen} Q. Chen, C. Zhang, S. Yu, X. X. Yi and C. H. Oh, Phys. Rev. A \textbf{84}, 042313 (2011).


\bibitem{Hall} M. J. W. Hall, J. Phys. A: Math. Theor. \textbf{41}, 269801 (2008).

\bibitem{Commutative} D. Chru\'{s}ci\'{n}ski, A. Kossakowski, P. Aniello, G. Marmo and F. Ventriglia, Open Syst. Inf. Dyn. \textbf{17} 255 (2010).

\bibitem{Audenaert} K. M. R. Audenaert and M. B. Plenio, New J. Phys. \textbf{8}, 266 (2006).

\bibitem{Brandao1} F. G. S. L. Brand\~{a}o and M. B. Plenio, Nat. Phys. \textbf{4}, 873 (2008).

\bibitem{Gour1} G. Gour and R. W. Spekkens, New J. Phys. \textbf{10}, 033023 (2008).

\bibitem{Brandao2} F. G. S. L. Brand\~{a}o, M. Horodecki, J. Oppenheim, J. M. Renes and R. W. Spekkens, Phys. Rev. Lett. \textbf{111}, 250404 (2013).

\bibitem{Emerson} V. Veitch, S. A. Hamed Mousavian, D. Gottesman and J. Emerson, New J. Phys. \textbf{16}, 013009 (2014).

\bibitem{Gour2} G. Gour, M. P. M\"uller, V. Narasimhachar, R. W. Spekkens, N. Yunger Halpern, \textit{The resource theory of informational nonequilibrium in thermodynamics}, arXiv:1309.6586.

\bibitem{Tilmann} T. Baumgratz, M. Cramer and M. B. Plenio, \textit{Quantifying Coherence}, arXiv:1311.0275.

\bibitem{deVicente} J. I. de Vicente, \textit{On nonlocality as a resource theory and nonlocality measures}, arXiv:1401.6941.

\bibitem{Sindona} A. Sindona, J. Goold, N. Lo Gullo, S. Lorenzo and F. Plastina, Phys. Rev. Lett. \textbf{111}, 165303 (2013).

\bibitem{Borrelli} M. Borrelli, P. Haikka, G. De Chiara and S. Maniscalco, Phys. Rev. A \textbf{88}, 010101(R) (2013).
\bibitem{Borrelli2} M. Borrelli and S. Maniscalco, \textit{Effect of temperature on non-Markovian dynamics in Coulomb crystals}, arXiv:1404.0149.

\bibitem{AlexSusanaMartin} A. W. Chin, S. F. Huelga and M. B. Plenio, Phys. Rev. Lett. \textbf{109}, 233601 (2012).

\bibitem{ElsiTeleport} E.-M. Laine, H.-P. Breuer and J. Piilo, Sci. Rep. \textbf{4}, 4620 (2014).

\bibitem{PiiloEntaDistri} G.-Y. Xiang, Z.-B. Hou, C.-F. Li, G.-C. Guo, H.-P. Breuer, E.-M. Laine and J. Piilo, \textit{Nonlocal memory assisted entanglement distribution in optical fibers}, arXiv:1401.5091.

\endbib

\end{document}